\begin{document}
\title{Approximating Multicut and the Demand Graph\footnote{Department of Computer Science,
University of Illinois, Urbana, IL 61801. {\tt \{chekuri,vmadan2\}@illinois.edu}. Work on this paper partly supported by NSF grant CCF-1319376.} }
\author{Chandra Chekuri \and Vivek Madan}
\date{\today}

\pagenumbering{gobble}

\maketitle

\begin{abstract}
In the minimum Multicut problem, the input is an edge-weighted supply
graph $G=(V,E)$ and a simple demand graph $H=(V,F)$. Either $G$ and
$H$ are directed (\DMulC) or both are undirected (\UMulC). The goal is
to remove a minimum weight set of edges $E' \subseteq E$ such that for
any edge $(s,t) \in F$, there is no path from $s$ to $t$ in the graph
$G-E'$. \UMulC admits an $O(\log k)$-approximation where $k$ is the
vertex cover size of $H$ while the best known approximation for \DMulC
is $\min\{k, \tilde{O}(n^{11/23})\}$. These approximations are
obtained by proving corresponding results on the multicommodity
flow-cut gap.  In contrast to these results some special cases of
Multicut, such as the well-studied Multiway Cut problem, admit a
constant factor approximation in both undirected and directed graphs.
In this paper, motivated by both concrete instances from applications
and abstract considerations, we consider the role that the structure
of the demand graph $H$ plays in determining the approximability of
Multicut. We obtain several new positive and negative results.

In undirected graphs our main result is a $2$-approximation in
$n^{O(t)}$ time when the demand graph $H$ excludes an induced matching of
size $t$. This gives a constant factor approximation for a specific
demand graph that motivated this work, and is based on a reduction
to uniform metric labeling, and not via the flow-cut gap.

In contrast to the positive result for undirected graphs, we prove
that in directed graphs such approximation algorithms can not exist.
We prove that, assuming the Unique Games Conjecture (UGC), that for a
large class of fixed demand graphs \DMulC cannot be approximated to a
factor better than worst-case flow-cut gap. As a consequence we prove
that for any fixed $k$, assuming UGC, \DMulC with $k$ demand pairs is
hard to approximate to within a factor better than $k$. On the
positive side, we prove an approximation of $k$ when the demand graph
excludes certain graphs as an induced subgraph. This positive 
result generalizes the $2$ approximation for directed Multiway Cut
to a much larger class of demand graphs.

\end{abstract}

\newpage

\pagenumbering{arabic}

\section{Introduction}
The minimum \MulC problem is a generalization of the classical
$s$-$t$ cut problem to multiple pairs. The input to the \MulC
problem is an edge-weighted graph $G=(V,E)$ and $k$ source-sink pairs
$(s_1,t_1),(s_2,t_2),\ldots,(s_k,t_k)$. The goal is to find a minimum
weight subset of edges $E' \subseteq E$ such that
all the given pairs are disconnected in $G-E'$; that is, for 
$1 \le i \le k$, there is no path from $s_i$ to $t_i$ in $G-E'$.
In this paper we consider an equivalent formulation that exposes,
more directly, the structure that the source-sink pairs may have.

The input now consists of an edge-weighted {\em supply} graph
$G=(V,E)$ and a {\em demand} graph $H=(V,F)$. The goal is to find a
minimum weight set of edges $E' \subseteq E$ such that for each edge
$f=(s,t) \in F$, there is no path from $s$ to $t$ in $G-E'$. In other
words the source-sink pairs are encoded in the form of the demand
graph $H$.  Either both $G$ and $H$ are directed in which case we
refer to the problem as \DMulC (directed \MulC) or both are
undirected in which case we refer to the problem as \UMulC (undirected
\MulC).

\MulC in both directed and undirected graphs has been extensively
studied for a variety of reasons. It is a natural cut problem, has
several applications, and strong connections to several other
well-known problems such as sparsest cut and multicommodity flows.
\UMulC and \DMulC are NP-Hard even in very restrictive settings. For
instance \UMulC is NP-Hard even when $H$ has 3 edges; it generalizes
\VC even when $G$ is a tree.  \DMulC is NP-Hard and APX-Hard even in
the special case when $H$ is a cycle of length $2$ which is better
understood as removing a minimum weight set of edges to disconnect $s$
from $t$ {\em and} $t$ from $s$ in a directed graph. Consequently
there has been substantial effort towards developing approximation
algorithm for these problems as well as understanding special
cases. We briefly summarize some of the known results. We use $k$ to
denote the number of edges in the demand graph $H$.  For \UMulC there
is an $O(\log k)$-approximation \cite{GVY96} which improves to an
$O(r)$-approximation if the supply graph $G$ excludes $K_r$ as a minor
(in particular this yields a constant factor approximation in planar
graphs) \cite{AbrahamGGOK14,FT03,KPR93}. In terms of
inapproximability, \UMulC is at least as hard as \VC even in trees and
hence APX-Hard.  Under the Unique Game Conjecture (UGC) it is known to
be super-constant hard \cite{ChawlaKKRS06}.  \DMulC is much
harder. The best known approximation is
$\min\{k,\tilde{O}(n^{11/23})\}$; here $n = |V|$. Note that a
$k$-approximation is trivial. 
\begin{wrapfigure}{r}{.38\textwidth}
  \vspace{-1em}
\begin{boxedminipage}{0.38\linewidth}
\vspace{-0.2in}
\begin{align*}
\min \quad & \sum_{e \in E} w_e x_e & \\
   & \sum_{e \in p} x_e & \ge 1 & \qquad p \in \cP_{st}, st \in F \\
&  x_e & \ge 0 & \qquad e \in E
\end{align*}
  \label{fig:mulc_lp}
  \vspace{-0.2in}
\end{boxedminipage}
  \vspace{-1em}
\end{wrapfigure}
Moreover, \DMulC is hard to approximate to within a factor of
$\Omega(2^{\log^{1-\epsilon} n})$ assuming $NP \neq
ZPP$~\cite{ChuzhoyK09}; evidence is also presented in
\cite{ChuzhoyK09} that it could be hard to approximate to within a
polynomial factor.  We note that all the preceding positive results
for \MulC are based on bounding the integrality gap of a natural LP
relaxation shown in adjacent figure. This is the standard cut
formulation with an variable for each edge and an exponential set of
constraints which admit a polynomial-time separation oracle; one can
also write a compact polynomial-time formulation. The dual is a
maximum multicommodity flow LP. We henceforth refer to the integrality
gap of this LP as the {\em flow-cut gap}.  Most multicut approximation
algorithms are based on proving bounds on the flow-cut gap.

\medskip
\noindent
{\bf The role of the demand graph:} Our preceding discussion has focused
on the approximability of \MulC when $H$ is arbitrary with some improved
results when $G$ is restricted. However, we are interested here in the
setting where $H$ is restricted and $G$ is arbitrary. Before we describe
a concrete application that motivated us, we mention the well-known
\MCut problem in undirected graphs (\UMCut) and directed graphs (\DMCut). Here
$H$ is the complete graph on a set of $k$ terminals. This problem has
been extensively studied over the years and a constant factor approximation
is known. \UMCut admits a $1.29$ approximation \cite{SharmaV14} and
\DMCut admits a $2$-approximation \cite{NaorZ97,ChekuriM16}.
In a recent work, motivated by connections to the problem of
understanding the information capacity of networks with delay constraints \cite{ChekuriKKV15}, the following special
case of \MulC was introduced. It was referred to as the 
{\sc Triangle-Cast} problem. 
\begin{wrapfigure}{r}{0.25\linewidth}
\vspace{-1em}
\begin{minipage}{1\linewidth}
  \begin{center}
  \includegraphics[width=1.5in]{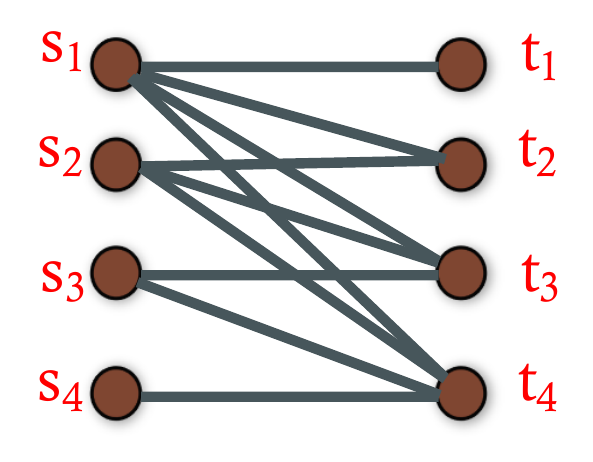}    
  \end{center}
%\caption{Triangle Cast Pattern: $(s_i,t_j)$ is an edge iff $i\leq j$.}
%\label{fig:tri_cast_pattern} 
\end{minipage} 
\vspace{-2em}
\end{wrapfigure}
The demand graph $H$ is a bipartite graph with $k$ terminals
$s_1,\ldots,s_k$ on one side and $k$ terminals $t_1,\ldots,t_k$ on the
other side: $(s_i,t_j)$ is an edge in $H$ iff $i \le j$. See figure
for an example with $k=4$. It was shown in \cite{ChekuriKKV15} that
the flow-cut gap for this special case of \MulC in directed graphs
gives an upper bound on the capacity advantage of network
coding. Further, it was established that the flow-cut gap is $O(\log
k)$ even in directed graphs which is in contrast to the general
setting where the gap can be as large as $k$. The following natural
questions arose from this application.

\begin{question}
  What is the approximability of \TriCast in directed and undirected graphs?
  Is the flow-cut gap $O(1)$ in undirected graphs and even in directed graphs?
\end{question}

Answering the preceding question has not been easy. In fact we do not
yet know whether the flow-cut gap is $O(1)$ even in undirected graphs.
However, in this paper we give two answers. First, we give a
$2$-approximation for \TriCast in undirected graphs via a different LP
relaxation.  Second, we show that under UGC, for any fixed constant
$k$, the hardness of approximation for \TriCast in directed graphs
co-incides with with the flow-cut gap.  At this moment we only know
that the flow-cut gap is $O(\log k)$ and at least some fixed constant
$c > 1$. We mention that \TriCast in directed graphs is approximation
equivalent to another problem that has recently been considered in
with a different motivation called \LinCut \cite{ErbacherJTT14}; here
the demand graph $H$ consists of $k$ terminals $s_1,\ldots,s_k$ and
there is a directed edge $(s_i,s_j)$ for all $i < j$.

Our results for \TriCast are special cases of more general results
that examine the role that the demand graph $H$ plays in the
approximability of \MulC. What structural aspects of $H$ allow
for better bounds than the worst-case results?  For instance, do
the constant factor approximation algorithms for \MCut be understood
in a more general setting?
\iffalse
We were also inspired by the following
related question. Can we understand the fact that \MCut has
a constant factor approximation in a more general setting? 
In a recent paper on a simpler $2$-approximation for
\DMCut \cite{ChekuriM16} we raised the question of whether
the case of two terminals is hard to approximate to within a
factor of $2$. This was motivated by the fact that the flow-cut
gap even for this case is $2$. In this paper we answer this
question in the positive under UGC. 
\fi
Some previous work has also examined the role that demand graph plays in
\MulC. Two examples are the original paper of Garg, Vazirani
and Yannakakis \cite{GVY96} who showed that one can obtain an
$O(\log h)$-approximation for \UMulC 
where $h$ is the vertex cover size of the demand graph. This was generalized
by Steurer and Vishnoi \cite{SteurerV09} who showed that
$h$ can be chosen to be $\min_S \max_T |S \cap T|$ where 
$S$ is a vertex cover in $H$ and $T$ is an independent set in $H$.
Note that both these results are based on the flow-cut gap and
yield only an $O(\log k)$ upper bound for \TriCast.

\iffalse
\begin{figure*}[t]
  \centering
\begin{minipage}{0.45\linewidth}
\vspace{-0.2in}
\begin{align*}
\min \quad & \sum_{e \in E} w_e x_e & \\
   & \sum_{e \in p} x_e & \ge 1 & \qquad p \in \cP_{st}, st \in F \\
&  x_e & \ge 0 & \qquad e \in E
\end{align*}
  \caption{LP Relaxation for \UMulC and \DMulC. $\cP_{st}$ is the
  set of all $s$-$t$ paths in $G$. The dual LP is 
a maximum multicommodity flow LP.}
  \label{fig:mulc_lp}
\end{minipage}
\begin{minipage}{0.45\linewidth}
  \begin{center}
  \includegraphics[width=1.5in]{triangle-cast}    
  \end{center}
\caption{Triangle Cast Pattern: $(s_i,t_j)$ is an edge iff $i\leq j$.}
\label{fig:tri_cast_pattern} 
\end{minipage}
\end{figure*}
\fi

We now describe our results for both \UMulC and \DMulC which give
yield as corollaries the result that we already mentioned and several
other ones.
\subsection{Our Results}
\label{subsec:results}

\iffalse
\vspace{2mm}
{\em There exists a number $t$ such that the demand graph $H$ does not contain an induced subgraph isomorphic to a matching of size $t$}
\vspace{2mm}
\fi

We first discuss our result for \UMulC. We obtain a $2$-approximation for
a class of demand graphs. This class is inspired by the observation
that the \TriCast demand graph does not contain a matching with two edges
as an {\em induced subgraph}\footnote{$G'$ is an induced subgraph 
of $G=(V,E)$ if $G'=G[V']$ for some $V' \subset V$.}. More generally, a graph is
said to be $tK_2$-free for an integer $t > 1$ if it does not contain
a matching of size $t$ as an induced subgraph. 

\begin{restatable}{thm}{tk}\label{thm:tk2_approximation}
  There is a $2$-approximation algorithm with running time
  $\text{poly}(n, k^{O(t)})$ on instances of \UMulC with supply graph $G$ and 
$tK_2$-free demand graph $H$. Here $n = V(G),k = V(H)$.
\end{restatable}

Since \TriCast instances are $2K_2$-free we obtain the following corollary.
\begin{corollary}
\TriCast admits a polynomial-time $2$-approximation in undirected graphs.
\end{corollary}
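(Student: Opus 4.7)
The plan is to reduce \UMulC with $tK_2$-free demand graph to a polynomially-sized family of uniform metric labeling instances on $G$, each solved within a factor of $2$ by the Kleinberg--Tardos LP-rounding algorithm, and return the cheapest valid multicut found over the family. The starting observation is that a set $E' \subseteq E$ is a valid multicut if and only if the partition of $V$ induced by the connected components of $G - E'$ restricts on $V(H)$ to a proper coloring of $H$; conversely, any labeling $\ell\colon V \to L$ whose restriction to $V(H)$ is a proper coloring of $H$ gives a valid multicut $\{uv \in E : \ell(u) \ne \ell(v)\}$. Hence \UMulC is equivalent to choosing a proper coloring $\pi$ of $V(H)$ and then finding a cheapest extension of $\pi$ to all of $V$.

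For a fixed coloring $\pi$ of $V(H)$, the extension problem is literally an instance of uniform metric labeling on $G$: the labels are the colors of $\pi$; the assignment cost at $v \in V(H)$ is $0$ for $\pi(v)$ and $+\infty$ for every other label; vertices in $V \setminus V(H)$ carry no assignment constraints; and the separation cost of $uv \in E$ is $w_{uv}$. Kleinberg--Tardos rounding returns a labeling of cost at most twice the optimum extension of $\pi$, so applied to the coloring $\pi^*$ induced by an optimal multicut $E^*$ it yields a multicut of cost at most $2\cdot \mathrm{OPT}$.

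The remaining step is to enumerate a family $\mathcal{C}$ of at most $k^{O(t)}$ candidate proper colorings of $V(H)$ guaranteed to contain, up to relabeling of colors, the coloring $\pi^*$ of some optimal solution. Here the $tK_2$-free hypothesis enters: a maximum induced matching $M$ of $H$ has at most $t-1$ edges and hence at most $2(t-1)$ vertices, and by maximality, for every demand edge $(u,v)$ with $u,v \notin V(M)$ at least one edge of $H$ must join $\{u,v\}$ to $V(M)$, which localizes all coloring constraints around the constant-size set $V(M)$. The intended enumeration guesses (i) the matching $M$ itself, (ii) an assignment of the vertices of $V(M)$ to distinct colors, and (iii) for each vertex outside $V(M)$ a small amount of local data (for instance, which color class among a bounded list determined by $V(M)$ it belongs to). The structural content of the argument is that for $tK_2$-free $H$ every proper coloring is determined by such bounded data, yielding $|\mathcal{C}| = k^{O(t)}$.

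The principal technical obstacle is to pin down this structural claim precisely and to verify that the enumerated family actually contains $\pi^*$ up to relabeling; intuitively, because $V(M)$ is a hitting set for induced $2K_2$'s with respect to $M$, each remaining vertex of $V(H)$ is either adjacent in $H$ to a vertex of $V(M)$ (excluding those color classes) or can be merged into one of a bounded number of ``free'' classes without violating properness. Once this is done, the algorithm is immediate: run Kleinberg--Tardos on each instance in $\mathcal{C}$, discard labelings whose restriction to $V(H)$ is not a proper coloring of $H$ (so is not a valid multicut), and output the cheapest valid labeling. Correctness of the factor-$2$ approximation follows from the Kleinberg--Tardos guarantee applied to the guess matching $\pi^*$, and the claimed running time $\mathrm{poly}(n) \cdot k^{O(t)}$ follows from $|\mathcal{C}| = k^{O(t)}$ together with polynomial-time solvability of each Kleinberg--Tardos instance.
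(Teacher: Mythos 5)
Your high-level observation --- that a multicut corresponds to a proper coloring of the terminal set and that extending a fixed coloring to all of $V$ is a uniform metric labeling instance --- is correct and is the right starting point. The problem lies in the middle step: you propose to enumerate a family $\mathcal{C}$ of $k^{O(t)}$ candidate colorings of $V(H)$ and run Kleinberg--Tardos once per coloring. As you yourself flag, there is no argument showing $|\mathcal{C}|=k^{O(t)}$, and in fact no such bound holds. Even for $2K_2$-free demand graphs the number of proper colorings up to relabeling can be exponential: take $H$ to be a star with center $c$ and leaves $v_1,\dots,v_{k-1}$, which is $2K_2$-free. Any partition of the leaves into color classes, with $c$ in its own class, is a proper coloring, so there are Bell-number-many essentially different colorings and no constant-size set of vertices (such as your $V(M)$, which here is a single edge) determines the rest. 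Your step (iii), giving each vertex outside $V(M)$ a bounded menu of choices, yields on the order of $c^{k}$ colorings, not $k^{O(t)}$, so the enumeration is exponential.

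The paper's route avoids enumeration over colorings entirely. Rather than guessing the coloring and then running UML, it builds a \emph{single} UML instance in which the label set is the set of all \emph{maximal independent sets} of $H$; a terminal $v$ has assignment cost $0$ for any label $I_i$ with $v\in I_i$ and $\infty$ otherwise, and nonterminals have zero assignment cost everywhere. A finite-cost labeling then induces a valid multicut (any two adjacent terminals must receive different labels, since no maximal independent set contains both), and conversely any multicut $E'$ gives a finite-cost labeling by assigning each connected component to a maximal independent set covering its terminals. The crucial point is that the number of labels is polynomial: by Balas and Yu, a $tK_2$-free graph on $k$ vertices has at most $k^{O(t)}$ maximal independent sets, and they can all be listed in $k^{O(t)}$ time. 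The UML algorithm then implicitly optimizes over all feasible colorings at once, and a single Kleinberg--Tardos rounding gives the factor-$2$ guarantee. In short, the idea you need is to move the ``which coloring?'' choice \emph{inside} the labeling instance, which is exactly what using maximal independent sets as the label set accomplishes; trying to pull it outside as an enumeration is where your argument breaks.
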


We note that the preceding approximation is not based on the
natural LP relaxation. It relies on a different relaxation via
a reduction to uniform metric labeling \cite{KleinbergT02}.

\medskip We now turn our attention to \DMulC. As we mentioned the best
approximation in general graphs is $\min\{k, \tilde{O}(n^{11/23})\}$.
It is also known that the flow-cut gap is lower bounded by $k$
\cite{SaksSZ04} for $k = O(\log n)$ and also by
$\tilde{\Omega}(n^{1/7})$ when $k$ can be polynomial in $n$
\cite{ChuzhoyK09}. It is natural to ask about the hardness of the
problem when $k$ is a fixed constant. In particular what is the
relationship between the flow-cut gap and hardness?  To formalize
this, for a {\em fixed} demand graph $H$ we define \DMulCH as the
special case of \DMulC where $G$ is arbitrary but the demand graph is
constrained to be $H$. To be formal we need to define $H$ as a
``pattern'' since even for a fixed supply graph $G$ we need to specify
the nodes of $G$ to which the nodes of $H$ are mapped. However we
avoid further notation since it is relatively easy to understand what
\DMulCH means.  We define $\alpha_H$ to be the worst-case flow-cut
gap over all instances with demand graph $H$. We conjecture the
following general result.

\begin{conjecture}
  For {\em any fixed} demand graph $H$ and any fixed $\eps > 0$, unless $P=NP$,
  there is no polynomial-time $(\alpha_H-\eps)$- approximation for \DMulCH.
\end{conjecture}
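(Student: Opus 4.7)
The plan is to prove the conjecture by combining a near-extremal integrality gap gadget with a PCP-style hardness reduction, while carefully maintaining NP-hardness rather than merely UGC-hardness. The overall outline is standard in inapproximability: isolate an instance $(G^*,H)$ whose ratio of integral to fractional optimum approaches $\alpha_H$, and embed many copies of this gadget into a reduction from an NP-hard CSP in a way that preserves the gap on a constant fraction of copies.

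First, I would extract a near-extremal flow-cut-gap instance. Since $H$ has constant size, $\alpha_H$ is the supremum of a continuous ratio over supply graphs, so by scaling and symmetrization one expects that for every $\epsilon' > 0$ there is $(G^*,H)$ with a fractional solution of weight $W^*$ that no integral multicut beats below $(\alpha_H-\epsilon'/2)\,W^*$. Ideally $G^*$ can be taken to have enough internal symmetry (for example, terminals playing interchangeable roles and automorphisms acting transitively on non-terminal edges) to serve as a reusable reduction gadget.

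Second, starting from a suitable NP-hard source problem---Label Cover via the PCP theorem, or a near-tight NP-hard CSP such as Max-3LIN via Hastad---I would place a copy of $G^*$ for each variable and each constraint, identifying terminal vertices across copies according to a consistency scheme analogous to long-code testing but realized combinatorially. In the completeness (YES) case, a globally consistent assignment induces a multicut whose cost is close to $W^*$ per copy. In the soundness (NO) case, inconsistencies propagate locally and force nearly every copy to be cut integrally, at cost close to $(\alpha_H-\epsilon)\,W^*$ per copy, yielding the desired ratio.

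The main obstacle, and the reason this is stated as a conjecture, is the soundness analysis: one must show that no multicut in the composed graph can do much better than $\alpha_H$ times the fractional optimum, which requires a rigidity/dictatorship-type property of $G^*$. Standard Fourier-analytic dictatorship tests deliver UGC-hardness but not NP-hardness, so a proof of the stated conjecture likely demands either a new combinatorial PCP whose soundness matches $\alpha_H$ for every admissible $H$, or a direct reduction tailored to the particular structure of the extremal gap instance $G^*$. A natural first milestone is to establish the conjecture for simple patterns (for instance $H$ the directed $2$-cycle, where $\alpha_H = 2$) through a hand-crafted reduction from Vertex Cover or Max-Cut that exploits the explicit combinatorics of the known extremal gap family, and then try to extend the construction as $H$ grows.
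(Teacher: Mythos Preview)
The statement you are addressing is a \emph{conjecture}; the paper does not prove it. Immediately after stating it, the authors write that they ``prove weaker forms of the conjecture,'' namely Theorems~\ref{thm:hardness} and~\ref{thm:hardness-nonbipartite}, both of which assume UGC rather than only $P\neq NP$. So there is no ``paper's own proof'' to compare against, and your proposal should be read as a research plan toward an open problem, not as a proof attempt for a result the paper establishes.

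That said, it is worth contrasting your outline with the route the paper takes to its UGC-based weakenings. The paper does \emph{not} build a direct PCP reduction with a gap gadget $G^*$ embedded per constraint. Instead it (i) shows that for \DMulC the natural \DistLP and a much richer \LabelLP have identical optima, (ii) constructs from $H$ a constraint language $\beta_H$ so that \DMulCH and Min-$\beta_H$-CSP are approximation-equivalent and have matching LP gaps (\BasicLP versus \LabelLP), and (iii) invokes the Ene--Vondr\'ak--Wu theorem that, under UGC, Min-CSP hardness equals the \BasicLP gap. The gadget and soundness analysis you describe are thus entirely outsourced to the UGC machinery inside \cite{EneVW13}; the paper's own contribution is the LP equivalence and the CSP encoding.

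Your plan is a reasonable sketch of what an NP-hardness proof would have to look like, and you correctly identify the obstruction: getting soundness to match $\alpha_H$ without UGC requires a dictatorship/rigidity property of the extremal gap instance that current PCP technology does not supply for arbitrary $H$. Two specific gaps in the outline are worth flagging. First, the assumption that a near-extremal $G^*$ can be taken ``with enough internal symmetry'' to act as a reusable gadget is unjustified; nothing in the definition of $\alpha_H$ guarantees symmetric extremizers, and asymmetric gadgets typically wreck the averaging arguments needed in soundness. Second, even for your proposed first milestone (the directed $2$-cycle, $\alpha_H=2$), no NP-hardness of $2-\epsilon$ is known---the paper obtains exactly this as a UGC corollary---so ``hand-crafted reduction from Vertex Cover or Max-Cut'' is optimistic. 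In short, your proposal is not wrong as a direction, but it is a statement of the problem rather than a solution, which is consistent with the paper leaving it as a conjecture.
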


In this paper we prove weaker forms of the conjecture,
captured in the following two theorems:

\begin{restatable}{thm}{hardness}\label{thm:hardness}
  Assuming UGC, for any fixed directed {\em bipartite} graph $H$, and
  for any fixed $\eps > 0$ there is no polynomial-time $(\alpha_H-\eps)$
  approximation for \DMulCH.
\end{restatable}

\iffalse
\begin{theorem}
  Assuming UGC, for any fixed directed {\em bipartite} graph $H$, and
  for any fixed $\eps > 0$ there is no polynomial-time $(\alpha_H-\eps)$
  approximation for \DMulCH.
\end{theorem}
\fi

If $H$ is not bipartite we obtain a slightly weaker theorem.

\begin{theorem}
  \label{thm:hardness-nonbipartite}
  Assuming UGC, for any fixed directed graph $H$ on $k$ vertices and
  for any fixed $\eps > 0$, there is no polynomial-time
  $\frac{\alpha_H}{2\lceil \log k \rceil}-\eps$ approximation for  
  \DMulCH.
\end{theorem}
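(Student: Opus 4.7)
The plan is to reduce to the bipartite case handled by Theorem~\ref{thm:hardness} by decomposing $H$ into $2\lceil\log k\rceil$ bipartite subgraphs and locating one whose flow-cut gap is within a factor of $2\lceil\log k\rceil$ of $\alpha_H$. I would first assign each vertex of $H$ a distinct $\lceil\log k\rceil$-bit binary label. For each bit coordinate $b$ let $A_b$ and $B_b$ be the two sides according to bit $b$, and define $H_b^+$ (resp.\ $H_b^-$) to be the subgraph of $H$ containing exactly those edges directed from $A_b$ to $B_b$ (resp.\ from $B_b$ to $A_b$). Each $H_b^\sigma$ is a directed bipartite graph on $V(H)$, and since any two distinct vertices differ in some coordinate, every edge of $H$ lies in some $H_b^\sigma$.

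Next, for any \DMulC-instance $(G,\phi)$ with demand graph $H$, let $\mathrm{OPT}_J$ and $\mathrm{LP}_J$ denote the integral and fractional optima when the demand is restricted to $J\subseteq E(H)$. Dropping demand constraints only lowers the LP, so $\mathrm{LP}_{H_b^\sigma}\le\mathrm{LP}_H$; and the union of integral optima for the pieces is feasible for $H$, so $\mathrm{OPT}_H\le\sum_{b,\sigma}\mathrm{OPT}_{H_b^\sigma}$. Combining these gives
$$\frac{\mathrm{OPT}_H}{\mathrm{LP}_H}\;\le\;\sum_{b,\sigma}\frac{\mathrm{OPT}_{H_b^\sigma}}{\mathrm{LP}_{H_b^\sigma}}\;\le\;2\lceil\log k\rceil\cdot\max_{b,\sigma}\frac{\mathrm{OPT}_{H_b^\sigma}}{\mathrm{LP}_{H_b^\sigma}}.$$
Taking the supremum over instances and pigeonholing over the $2\lceil\log k\rceil$ pairs $(b,\sigma)$, some fixed bipartite piece $H^*$ satisfies $\alpha_{H^*}\ge\alpha_H/(2\lceil\log k\rceil)$.

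Finally I would exhibit a gap-preserving reduction from \DMulC on the pattern $H^*$ to \DMulCH. Given an instance $(G,\phi)$ of the former with bipartition $A\cup B$ and edges of $H^*$ directed $A\to B$, construct $G'$ from $G$ by adding $k$ fresh vertices $w_1,\dots,w_k$, an infinite-weight edge $w_i\to\phi(v_i)$ for every $v_i\in A$, and an infinite-weight edge $\phi(v_j)\to w_j$ for every $v_j\in B$; set $\phi'(v_i)=w_i$. Since infinite-weight edges never appear in any finite (integral or fractional) cut, only original $G$-edges can be cut. For $(v_i,v_j)\in H^*$, the only $w_i\to w_j$ paths have the form $w_i\to\phi(v_i)\leadsto\phi(v_j)\to w_j$, so the corresponding multicut constraint is identical to the original. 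For $(v_i,v_j)\in H\setminus H^*$, since $H^*$ already contains \emph{every} $H$-edge from $A$ to $B$, we must have $v_i\in B$ (so $w_i$ has no outgoing edge) or $v_j\in A$ (so $w_j$ has no incoming edge); either way $w_i\not\to w_j$ and the pair is vacuously separated. Hence $\mathrm{OPT}$ and $\mathrm{LP}$ are preserved exactly, and an $(\alpha_H/(2\lceil\log k\rceil)-\eps)$-approximation for \DMulCH~would yield an $(\alpha_{H^*}-\eps)$-approximation for the $H^*$-version, contradicting Theorem~\ref{thm:hardness}.

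The one delicate point is that $\alpha_H$ is defined as a supremum over all instances, so the bound $\alpha_{H^*}\ge\alpha_H/(2\lceil\log k\rceil)$ must be read off from a near-worst-case sequence and pigeonholed to a single fixed piece $H^*$; everything else, in particular the verification that the infinite-weight gadget leaves the LP value untouched, is routine.
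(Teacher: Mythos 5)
Your proposal is correct and follows essentially the same route as the paper's proof: the same binary-labeling decomposition of $H$ into $r=2\lceil\log k\rceil$ one-way bipartite pieces, the same subadditivity bound $\alpha_H \le \sum_i \alpha_{H_i}$ (proved by combining piecewise integral solutions against the full-demand LP), and the same infinite-weight-edge gadget giving an approximation-preserving reduction from a bipartite piece back to \DMulCH. The only cosmetic difference is that you pigeonhole to a single best piece $H^*$ with $\alpha_{H^*}\ge\alpha_H/r$ and reduce from that one, whereas the paper reduces from every piece and averages the resulting hardness bounds $\gamma_H\ge\gamma_{H_i}\ge\alpha_{H_i}-\eps$; the two bookkeeping steps are interchangeable.
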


Via known flow-cut gap results  \cite{SaksSZ04} 
and some standard reductions we obtain the following corollary.
\vspace{-4mm}
\begin{corollary}
  \label{cor:hardness}
  Assuming UGC,
  \begin{compactitem}
  \item For any fixed $k$, if $H$ is a collection of $k$ disjoint directed 
    edges then \DMulCH is hard to approximate within a factor of $k-\eps$.
    
  \item Separating $s$ from $t$ \emph{and} $t$ from $s$ in a directed graph
    (\DMCut with $2$ terminals) is hard to approximate within a factor of $2-\eps$.
  \item For any fixed $k$, \TriCast's approximability coincides with 
the flow-cut gap.
  \end{compactitem}
\end{corollary}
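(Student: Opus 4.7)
The plan is to obtain each of the three items as a direct instantiation of Theorem \ref{thm:hardness}, whose hypothesis asks only that the demand graph $H$ be directed bipartite. For each $H$ named in the corollary I would first verify that $H$ is bipartite, then either cite or briefly justify the value of $\alpha_H$, and finally invoke Theorem \ref{thm:hardness} to conclude the stated hardness. Since Theorem \ref{thm:hardness} already packages the UGC-based reduction, the remaining work is essentially bookkeeping.

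For item (i), take $H$ to be the directed matching with edges $(s_i,t_i)$, $i=1,\ldots,k$. The bipartition $\{s_1,\ldots,s_k\} \sqcup \{t_1,\ldots,t_k\}$ shows $H$ is bipartite, so Theorem \ref{thm:hardness} applies. The classical Saks--Seymour--Zaks gap instance \cite{SaksSZ04} gives $\alpha_H \geq k$ on precisely this demand pattern, while the trivial $k$-approximation gives $\alpha_H \leq k$; hence $\alpha_H = k$ and Theorem \ref{thm:hardness} yields $(k-\eps)$-hardness.

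For item (ii), take $H$ to be the directed 2-cycle on $\{s,t\}$ with edges $(s,t)$ and $(t,s)$; this is bipartite with parts $\{s\}$ and $\{t\}$. The flow-cut gap is the well-known value $\alpha_H = 2$: the upper bound follows, e.g., from the Naor--Zosin $2$-approximation for \DMCut, and the lower bound from a standard two-terminal gap construction. Theorem \ref{thm:hardness} then gives $(2-\eps)$-hardness for two-terminal \DMCut. For item (iii), the \TriCast demand graph is bipartite by construction (sources on one side, sinks on the other), so for each fixed $k$ Theorem \ref{thm:hardness} applies verbatim and matches the approximation threshold of \TriCast with $\alpha_H$ up to an additive $\eps$, which is exactly the claimed coincidence with the flow-cut gap.

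There is no substantive obstacle in this derivation once Theorem \ref{thm:hardness} is in hand; the only point that deserves a moment of care is checking that the known flow-cut gap constructions for items (i) and (ii) already live in the bipartite regime, which they do by design (the Saks--Seymour--Zaks instance uses a matching demand graph, and the 2-cycle is trivially bipartite). The hardest real work in proving the corollary is therefore entirely imported from Theorem \ref{thm:hardness}.
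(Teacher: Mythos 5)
Your instantiations for items (i) and (iii) are essentially the intended ones: the $k$-matching and the \TriCast pattern both have all demand edges directed from one side of the bipartition to the other, which is what Theorem~\ref{thm:hardness} actually needs (its proof sets $H=(S\cup T,E_H)$ with $N_H^+(u)\subseteq T$ for $u\in S$ and assumes each $b_j$ has no outgoing demand edge).

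Item (ii) has a genuine gap. You take $H$ to be the directed $2$-cycle and call it bipartite with parts $\{s\},\{t\}$, but that is only bipartite as an undirected graph; in the sense required by Theorem~\ref{thm:hardness}, bipartite means every demand edge is oriented from $S$ to $T$, and the $2$-cycle has the edge $(t,s)$ pointing the wrong way. In particular, the reduction in Section~\ref{sec:DMulC_hardness} explicitly imposes Assumption~II (each $b_j$ has no outgoing edge), which the $2$-cycle violates, so Theorem~\ref{thm:hardness} does not apply to it directly. The intended argument, as the paper signals with the phrase ``some standard reductions,'' is to start from the $2$-matching demand graph $H'$ with edges $(s_1,t_1),(s_2,t_2)$: this is bipartite in the required directed sense, the Saks--Seymour--Zaks construction shows $\alpha_{H'}=2$, so Theorem~\ref{thm:hardness} gives $(2-\eps)$-hardness for \DMulC with demand graph $H'$. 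One then reduces $\DMulC$-$H'$ to two-terminal \DMCut by adding a new vertex $s$ with infinite-weight edges $s\to s_1$ and $t_2\to s$, and a new vertex $t$ with infinite-weight edges $t_1\to t$ and $t\to s_2$; every $s$--$t$ path must pass from $s_1$ to $t_1$ inside $G$ and every $t$--$s$ path from $s_2$ to $t_2$, so the two instances have identical optima and the $(2-\eps)$-hardness transfers. Without this extra reduction step, your derivation of item (ii) does not go through.
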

\vspace{-2mm}
\iffalse
We now state several important corollaries of Theorem~\ref{thm:main-hardness}.
Using the flow-cut gap established in \cite{SaksSZ04}, 
we obtain the following corollary.

\begin{corollary}
  \label{cor:k-matching}
  Under UGC, if $H$ is a graph with $k$ disjoint edges then \DMulC($H$) is hard
  to approximate to within a factor of $k-\eps$ for any fixed $\eps > 0$.
\end{corollary}

Note that a factor of $k$ approximation is trivial.

Via a standard reduction we have the following answering a question
raised in \cite{ChekuriM16}.
\begin{corollary}
  Under UGC, \DMCut with $2$ terminals is hard to approximate within a
  factor of $2-\eps$ for any fixed $\eps > 0$.
\end{corollary}

Finally, although we do not know yet the precise answer to the flow-cut gap
for \TriCast, we have the following.

\begin{corollary}
  Under UGC, the hardness of \TriCast for any fixed value of $k$
  is equal to the flow-cut gap. 
\end{corollary}
\fi

\medskip Our last result is on upper bounds for \DMulC.  Can we
improve the known approximation bounds based on the structure of the
demand graph? Corollary~\ref{cor:hardness} shows that if $H$ contains,
a matching of size $k$ as an induced subgraph then we cannot obtain a
better than $k$ approximation. The following question arises naturally.

\begin{question}
  Let $H$ be a fixed demand graph such that it does not contain 
  a matching of size $k$ as an induced subgraph. Is there a 
  $k$-approximation for \DMulCH?
\end{question}
\vspace{-1mm}
A positive answer to above question would imply a $2$-approximation
for \TriCast in directed graphs.  Since we are currently unable to
improve the $O(\log k)$-approximation for \TriCast we consider a
relaxed version of the preceding question and give a positive answer.
We say that a directed demand graph $H=(V,F)$ contains
an induced $k$-matching-extension if there are two subsets of $V$,
$S = \{s_1,s_2,\ldots,s_k\}$ and $T=\{t_1,t_2,\ldots,t_k\}$ such that
the induced graph on $S \cup T$ satisfies the following properties:
(i) for $1 \le i \le k$, $(s_i,t_i) \in F$ and (ii) for $i > j$,
$(s_i,t_j) \not \in F$. Not that $s_1,s_2,\ldots,s_k$ are distinct
since $S$ is a set and similarly $t_1,\ldots,t_k$ are distinct but
some $s_i$ may be the same as a $t_j$ for $i \neq j$.
We give two examples to illustrate the utility of considering this
special case of instances.

Consider \DMCut which corresponds to the demand graph $H$ being a
complete directed graph. It can be verified that $H$ does {\em not}
contain an induced $3$-matching-extension. Now consider $H=(V,F)$
being a complete graph on an even number of nodes and remove edges
from $H$ corresponding to a perfect matching $M$ on $V$ (if $uv \in M$
we remove $(u,v)$ and $(v,u)$ from $F$). Let $H'$ be the resulting
demand graph. We claim that $H'$ does not contain an induced
$4$-matching-extension to \TriCast. What is the approximability of
\DMulCH'? It is fair to say that previous work would not have
found it easy to answer this question since $H'$ does not appear to
have the same nice structure that the complete graph has. The theorem
below shows that one can obtain a $3$ approximation for
\DMulCH'.

\iffalse
Recall that for \DMCut we
have a $2$-approximation and for \TriCast we have an $O(\log
k)$-approximation. We build on our recent work on \DMCut to show
that we can obtain a non-trivial bound on the flow-cut gap and
a corresponding approximation bound, for significant generalizations
of \DMCut. Interestingly the characterization involves \TriCast pattern.
The formal statement of our result is given below.
\fi

\iffalse
In a similar spirit as undirected graphs, we also prove that if the demand graph does not contain an induced graph which lies between $k$ parallel edges and the triangle cast pattern(figure~\ref{fig:tri_cast_pattern}) with parameter $k$, then there exists a factor $k-1$ approximation. 

{\bf Chandra:} define the property of the demand graph separately
in a clean fashion instead of introducing it inside the theorem.

\begin{restatable}{thm}{dirapproximation}\label{thm:dir_approximation}
  Consider \DMulC instance $(G,H)$ such that for some integer $h$,
  there does not exists any induced subgraph of $H=(V,F)$ on vertices
  $s_1,\dots,s_h,t_1,\dots,t_h$ such that the following properties are
  satisfied: $(i)$ $\forall i \in [1,h], s_it_i\in F$, $(ii) \forall
  i>j \in [1,h], s_i t_j \not\in F$. $s_i$'s are all distinct and
  $t_i$'s are all distinct but some $s_i$ might be same as some $t_j$.

  If demand graph $H$ satisfies this property, then the flow-cut gap is
  at most $k-1$ and there is a polynomial-time rounding algorithm that
  achieves this upper bound.
\end{restatable}
\fi

\begin{restatable}{thm}{dirapproximation}\label{thm:dir_approximation}
  Consider \DMulCH where $H$ does not contain an induced
  $k$-matching-extension.  The the flow-cut gap is at most $k-1$
  and there is a polynomial-time rounding algorithm that achieves this
  upper bound.
\end{restatable}

The rounding scheme that proves the preceding theorem is built upon
our recent insight for \DMCut \cite{ChekuriM16}.  Interestingly the
rounding scheme is itself oblivious to the demand graph $H$. It either
provably obtains a $(k-1)$ approximation via the LP solution or
provides a certificate that $H$ contains an induced
$k$-matching-extension. 

\medskip
\noindent {\bf Techniques:} At a high-level our main insights are
based on a \emph{labeling} view for \MulC instead of using the
standard LP based on distances. For undirected graphs we show that
this yields Theorem~\ref{thm:tk2_approximation}. In directed graphs we
show that a labeling based LP is equivalent to the standard
LP which is starkly in contrast to the undirected graph setting.
The labeling LP allows us to relate the hardness of
\DMulCH to the hardness of constraint satisfaction problems via
a standard labeling LP for CSPs called \BasicLP. We crucially rely
on a general hardness result for \MinCSP due to Ene, Vondrak and
Wu \cite{EneVW13} which generalized prior work of Manokaran \etal
\cite{ManokaranNRS08}. Finally, Theorem~\ref{thm:dir_approximation}
is builds upon our recent insights into rounding for
\DMCut \cite{ChekuriM16}.

\iffalse
\medskip
\noindent
{\bf Other Related Work:} \colorcomment{Add a discussion of Vishnoi-Steurer paper.
Also result of Cheriyan-Karloff-Rabani on directed multicut. 
FPT results? Paper of Marx-Wollan on demand graphs that make 
max disjoing paths fpt.}
\fi

\medskip
\noindent {\bf Organization:} Section~\ref{sec:tK2_MulC} describes the
factor $2$-approximation for
\tK2MulC. Section~\ref{sec:DMulC_hardness} describes the proof of
hardness of approximation for
\DMulCH. Section~\ref{sec:DMulC_approximation} described the $k-1$
approximation for \DMulCH when $H$ does not contain an induced
$k$-matching extension. Due to space constraints many of the proofs,
including that of Theorem~\ref{thm:hardness-nonbipartite}, 
are provided in the appendix. 
\vspace{-4mm}

\section{Approximating \UMulC with $tK_2$-free demand graphs}\label{sec:tK2_MulC}
In this section we obtain $2$-approximation for $tK_2$-free demand
graphs and prove Theorem~\ref{thm:tk2_approximation}.
%\tk*
Before we prove the theorem, we consider the \UMulC problem
where demand graph has some fixed size $k$. Given supply graph
$G=(V,E)$ let $S = \{s_1,\dots,s_k\} \subset V$ be the {\em terminals}
participating in the demand edges specified by $H$. A feasible
solution $E' \subset E_G$ of the \UMulC instance will induce a
partition over $S$ such that if $s_is_j$ is an edge in the demand
graph $H$, then $s_i$ and $s_j$ belong to different components in
$G-E'$. Note that two terminals that are not connected by a demand
edge may be in the same connected compoent of $G-E'$.  If $k$ is a
fixed constant we can ``guess'' the partition of the terminals induced
by an optimum solution. With the guess in place it is easy to see that
the problem reduces to an instance of \UMCut which admits a constant
factor approximation.
Thus, one can obtain a constant factor approximation for \UMulC in
$2^{O(k \log k)} \text{poly}(n)$ time by trying all possible partitions of
the terminals.

To prove Theorem~\ref{thm:tk2_approximation}, we use this idea of
enumerating feasible partitions. However, $H$ is not necessary of
fixed size, and enumerating all possible partitions of the terminals
is not feasible. Instead, we make use of the following theorem which
bounds the number of {\em maximal} independent sets in a $tK_2$-free
graph.

\begin{theorem}(Balas and Yu \cite{BalasY89})
  \label{thm:tK2_independent_sets}
  Any $s$-vertex $tK_2$-free graph has at most $s^{O(t)}$ maximal
  independent sets and these can be found in $s^{O(t)}$ time.
\end{theorem}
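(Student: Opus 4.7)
The plan is to prove both the counting bound and the enumeration algorithm simultaneously by induction on $t$, driven by a structural decomposition of $tK_2$-free graphs via a \emph{maximal induced matching}. The base case $t=1$ is immediate: a $K_2$-free graph has no edges, so $V$ itself is the unique maximal independent set, and the bound $s^{O(1)}$ is trivial.

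For the inductive step I would first compute a maximal induced matching $M$ greedily (polynomial time suffices since we only need maximality, not maximum cardinality). Since $G$ is $tK_2$-free, $|M| \le t-1$, so $U := V(M)$ has $|U| \le 2(t-1)$. The key structural lemma is: for every edge $uv$ in $G[V \setminus U]$ at least one of $u,v$ has a neighbor in $U$, because otherwise $M \cup \{uv\}$ would be a strictly larger induced matching contradicting maximality. In particular, the set $W := \{v \in V \setminus U : N(v) \cap U = \emptyset\}$ is an independent set of $G$.

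Next I would enumerate maximal independent sets $I$ of $G$ by guessing a pair $(A,\phi)$, where $A = I \cap U$ is an independent set of $G[U]$ (at most $2^{2(t-1)}$ choices) and $\phi : U \setminus A \to V \setminus U$ is a function assigning each $u \in U \setminus A$ to a neighbor of $u$ that lies in $I$ (at most $s^{|U \setminus A|} \le s^{2(t-1)}$ choices). Given $(A,\phi)$, let $B = \phi(U \setminus A)$. Then $A \cup B \subseteq I$ is forced, $N(A \cup B) \setminus (A \cup B)$ is forbidden from $I$, and the remaining task is to count maximal independent sets of the induced subgraph $G'$ on $V' := V \setminus (U \cup N(A) \cup B \cup N(B))$ (with no extra covering constraints, since all of $U$ is already witnessed by $A \cup B$). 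This $G'$ is still $tK_2$-free, but crucially the vertices of $W \cap V'$ have no neighbors in $U$ and can only be blocked by edges inside $V'$, so the structural ``irrelevant'' portion has been pruned.

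The main obstacle will be ensuring that the recursion actually decreases the parameter $t$ strictly, rather than just shrinking $|V|$. To handle this I would observe that once $B$ is fixed, any maximal induced matching $M'$ of $G'$ is disjoint from $U$ and from $N(B)$, and can be combined with an appropriate sub-matching built from $\phi$ to exhibit an induced matching of $G$ strictly larger than $|M'|$; this forces $|M'| \le t-2$, i.e.\ $G'$ is $(t-1)K_2$-free. The inductive bound then gives $|{\rm MIS}(G')| \le s^{O(t-1)}$, and multiplying by the branching factor $2^{2(t-1)} \cdot s^{2(t-1)}$ yields $|{\rm MIS}(G)| \le s^{O(t)}$. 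Because every step (maximal induced matching, enumerating $(A,\phi)$, recursing) is polynomial-time per node of the search tree, the same recursion produces all maximal independent sets in $s^{O(t)}$ total time.
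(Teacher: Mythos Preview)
The paper does not prove this theorem; it is quoted from Balas and Yu \cite{BalasY89} and used as a black box in the reduction to uniform metric labeling. So there is no ``paper's proof'' to compare against, and your proposal must stand on its own.

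The decomposition up through the branching on $(A,\phi)$ is fine: for a maximal independent set $I$ the pair $(I\cap U,\phi)$ exists, and there are at most $2^{2(t-1)}s^{2(t-1)}$ such pairs. The gap is in the recursive step. Your claim that $G' = G[V\setminus(U\cup N(A)\cup B\cup N(B))]$ is $(t-1)K_2$-free is false in general, and the proposed justification --- extending an induced matching $M'$ of $G'$ by a $\phi$-edge $u\phi(u)$ --- breaks because vertices of $M'$ are only guaranteed to avoid $N(A)$ and $N(B)$, not $N(U\setminus A)$. They may be adjacent to $u\in U\setminus A$, destroying the induced-matching property of $M'\cup\{u\phi(u)\}$.

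Here is a concrete counterexample with $t=2$. Take $V=\{u_1,u_2,x,y,b\}$ and edge set $\{u_1u_2,\ u_2x,\ u_2b,\ xy\}$. This graph is $2K_2$-free (every pair of vertex-disjoint edges is joined via $u_2$), and $M=\{u_1u_2\}$ is a maximal induced matching, so $U=\{u_1,u_2\}$. For the maximal independent set $I=\{u_1,y,b\}$ one has $A=\{u_1\}$, the only choice for $\phi(u_2)$ in $V\setminus U$ is $b$, hence $B=\{b\}$, and $V'=\{x,y\}$. But $G'=G[\{x,y\}]$ contains the edge $xy$, so it is \emph{not} $1K_2$-free. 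Neither $\{xy,\,u_2b\}$ nor $\{xy,\,u_1u_2\}$ is an induced matching in $G$ (both are spoiled by the edge $u_2x$), so no extension of $M'=\{xy\}$ is available. Moreover the two maximal independent sets $\{x\}$ and $\{y\}$ of $G'$ lift to the distinct maximal independent sets $\{u_1,x,b\}$ and $\{u_1,y,b\}$ of $G$ under the very same $(A,\phi)$, so the recursion genuinely needs a nontrivial bound on $G'$ that has not been supplied. Without a different mechanism to control the residual graph, the induction does not close.
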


We prove Theorem~\ref{thm:tk2_approximation} by using the preceding
theorem and reducing the \UMulC problem to the \UML problem.
We now describe the general \ML problem.

\noindent
\ML: The input consists of an undirected edge-weighted graph $G =
(V,E)$, a set of labels $L=\{1,\dots,h\}$ and a metric $d(i,j) , i,j
\in L$ defined over the labels. In addition for each vertex $u \in V$
and label $i \in L$ there is a non-negative assignment cost $c(u,i)$.
Given an assignment $f: V \rightarrow L$ of vertices to labels we
define its cost as $\sum_{u \in V} c(u, f(u)) + \sum_{uv \in E} w(uv)
d(f(u),f(v))$. The goal is to find an assignment of minimum cost.
The special case when the metric is uniform, that is
$d(i,j) = 1$ for $i \neq j$, is refered to as \UML.

\begin{theorem}\label{thm:uml_approximation}(Kleinberg and Tardos~\cite{KleinbergT02})
  There is a $2$-approximation for \UML.
\end{theorem}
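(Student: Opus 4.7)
The plan is to give a natural LP relaxation for \UML together with Kleinberg and Tardos's randomized rounding scheme that assigns labels to vertices in a sequence of independent rounds, and to show that the expected cost of the rounded solution is at most twice the LP optimum.

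First I would write the following LP relaxation. Introduce a variable $x_{u,i}$ for each vertex $u\in V$ and each label $i\in L$, interpreted as the indicator that $u$ receives label $i$, with $\sum_{i\in L} x_{u,i}=1$ and $x_{u,i}\ge 0$. For each edge $uv\in E$ and label $i$, add $z_{uv,i}\ge x_{u,i}-x_{v,i}$ and $z_{uv,i}\ge x_{v,i}-x_{u,i}$, and set $z_{uv}=\tfrac12\sum_i z_{uv,i}$, a relaxation of the indicator that $u$ and $v$ are separated (receive distinct labels). The LP objective is $\sum_{u,i} c(u,i)\,x_{u,i}+\sum_{uv\in E} w(uv)\, z_{uv}$, and clearly lower bounds the optimal labeling cost.

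Next I would describe the rounding. Start with every vertex unassigned. In each round, pick a label $\ell\in L$ uniformly at random and a threshold $r\in[0,1]$ uniformly at random; for each still-unassigned vertex $u$, if $x_{u,\ell}\ge r$ then assign $\ell$ to $u$. Repeat until every vertex is labeled (which happens almost surely, since in each round a fixed unassigned vertex is assigned with probability $\tfrac{1}{h}\sum_\ell x_{u,\ell}=\tfrac{1}{h}>0$).

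The analysis would proceed by two symmetry arguments that exploit the independence of rounds. For the assignment cost: condition on a round being the one that first assigns $u$. In that round, the probability that $u$ is assigned \emph{and} receives label $\ell$ is $\tfrac{1}{h}x_{u,\ell}$, while the probability that $u$ is assigned at all is $\tfrac{1}{h}$; so conditioned on being assigned in this round, $u$ gets label $\ell$ with probability exactly $x_{u,\ell}$. Since this conditional distribution is the same in every round, $\Pr[u\text{ ends up with label }\ell]=x_{u,\ell}$, and the expected assignment cost equals the LP assignment cost. For the separation cost: fix an edge $uv$, and look at the first round in which at least one of $u,v$ is assigned. In a single round where both are still unassigned, the probability that exactly one is assigned, or that both are assigned to different labels, is at most $\tfrac{1}{h}\sum_i |x_{u,i}-x_{v,i}|=\tfrac{2}{h}z_{uv}$, whereas the probability that at least one of $u,v$ is assigned in the round is $\tfrac{1}{h}\sum_i \max(x_{u,i},x_{v,i})\ge \tfrac{1}{h}$. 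Taking the ratio bounds the conditional separation probability by $2z_{uv}$, so $\Pr[u,v\text{ get different labels}]\le 2z_{uv}$.

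Combining the two bounds, the expected total cost is at most $\sum_{u,i} c(u,i)x_{u,i}+2\sum_{uv\in E} w(uv)z_{uv}\le 2\cdot\mathrm{OPT}_{\mathrm{LP}}\le 2\cdot\mathrm{OPT}$, giving the 2-approximation. Finally, one can derandomize or simply argue via standard conditional expectations / polynomially many iterations to obtain a polynomial-time algorithm. The main obstacle I expect is the careful conditioning in the edge analysis: one needs to justify cleanly that ``the first round touching $\{u,v\}$'' is distributionally identical to a fresh round in which both endpoints are still unassigned, which is what makes the $2z_{uv}$ bound valid; this is exactly where independence of rounds is essential.
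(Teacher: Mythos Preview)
Your proof sketch is correct and is exactly the Kleinberg--Tardos argument: the LP relaxation, the iterative random-threshold rounding, the observation that $\Pr[u\text{ gets }\ell]=x_{u,\ell}$, and the edge bound $\Pr[u,v\text{ separated}]=\dfrac{2z_{uv}}{1+z_{uv}}\le 2z_{uv}$ are precisely the ingredients of their proof.

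There is nothing to compare against, however: the paper does not prove this theorem. It is stated purely as a citation to Kleinberg and Tardos and then invoked as a black box in the reduction from $tK_2$-free \UMulC to \UML. So your proposal supplies a proof where the paper intentionally offers none.
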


\begin{proofof}{Theorem ~\ref{thm:tk2_approximation}}
  Let the demand graph $H$ of the \UMulC instance be
  $tK_2$-free. Using Theorem~\ref{thm:tK2_independent_sets}, we can
  find all maximal independent sets in $H$. Let these independent sets
  be $I_1,\dots,I_r$ where $r \leq |V_H|^{O(t)}$. 
  Note that the indepdendent sets are
  considered only in the demand graph. 

  Consider the following instance of \UML:
  The supply graph $G=(V,E)$ of the \UMulC instance is 
  the input graph to the \UML instance. The label set
  $L = \{1,2,\ldots,r\}$, one for each maximal independent set in $H$.
  For each $u \in V(H)$  let $c(u,i) = 0$ if $u \in I_i$ and
  $c(u,i) = \infty$ otherwise. For a vertex $u \in V$ where $u$ is
  not a terminal we have $c(u,i) = 0$ for all $i$.
  
  We claim that the preceding reduction is approximation preserving.
  Assuming the claim, we can obtain the desired $2$-approximation by
  solving the \UML instance using
  Theorem~\ref{thm:uml_approximation}. The size of the \UML instance
  that is generated from the given \UMulC instance is $\text{poly}(n,
  |V_H|^{O(t)})$ which explains the running time. We now prove the
  claim.

  Let $f:V \rightarrow L$ be an assignment of labels to the nodes whose
  cost is finite (such an assignment always exists since each terminal
  is in some independent set). Let $E' \subset E$ be the set of edges
  ``cut'' by this assignment; that is, $uv \in E'$ iff $f(u) \neq f(v)$.
  The cost of this assignment is equal to the weight of $E'$ since the
  metric is uniform and the labeling costs are $0$ or $\infty$. We argue
  that $E'$ is a feasible solution for the \UMulC instance. Suppose not.
  Then there are terminals $u,v$ such that $uv$ is an edge
  in the demand graph $H$ and $u,v$ belong to the same connected
  component of $G-E'$. The label $j = f(u)$ corresponds to a maximal
  independent set $I_j$ in $H$ which means that $v \not \in I_j$. Thus
  $f(v) \neq j$ since $c(v,j) = \infty$. Therefore, $u$ and $v$ are
  assigned different labels and cannot be in the same connected component.
  
  Conversely, let $E' \subset E$ be a feasible solution for \UMulC
  instance and let $V_1,\dots,V_{\ell}$ be vertex sets of the
  connected components of $G-E'$. Let $T_j$ be the terminals in
  $V_j$. Since, all pairs of terminals connected by an edge in $H$ are
  seperated in $G-E'$, $T_j$ must be an independent set in $H$. For
  each $T_j$, consider a maximal independent set in $H$ containing all
  the vertices of $T_j$; pick arbitrary one if more than one
  exists. Let this independent be $I_{i_j}$. We construct a labeling $f$
  by labeling all vertices of $V_j$ by label $i_j$.
  It is easy to see that all terminals are assigned a label corresponding
  to an  independent set in $H$ containing that terminal. 
  Hence, labeling cost is equal to
  zero. Also, all vertices corresponding to same connected component
  in $G-E'$ are assigned the same label. Hence, cost of the edges cut
  by the assignment $f$ is at most the cost of the edges in $E'$.
\end{proofof}

\section{UGC-based hardness of  approximation results for \DMulC}\label{sec:DMulC_hardness}
%\dirhardness*
\label{sec:hardness}
In this section we prove hardness of approximation for \DMulCH, in particular
Theorem~\ref{thm:hardness} relating the hardness of approximation to
the flow-cut gap. Recall that
$\alpha_H$ is the worst-case flow-cut gap (equivalently, the integrality
gap of the distance LP) for instances of \DMulCH. 

%\hardness*
We prove the theorem via a reduction to \MinCSP and the hardness 
result of Ene, Vondr\'ak and Wu \cite{EneVW13}. We note that the
result is technical and invovles several steps. This is partly due
to the fact that the theorem is establishing a meta-result. The theorem
of \cite{EneVW13} is in a similar vein. In particular \cite{EneVW13}
establishes that the hardness of \MinCSP depends on the integrality
gap of a specific LP formulation \BasicLP. Our proof is based on
establishing a correspondence between \DMulCH and a specific
constraint satsifaction problem \MinbHCSP where $\beta_H$ is 
constructed from $H$ (this is the heart of the reduction) and
proving the following properties:
\iffalse
We prove hardness of approximation for \DMulC using the hardness
result known for \MinCSP. Proof idea is as follows: given a bi-partite
demand graph $H$, we construct a set of predicates $\beta_H$
satisfying following properties:
\fi
\begin{compactitem}
\item[$(I)$] Establish approximation equivalence between \DMulCH and
  \MinbHCSP. That is, prove that each of them reduces to the other
  in an approximation preserving fashion.
%there is a cost preserving reduction
%  from/to \DMulC with demand graph $H$ (\DMulCH) to/from \MinCSP with
%  predicate set $\beta_H$(\MinbHCSP).
\item[$(II)$] Prove that if the flow-cut gap for \DMulCH (equivalently
  the integrality gap of \DistLP) is $\alpha_H$ then the integrality gap
  of \BasicLP for \MinbHCSP is also $\alpha_H$.
\iffalse
Distance LP for \DMulCH is equivalent to Basic LP for
  \MinbHCSP. That is, if $\Im$ is a \DMulCH instance and $\Ic$ is the
  equivalent \MinbHCSP instance, then given a solution $\bx$ to
  distance LP for $\Im$, there exists a solution $\bz$ with equal cost
  to Basic LP for $\Ic$ and vice versa.
\fi
\end{compactitem}
From $(I)$, we obtain that the hardness of approximation factor 
for \DMulCH and \MinbHCSP coincide. From $(II)$, we 
can apply the result in \cite{EneVW13} which shows that, assuming UGC,
the hardness of approximation for \MinbHCSP is the same as the
integrality gap of \BasicLP. Putting together these two claims give us
our desired result.

It is not straightforward to relate \DistLP for \DMulCH and \BasicLP
for \MinbHCSP directly.  \BasicLP appears to be stronger on first
glance.  In order to relate them we show that a seemingly strong LP
for \DMulC that we call \LabelLP is in fact no stronger than \DistLP.
It is surprising that this holds even when $H$ is not fixed graph
since the size of \LabelLP has an exponential dependence on the size
of $H$. In fact this can be seen as the key technical fact unerlying the
entire proof and is independently interesting since it is quite
different from the undirected graph setting.  It is much easier to
relate \LabelLP and \BasicLP. The rest of this section is organized
as follows. In Section~\ref{sec:labellp} we describe \LabelLP and prove its
equivalence with \DistLP. In Section~\ref{sec:csp} we describe \MinCSP and
\BasicLP and formally state the theorem of \cite{EneVW13} that we rely on.
We then subsequently describe our reduction from \DMulCH to \MinbHCSP and
complete the proof.

\subsection{\LabelLP and equivalence with \DistLP for \DMulC}
\label{sec:labellp}
In Section~\ref{sec:tK2_MulC}, we saw that if demand graph $H$ has
size $k$, then there is a labeling LP for \MulC (the undirected
problem) with size $\text{poly}(2^k, n)$ and integrality gap at most
$2$ which improves upon the integrality gap of \DistLP which can be
$\Omega(\log k)$. Here we describe a natural labeling LP for \DMulC
(\LabelLP), but in contrast to the undirected case, we show that it is
not stronger than \DistLP. We show this equivalence on an instance by
instance basis.

Let the demand graph be $H$ with vertex set $V_H=\{s_1,\dots,s_k\}$,
and the supply graph be $G=(V_G,E)$ with $n$ vertices. We will assume
here, for ease of notation, that $V_H \subset V_G$. Define a labeling
set $L = \{0,1\}^k$ which corresponds to all subsets of $V_H$.  We
interpret the labels in $L$ as $k$-length bit-vectors; if $\sigma \in
L$ we use $\sigma[i]$ to denote the $i$'th bit of $\sigma$. For two labels $\sigma,
\sigma' \in L$ we say $\sigma_1 \leq \sigma_2$ if $\forall i, 
\sigma_1[i] \leq \sigma_2[i]$. 
To motivate the
formulation consider any set of edges $E' \subseteq E$ that can be
cut. In $G' = G-E'$ we consider, for each $v \in V$, the reachability
information from each of the terminals $s_1,s_2,\ldots,s_k$. For each
$v$ this can be encoded by assigning a label $\sigma_v \in L$ where
$\sigma_v[i] = 1$ iff $v$ is reachable from $s_i$ in $G'$.  $E'$ is a
feasible solution if $s_i$ cannot reach $s_j$ whenever $(s_i,s_j)$ is
an edge of $H$. The goal of the formulation to assign labels to
vertices and to ensure that demand pairs are separated. An edge
$e = (u,v)$ is cut if there is some $s_i$ such that $s_i$ can reach
$u$ but $s_i$ cannot reach $v$. We add several
constraints to ensure that the label assignment is consistent.
The basic variables are $z_{v,\sigma}$ for each $v \in V_G$ and
$\sigma \in L$ which indicate whether $v$ is assigned the label $\sigma$.
We also a variable $x_e$ for each edge $e=(u,v) \in E_G$ that is derived
from the label assignment variables. 
We start with the basic constraints involving these variables and
then add additional variables that ensure consistency of the assignment.
\begin{compactitem}
  \item Each vertex is labelled by exactly one label. For $v \in V_G$, 
    $\sum_{\sigma \in L} z_{v,\sigma} = 1$.
  \item Vertex $s_i$ is reachable from $s_i$. For $s_i \in V_H$ and
    any $\sigma \in L$ such that $\sigma[i] = 0$, $z_{s_i,\sigma} = 0$
  \item Demand edges are separated. That is, if  $(s_i,s_j) \in E_H$, then 
    $s_j$ is not reachable from $s_i$. That is, 
    $z_{s_j,\sigma} = 0$ for any $\sigma$ where $\sigma[i] = 1$ and $(s_i,s_j) \in E_H$.
\end{compactitem}
For each edge $e=(u,v)$ we have variables of the form $z_{e,
  \sigma_1\sigma_2}$ where the intention is that $u$ is labeled
$\sigma_1$ and $v$ is labeled $\sigma_2$. To enforce consistency
between edge assignment variables and vertex assignment variables we
add the following set of constraints.
\begin{compactitem}
  \item  For $e = (u,v) \in E_G$, $z_{u,\sigma_1}  = \sum_{\sigma_2 \in L} z_{e,\sigma_1 \sigma_2}$ and $z_{v,\sigma_2}  =  \sum_{\sigma_1 \in L} z_{e,\sigma_1\sigma_2}$.
\end{compactitem}
Finally, the auxiliary variable $x_e$ indicates whether $e$ is cut.
\begin{compactitem}
  \item For $e = (u,v) \in E_G$, $x_e = 1$ if for some $i$, 
    $u$ is reachable from $s_i$ and $v$ is not reachable from $s_i$. 
  Then, $x_e = 1$ if $z_{e,\sigma_1,\sigma_2} = 1$ for 
  $\sigma_1 \not \leq \sigma_2$. We thus set $x_e = \sum_{\sigma_1,\sigma_2 \in L : \sigma_1 \not \leq \sigma_2} z_{e,\sigma_1\sigma_2}$.
\end{compactitem}

It is not hard to show that if one constraints all the variables
to be binary then the resulting integer program is valid formuation
for \DMulC. Note that the number of variables is exponential in 
$k = |V_H|$. Relaxing the integrality constraint of variables, we get Label-LP~\ref{fig:labellp}.

\begin{theorem}
  \label{thm:distlp-labellp}
  For any instance $G,H$ of \DMulCH, the optimum solution values for
  the formulations \LabelLP and \DistLP are the same both in the
  fractional and integral settings.
\end{theorem}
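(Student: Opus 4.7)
The plan is to prove the equality Label-LP $=$ Dist-LP by establishing both inequalities in the fractional case, and then observing that the integer case follows. For the integer case, any multicut $E'\subseteq E$ yields a feasible integer Label-LP solution of cost at most $w(E')$ by setting $\sigma_v[i]=1$ iff $v$ is reachable from $s_i$ in $G-E'$ (all cut edges of this labeling lie within $E'$), while conversely reading off the cut edges of any integer Label-LP solution produces a feasible multicut of the same weight; since the integer Dist-LP is itself an exact IP formulation of multicut, the two integer optima agree.

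For Label-LP $\geq$ Dist-LP, I would take any feasible $(z,x)$ of Label-LP and define the bit-$i$ marginal $p_v^{(i)}=\sum_{\sigma:\sigma[i]=1} z_{v,\sigma}$ at every vertex $v$. The vertex and demand constraints force $p_{s_i}^{(i)}=1$ and $p_{s_j}^{(i)}=0$ for every demand edge $(s_i,s_j)\in E_H$. For each supply edge $e=(u,v)$, marginalizing the joint $z_e$ onto bit $i$ of each coordinate yields $x_e\ge \sum_{\sigma_1[i]=1,\sigma_2[i]=0} z_{e,\sigma_1\sigma_2}\ge p_u^{(i)}-p_v^{(i)}$, and telescoping along any $s_i$-$s_j$ path $p$ in $G$ then gives $\sum_{e\in p} x_e\ge p_{s_i}^{(i)}-p_{s_j}^{(i)}=1$. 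Hence $(x_e)$ is Dist-LP feasible at the same objective value.

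For Label-LP $\leq$ Dist-LP, given a Dist-LP feasible $(x_e)$ I let $d_i(v)$ denote the shortest $s_i$-to-$v$ distance under edge lengths $x_e$, draw a threshold $t$ uniformly on $[0,1)$, and set $\sigma_v(t)[i]=1$ iff $d_i(v)\le t$; taking $\mu_v$ to be the distribution of $\sigma_v(t)$, I put $z_{v,\sigma}=\mu_v(\{\sigma\})$. The vertex constraint then follows from $d_i(s_i)=0$ and the demand constraint from the Dist-LP path bound $d_i(s_j)\ge 1>t$. For each supply edge $e=(u,v)$, \emph{independently per edge}, I choose $z_e$ to be the coupling of $\mu_u,\mu_v$ that minimizes $\Pr[\sigma_u\not\le\sigma_v]$; by Strassen's theorem this minimum equals $\max_{U\text{ upper in }L}\bigl(\mu_u(U)-\mu_v(U)\bigr)$. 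The main technical step, which I expect to be the hardest part of the proof, is to show that this maximum is at most $x_e$: an arbitrary upper set is a union of principal upper sets $\{\sigma':\sigma'\ge\sigma\}$ over an antichain $A\subseteq L$, and since the single-threshold construction places $\mu_v$'s support on a chain, one obtains $\mu_v(U)=(1-m_v)^+$ with $m_v:=\min_{\sigma\in A}\max_{i\in\mathrm{supp}(\sigma)} d_i(v)$; the triangle inequality $d_i(v)\le d_i(u)+x_e$ applied at the minimizer at $u$ then yields $m_v-m_u\le x_e$, as required. A subtle point worth emphasizing is that a naive coupling using the \emph{same} threshold $t$ for both endpoints of an edge can fail, since the intervals $\{[d_i(u),d_i(v)):i\in[k]\}$ may be pairwise disjoint with total length exceeding $x_e$; it is essential that Label-LP permits distinct couplings on different edges, so that Strassen's duality can be applied locally.
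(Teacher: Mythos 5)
Your proposal is correct, and on the crucial \DistLP $\geq$ \LabelLP direction it takes a genuinely different route from the paper, even though both start from the same single-threshold vertex distributions (the paper's explicit formula $z_{u,\sigma^u_i} = d(s_{\pi^u(i+1)},u)-d(s_{\pi^u(i)},u)$ is exactly the law of your $\sigma_u(t)$). The paper bounds the optimal coupling cost on each edge $(u,v)$ from the \emph{primal} side: it exhibits an explicit greedy flow between the two chain-supported distributions on zero-cost pairs $\sigma_1\le\sigma_2$, tracks the indices $\ell,\ell',\ell''$ where the greedy process gets stuck, and computes directly that the routed zero-cost mass is $1-(d(s_h,v)-d(s_h,u))$ for the common terminal $h=\pi^u(\ell'+1)=\pi^v(\ell'')$. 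You instead bound it from the \emph{dual} side: invoke Strassen's (transportation-duality) theorem to reduce the optimum coupling cost to $\max_{U\text{ upper}}(\mu_u(U)-\mu_v(U))$, observe that because each $\mu_v$ is supported on a chain one has $\mu_v(U)=(1-m_v)^+$ with $m_v=\min_{\sigma\in A}\max_{i\in\mathrm{supp}\,\sigma} d_i(v)$, and finish with a one-line triangle-inequality estimate $m_v\le m_u + x_e$. Both are sound. The paper's approach is fully self-contained and constructive (it hands you the coupling variables $z_{e,\sigma_1\sigma_2}$ explicitly), which is also what the authors want later when converting between $\bz$ and $\bz'$ in the CSP reduction; yours is considerably shorter and more conceptual, replacing the case analysis of the greedy flow by a black-box appeal to LP duality. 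Your side remark that the same-threshold coupling can cost up to $k\,x_e$, so that the per-edge freedom to re-couple is essential, is also accurate and is precisely why the paper's Section~4 rounding scheme (which does use a single global threshold) only yields a $(k-1)$-approximation rather than an exact match to \DistLP.
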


%\begin{figure}[htb]
%  \centering
%\begin{boxedminipage}{0.5\linewidth}
%\vspace{-0.2in}
%\begin{align*}
%& \textbf{\LabelLP}\\
%\min \quad & \sum_{e \in E} w_e x_e\span \span\\
%& \sum_{\sigma \in L} z_{v,\sigma}& =1 & \qquad v\in V, \sigma \in L\\
%& z_{s_i,\sigma} & = 0& \qquad s_i \in V_H, \sigma \in L, \sigma[i] = 0\\
%& z_{s_j,\sigma} & =  0 & \qquad \sigma \in L \mbox{~s.t~} \sigma[i] = 1 \mbox{~and~} (s_i,s_j) \in E_H\\
%& z_{u,\sigma_1} - \sum_{\sigma_2\in L} z_{e,\sigma_1 \sigma_2}&=0 & \qquad e=(u,v) \in E_G, \sigma_1 \in L\\
%& z_{v,\sigma_2}  - \sum_{\sigma_1 \in L} z_{e,\sigma_1 \sigma_2}&= 0 &\qquad e=(u,v) \in E_G, \sigma_2 \in L\\
%& x_e - \sum_{\sigma_1,\sigma_2 \in L : \sigma_1 \not \leq \sigma_2} z_{e,\sigma_1\sigma_2} & = 0 & \qquad e \in E_G\\
%&  z_{v,\sigma},z_{e,\sigma_1\sigma_2} & \leq 1& \qquad  v \in V_G,  e \in E_G, \sigma,\sigma_1,\sigma_2 \in L\\
%&  z_{v,\sigma},z_{e,\sigma_1\sigma_2} & \geq 0& \qquad  v \in V_G,  e \in E_G, \sigma,\sigma_1,\sigma_2 \in L
%\end{align*}
%\end{boxedminipage}
%  \caption{\LabelLP for \DMulC}
%  \label{fig:labellp}
%\end{figure}

\begin{wrapfigure}{r}{0.45\textwidth}
\vspace{-1em}
\begin{boxedminipage}{0.45\textwidth}
\vspace{-0.2in}
\begin{align*}
& \qquad\qquad\qquad\qquad\textbf{\LabelLP}\\
&\min \quad \sum_{e \in E} w_e x_e\\
& \quad\sum_{\sigma \in L} z_{v,\sigma} =1  \qquad v\in V_G, \sigma \in L\\
& \qquad z_{s_i,\sigma} \quad= 0 \qquad s_i \in V_H, \sigma \in L, \sigma[i] = 0\span\\
& \qquad z_{s_j,\sigma} \quad =  0   \qquad \sigma \in L,\sigma[i] = 1,(s_i,s_j) \in E_H\\
& \sum_{\sigma_2\in L} z_{e,\sigma_1 \sigma_2}=z_{u,\sigma_1}   \quad e=(u,v) \in E_G, \sigma_1 \in L\\
&  \sum_{\sigma_1 \in L} z_{e,\sigma_1 \sigma_2}= z_{v,\sigma_2}   \quad e=(u,v) \in E_G, \sigma_2 \in L\\
& \sum_{\sigma_1,\sigma_2 \in L :\sigma_1 \not \leq \sigma_2} z_{e,\sigma_1\sigma_2}  = x_e \quad e \in E_G\\
&  z_{v,\sigma},z_{e,\sigma_1\sigma_2} \quad \leq 1 \quad v \in V_G,  e \in E_G, \sigma,\sigma_1,\sigma_2 \in L \\
&  z_{v,\sigma},z_{e,\sigma_1\sigma_2}  \quad \geq 0\quad  v \in V_G,  e \in E_G, \sigma,\sigma_1,\sigma_2 \in L
\end{align*}
\end{boxedminipage}
\caption{\LabelLP for \DMulC}
 \label{fig:labellp}
% \vspace{-4em}
 \end{wrapfigure}

The formulation has similarities to the earth-mover LP for metric
labeling considered in \cite{KleinbergT02,ChekuriKNZ04} except that the
``distance'' between labels is not a metric. Define a cost function $c
: L \times L \rightarrow \{0,1\}$ as follows: $c(\sigma,\sigma') = 0$
if $\sigma \le \sigma'$ and $1$ otherwise. In fact given the basic
labeling variables $z_{v,\sigma}$ the other variables are decided in a
min-cost solution.  We explain this formally.

\noindent {\bf Interpreting Variables $z_{e,\sigma_1\sigma_2}$ and
  $x_e$ as flow:} Let $e = (u,v)$ be an edge in $G$. Consider a
directed complete bipartite digraph $B_{uv}$ with vertex set
$\Gamma_u=\{u_\sigma \mid \sigma \in L\}$ and $\Gamma_v=\{v_\sigma
\mid \sigma \in L\}$.  We assign cost $c(\sigma,\sigma')$ on the edge
$(u_\sigma,v_{\sigma'})$.  We assign a supply of $z_{u,\sigma}$ on the
vertex $u_\sigma$ and a demand of $z_{v,\sigma}$ on the vertex
$v_\sigma$.  The values $z_{e,\sigma_1\sigma_2}$ can be thought of as
flow from $u_{\sigma_1}$ to $v_{\sigma_2}$ satisfying the following
properties: (i) total flow out of $u_{\sigma_1}$ must be equal to the
supply $z_{u,\sigma_1}$ ($z_{u,\sigma_1} = \sum_{\sigma_2 \in L}
z_{e,\sigma_1 \sigma_2}$) (ii) total flow into $v_{\sigma_2}$ must be
equal to $z_{v,\sigma_2}$ ($z_{v,\sigma_2} = \sum_{\sigma_1 \in L}
z_{e,\sigma_1\sigma_2}$) (iii) flow is non-negative
($z_{e,\sigma_1\sigma_2}\geq 0$). The cost of the flow according to
$c$ is precisely $x_e$ ($=\sum_{\sigma_1 \not \leq \sigma_2}
z_{e,\sigma_1\sigma_2}$).
In particular, given an assignment of the
values of the labeling variables $z_{u,\sigma}$, $\sigma \in L$ and
$z_{v, \sigma'}$, $\sigma' \in L$ which can be thought of as two
distributions on the labels, the smallest value of $x_e$ that can be
achieved is basically the min-cost flow in $B_{uv}$ with supplies
and demands defined by the two distributions. In other words the
other variables are completely determined by the distributions
if one wants a minimum cost solution.

In the sequel we use $\bar{z}_u$ to denote the vector of assignment
value $z_{u,\sigma}$, $\sigma \in L$ and refer to $\bar{z}_u$ as the
distribution corresponding to $u$.  We present the high-level
reduction between the solutions of the two LP's and refer to
Section~\ref{sec:appendix-distlp-labellp-equivalence} for the full
proof.

\noindent {\bf From \LabelLP to \DistLP:} Let $(\bx,z)$ be a feasible
solution to \LabelLP for an instance $(G,H)$. This solution 
satisfies the following two conditions: (i) If $(s_i,s_j) \in E_H$, then
$\sum_{\sigma \in \{0,1\}^k: \sigma[i] = 1} z_{s_i,\sigma} = 1$ and
$\sum_{\sigma \in \{0,1\}^k: \sigma[i] = 1} z_{s_j,\sigma} = 0$. (ii)
For an edge $e=(u,v) \in E_G$, and any terminal $s_i$, 
$x_e \geq \sum_{\sigma \in \{0,1\}^k:
  \sigma[i] = 1} z_{v,\sigma} - \sum_{\sigma \in \{0,1\}^k: \sigma[i]
  = 1} z_{u,\sigma}$.

Suppose $(s_i,s_j) \in E_H$ and $s_i,a_1,\dots,a_t,s_j$ is a path from
$s_i$ to $s_j$ in $G$. Then, plugging in the above inequalities for
the edges of the path, we get $\sum_{e \in p} x_e \geq 1$. Hence,
$\bx$ is a feasible solution to \DistLP and has same cost as
$(\bx,\bz)$.

\noindent {\bf From \DistLP to \LabelLP:} Let $\bx$ be a feaisble
solution to \DistLP.  We obtain a label assignment $\bz'$ as follows.
For a vertex $u \in V_G$, let $d(s_1,u) \leq d(s_2,u) \leq\dots \leq
d(s_k,u)$ (if not, rename the terminals accordingly). Here, $d(u,v)$
denotes the shortest path distance from $u$ to $v$ as per lengths
$x_e$.  For $i \in [0,k]$, let $\sigma_i = 0^i1^{k-i}$. Then, set
$z_{u,\sigma_0}' = d(s_1,u)$, $z_{u,\sigma_k}' = 1-d(s_k,u)$ and for
$i \in [1,k-1], z_{u,\sigma_i}' = d(s_{i+1},u) - d(s_i,u)$. For
$\sigma \not \in \{\sigma_0,\dots,\sigma_k\}, z_{u,\sigma}' = 0$. 
Once the label assignment for vertices is defined, we obtain the values of other
variables by considering each edge $e=(u,v)$ and using the min-cost flow
between $\bar{z}'_u$ and $\bar{z}'_v$ as decribed earlier.
%For edge, $e=(u,v)$, variables $z_{e,\sigma_1\sigma_2}$ are defined such
%that the flow is of minimum cost. 
In Section~\ref{sec:appendix-distlp-labellp-equivalence}, we prove that
flow has cost ($=x_e'$) at most $\max_{i \in [1,k]} d(s_i,v) -
d(s_i,u)$ which is upper bounded by $x_e$. Hence, cost of solution
$(\bx',\bz')$ to \LabelLP is upper bounded by cost of $\bx$.

\iffalse
\begin{lemma}\label{lem:distance_label_lp}
For any \DMulC instance $(G,H)$, given a feasible solution $\bx$ to Distance LP~\ref{}, we can find a solution $(\bx',z)$ of Labelling LP~\ref{fig:dmulc_labelling_lp} of cost atmost as much as the cost of $\bx$ and vice versa. 
\end{lemma}
\fi

\subsection{Min-CSP and \BasicLP} \label{sec:csp}
Min-CSP refers to a minimization version of constration satisfaction
problems. We set up the formalism borrowed from \cite{EneVW13}.  Let
$L$ denote the set of labels. A real-valued function $f: L^i
\rightarrow \mathbb{R}$ has \emph{arity} $i$.  Let $\Gamma = \{\psi
\mid \psi: L^i \rightarrow [0,1] \cup \{\infty\}, i \leq k\}$ be the
set of functions defined on $L$ with arity atmost $k$ and range $[0,1]
\cup \{\infty\}$. Let $\beta \subset \Gamma$ be a finite subset of
$\psi$. These functions are also refered to as \emph{predicates}. $k$
denotes the arity and $L$ denotes the alphabet of $\beta$. Each
$\beta$ induces an optimization problem \MinCSP.

\begin{definition}
An instance of Min-$\beta$-CSP consists of the following:
\begin{compactitem}
\item A vertex set $V$ and a set of tuples $T \subset \cup_{i=1}^k V^i$.
\item A predicate $\psi_t\in \beta$ for each tuple $t \in T$ where cardinality of $t$ matches the arity of $\psi_t$.
\item A non-negative weight function over the set of tuples, $w: T \rightarrow \mathbb{R}^+$.
\end{compactitem}
The goal is to find a label assignment $\ell: V \rightarrow L$ to minimize $\sum_{t = (v_{i_1},\dots,v_{i_j})\in T} w_t \cdot \psi_t(\ell(v_{i_1}),\dots,\ell(v_{i_j}))$.
\iffalse
\begin{equation*}
\sum_{t = (v_{i_1},\dots,v_{i_j})\in T} w_t \cdot \psi_t(\ell(v_{i_1}),\dots,\ell(v_{i_j}))
\end{equation*}
\fi
\end{definition}

Consider an integer programming formulation with following variables: for
each vertex $v \in V$ and label $\sigma \in L$, we have a variables
$z_{v,\sigma}$ which is $1$ if $v$ is assigned label $\sigma$. Also, for
each tuple $t = (v_{i_1},\dots,v_{i_j}) \in T$ and $\alpha \in
L^{|t|}$, we have a boolean variable $z_{t,\alpha}$ which is $1$ if
$v_{i_p}$ is labelled $\alpha_p$ for $p \in [1,j]$. These variables
satisfy following constraints:
\begin{compactitem}
\item Each vertex $v$ is labelled exactly once: $\sum_{\sigma \in L} z_{v,\sigma} = 1$.
\item Variables $z_{v,\sigma}$ and $z_{t,\alpha}$ are consistent. That is, if $v \in t$ is assigned label $\sigma$, then $z_{t,\alpha}$ must be zero if $\alpha$ does not assign label $\sigma$ to $v$. For every touple $t \in T, v = t[i], \sigma \in L$, we have: $z_{v,\sigma} = \sum_{\alpha \in L^{|t|}: \alpha[i] = \sigma} z_{t,\alpha}$. 
\end{compactitem}
The objective is minimize $\sum_{t \in T} w_t \cdot \sum_{\alpha \in L^{|t|}} z_{t,\alpha}\cdot \psi_t(\alpha)$.

\iffalse
The objective is to minimize the following function:
\begin{equation*}
LP(\cI) = \min \sum_{t \in T} w_t \cdot \sum_{\alpha \in L^{|t|}} z_{t,\alpha}\cdot \psi_t(\alpha)
\end{equation*}
\fi

\BasicLP is the LP relaxation obtained by allowing the variables to
take on values in $[0,1]$ and is described in the figure.
For instance $\cI$, $OPT(\cI)$ and $LP(\cI)$ refer to the 
fractional and integral optimum values respectively.

%\begin{figure}[htb]
%  \centering
%\begin{boxedminipage}{0.5\linewidth}
%\vspace{-0.2in}
%\begin{align*}
%& \quad\textbf{\BasicLP}\span\span\span\\
%& \min \quad \sum_{t \in T} w_t \cdot \sum_{\alpha \in L^{|t|}} z_{t,\alpha}\cdot \psi_t(\alpha)\span \span \\
%& \sum_{\sigma \in L} z_{v,\sigma}& =1 & \qquad v\in V\\
%& z_{v,\sigma} -\sum_{\alpha \in L^{|t|}: \alpha[i] = \sigma} z_{t,\alpha}& = 0 & \qquad t \in T, v = t[i], \sigma \in L\\
%& z_{v,\sigma},z_{t,\alpha} & \geq 0 & \qquad v \in V, \sigma \in L, t\in T, \alpha \in L^{|t|}\\
%& z_{v,\sigma},z_{t,\alpha} & \leq 1 & \qquad v \in V, \sigma \in L, t\in T, \alpha \in L^{|t|}
%\end{align*}
%\end{boxedminipage}
%  \caption{Basic LP for Min-$\beta$-CSP}
%  \label{fig:basic_lp}
%\end{figure}

\begin{wrapfigure}{r}{0.45\textwidth}
\vspace{-2em}
\begin{boxedminipage}{0.45\textwidth}
\vspace{-0.2in}
\begin{align*}
& \quad\textbf{\BasicLP}\\
& \min \quad \sum_{t \in T} w_t \cdot \sum_{\alpha \in L^{|t|}} z_{t,\alpha}\cdot \psi_t(\alpha) \\
& \qquad \sum_{\sigma \in L} z_{v,\sigma} \quad =1  \qquad v\in V\\
& \sum_{\alpha \in L^{|t|}: \alpha[i] = \sigma} z_{t,\alpha}= z_{v,\sigma} \quad t \in T, v = t[i], \sigma \in L\\
& z_{v,\sigma},z_{t,\alpha}\quad \geq 0  \quad v \in V, \sigma \in L, t\in T, \alpha \in L^{|t|}\\
& z_{v,\sigma},z_{t,\alpha} \quad \leq 1  \quad v \in V, \sigma \in L, t\in T, \alpha \in L^{|t|}
\end{align*}
\end{boxedminipage}
  \caption{Basic LP for Min-$\beta$-CSP}
  \label{fig:basic_lp}
  \vspace{-3em}
\end{wrapfigure}

A particular type of predicate termed \NAE (for \emph{not all equal}) is
important in subsequent discussion. 
\begin{definition}
  For $i \geq 2$, $NAE_i: L^i \rightarrow \{0,1\}$ be a predicate such
  that $NAE_i(\sigma_1,\dots,\sigma_i) = 0$ if $\sigma_1 = \sigma_2 = \dots =
  \sigma_i$ and $1$ otherwise.
\end{definition}

The following theorem shows that the hardness of \MinbHCSP coincides
with the integrality gap of \BasicLP if $\NAE_2$ is in $\beta$.
\begin{theorem} (Ene, Vonrak, Wu~\cite{EneVW13}) 
  \label{thm:csp-hardness}
  Suppose we have a
  \MinbCSP instance $\cI = (V,T,\Psi_t, t \in T, w)$ with fractional
  optimum (of Basic LP) $LP(\cI) = c$, integral optimum $OPT(\cI) = s$,
  and $\beta$ contains the predicate $NAE_2$. Then, assuming UGC, for
  any $\epsilon$, for some $\lambda>0$, it is NP-hard to distinguish
  between instances of \MinbCSP where the optimum value is at least
  $(s-\epsilon)\lambda$ and instances where the optimum value is less
  than $(c+\epsilon)\lambda$ .
\end{theorem}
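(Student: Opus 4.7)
The plan is to reduce from Unique Games and use the given $\MinbCSP$ instance $\cI$ itself as a \emph{gadget}, following the template introduced by Raghavendra for Max-CSPs and adapted to the Min-CSP / uniform-metric-labeling setting by Manokaran, Naor, Raghavendra and Schwartz~\cite{ManokaranNRS08}. I would build a new \MinbCSP instance $\cI^*$ in which each variable of a hard Unique Games instance is encoded by an $|L|^R$-dimensional ``long-code block'', and each edge of Unique Games is wrapped in a noisy copy of $\cI$, so that dictator labelings of the long codes achieve expected cost close to $(c+\epsilon)\lambda$ (completeness), while every assignment with cost below $(s-\epsilon)\lambda$ can be decoded into a Unique Games labeling satisfying far more constraints than UGC allows (soundness).

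Concretely, I would fix an optimal fractional solution $(z_{v,\sigma}, z_{t,\alpha})$ of \BasicLP on $\cI$; this gives joint distributions $\mu_t$ on $L^{|t|}$ with coordinate marginals $\mu_v$ that are mutually consistent. Given a UG instance with vertex set $U$, label set $[R]$, and permutation constraints $\pi_{uu'}:[R]\to[R]$, introduce for every pair $(u,v) \in U\times V(\cI)$ a function $f_{u,v}:[R]\to L$ encoded by $|L|^R$ indicator variables in $\cI^*$. For each UG edge $(u,u')$ and each tuple $t=(v_{i_1},\dots,v_{i_j})\in T$, add a family of $\psi_t$-constraints: pick $r\in[R]$ uniformly, draw $\alpha\sim\mu_t$, perturb $\alpha$ with small independent noise, and impose $\psi_t$ on the resulting values of $f_{u,v_{i_1}}(r),\dots,f_{u',v_{i_j}}(\pi_{uu'}(r))$. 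In addition, I would add $\mathrm{NAE}_2$-constraints between matched coordinates of different long-code blocks to pin the target marginals $\mu_v$ and to penalize constant ``folded'' functions; this is the step that essentially uses $\mathrm{NAE}_2\in\beta$.

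Completeness is standard: if the Unique Games instance is $(1-\epsilon)$-satisfiable by $\ell:U\to[R]$, set $f_{u,v}$ to be the dictator $r\mapsto\sigma$ with $\sigma\sim\mu_v$ read off at coordinate $r=\ell(u)$. By marginal consistency of $\mu_t$, each $\psi_t$-constraint across a satisfied UG edge has expected cost $\sum_\alpha z_{t,\alpha}\psi_t(\alpha)$, summing to $(c+O(\epsilon))\lambda$. For soundness, I would apply the Mossel--O'Donnell--Oleszkiewicz invariance principle to the multilinear Fourier expansion of the expected $\psi_t$-cost: unless some coordinate of the participating long codes has influence exceeding a threshold $\tau$, the Boolean expected cost is within $\epsilon$ of its Gaussian surrogate, and the Gaussian surrogate is lower-bounded by the integral optimum $s$ (the CSP analogue of Raghavendra's ``matching gadget'' lemma). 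Standard influence decoding then extracts a UG labeling that contradicts the assumed UG hardness, forcing the cost in the soundness case to be at least $(s-\epsilon)\lambda$.

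The main obstacle, and the point that distinguishes this setting from the uniform-metric-labeling case of~\cite{ManokaranNRS08}, is that the predicates $\psi_t$ range over $[0,1]\cup\{\infty\}$, so the associated cost functions are unbounded and the invariance principle does not apply directly. The standard remedy is to truncate each $\infty$ entry to a large finite value $M$, carry out the invariance analysis with parameters $(\tau,R,M)$ tuned so that the $M$-penalty dominates any slack introduced by noise and by the Gaussian-to-Boolean transfer, and then let $M\to\infty$. The delicate quantitative piece is choosing $M$, $\tau$, $R$ and the noise rate simultaneously with the UG soundness parameter $\delta$, while ensuring that the $\mathrm{NAE}_2$-based folding does not artificially inflate the integer cost $s$ after truncation; once those constants balance, the UGC hardness of Unique Games closes the reduction.
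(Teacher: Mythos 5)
The paper does not prove this theorem. It is stated as an imported result, attributed directly to Ene, Vondr\'ak, and Wu~\cite{EneVW13}, and used as a black box in Section~\ref{sec:DMulC_hardness} to derive Theorem~\ref{thm:hardness}. There is therefore no proof in the paper to compare your argument against.

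As a reconstruction of the \cite{EneVW13} argument, your sketch does follow the right high-level template: a long-code reduction from Unique Games in which the optimal \BasicLP{} solution on the gap instance supplies the target distributions for a dictatorship test, with completeness by dictator labelings and soundness by the invariance principle together with influence decoding. This is indeed the Raghavendra/Manokaran--Naor--Raghavendra--Schwartz framework that \cite{EneVW13} builds on. You also correctly flag the two genuine technical obstacles specific to this \MinCSP{} setting: the predicates take values in $[0,1]\cup\{\infty\}$, so the payoff functions are not bounded and one cannot invoke the invariance principle directly (some truncation or conditioning scheme is required), and the reduction must make use of the $\mathrm{NAE}_2\in\beta$ hypothesis in a non-decorative way.

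That said, your sketch is still a plan rather than a proof, and the points you flag as delicate are exactly the ones you leave unresolved. In particular: (i) you assert that after truncating $\infty$ to $M$ one can ``let $M\to\infty$,'' but you do not argue that the constraints satisfied with infinite cost in the integral optimum remain violated (rather than merely expensive) after the Gaussian transfer, nor that the truncation does not perturb the fractional optimum $c$; (ii) you say $\mathrm{NAE}_2$ is used to ``pin the target marginals $\mu_v$ and to penalize constant folded functions,'' but this is a description of what folding should do, not an argument that $\mathrm{NAE}_2$ suffices for it, and the theorem is stated specifically with $\mathrm{NAE}_2$ rather than with arbitrary equality or inequality gadgets, so the sufficiency needs justification; (iii) the soundness claim that ``the Gaussian surrogate is lower-bounded by the integral optimum $s$'' is the heart of the matter and is precisely the CSP analogue of a dictatorship-test soundness lemma that \cite{EneVW13} proves and you would need to prove too. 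None of these gaps makes the outline wrong, but they mean the sketch would need to be fleshed out before it could be checked against~\cite{EneVW13}. Since this paper simply cites the theorem, the practical resolution is to do the same.
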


\subsection{\DMulCH and an equivalent \MinbCSP Problem}\label{sec:csp-reduction}
In this section, we show that given a bipartite directed graph $H = (S
\cup T, E_H)$, we can construct a set of predicates $\beta_H$ such
that \DMulCH is equivalent to \MinbHCSP. The notion of equivalence is
as follows. We give a reduction from instances of \DMulCH to instances
of \MinbHCSP which preserves the cost of optimal integral solution and
in addition also preserves the cost of optimum fractional solution
to \LabelLP and \BasicLP. Similarly we give a reduction from \MinbHCSP
to \DMulCH which preserves the cost of both the integral and
fractional solutions. 

\iffalse
These reductions, when combined with the
equivalence of \DistLP and \LabelLP, prove that the worst-case 
flow-cut gap for \DMulCH is the same as the worst-case integrality
gap of \BasicLP for \MinbHCSP and also that the approximation hardness
of \DMulCH and \ We can then invoke 
Theorem~\ref{thm:csp-hardness} to prove Theorem

that there is one to one correspondance between fractional/integral
solutions to the \LabelLP for instances of \DMulCH and
fractional/integral solutions to \BasicLP on instances of . This in
turn implies that integrality gap of Label LP and Basic LP are equal
and hardness of approximation of \DMulCH and \MinbHCSP are
equal. Theorem~\ref{} combined with these facts shows that hardness of
approximation of \DMulCH is at least the integrality gap of Label LP
which in turn is equal to integrality gap of distance
LP(Theorem~\ref{}).
\fi

\iffalse
The label set for $\beta_H$ is similar to the label set
for \LabelLP. In \LabelLP, there is a constraint that each terminal
$a_i$ gets the correct label ($z_{a_i,\sigma} = 0$ if $\sigma[i]
=0$). In $\beta_H$, we will have a predicate $\psi_i$ such that
$\psi_i(\sigma) = 0$ if $\sigma$ is the correct label for $a_i$ and
$\infty$ otherwise. Another part of the Label LP is that an edge $e =
(u,v)$ is charged if $\ell(u) \not \leq \ell(v)$($x_e \geq
\sum_{\sigma_1,\sigma_2 \in L:\sigma_1\not \leq
  \sigma_2}z_{e,\sigma_1\sigma_2}$). In $\beta_H$, we have a predicate
$\psi(\sigma_1,\sigma_2)$ which is $1$ if $\sigma_1 \not \leq
\sigma_2$ and $0$ othereise. And so on.
\fi

The basic idea behind the construction of $\beta_H$ from $H$ is to
simulate the constraints of \LabelLP via the predicates of
$\beta_H$. In addition to setting up $\beta_H$ correctly, we also need
to preprocess the supply graph to prove the correctness of the
reductions. Let the bipartite demand graph $H$ be $(S \cup T, E_H)$
with $S = \{a_1,\dots,a_p\}$ and $T=\{b_1,\dots,b_q\}$ as the
bipartition. For $u \in S$ let $N_H^+(u) = \{ v \in T \mid (u,v) \in
E_H\}$ be the neighbors of $u$ in $H$. For $i \in [1,p]$, let $Y_i =
\{j \in [1,p] \mid N_H^+(a_j) \subseteq N_H^+(a_i)\}$. That is, if
$a_j \in Y_i$, the set of terminals that $a_j$ needs to be separated
from is a subset of the terminals that $a_i$ needs to be separated
from.  For $j \in [1,q]$ let $Z_j = \{i \in [1,p] \mid a_i b_j \not
\in E_H\}$. That is, $Z_j$ is the set of all terminals in $S$ the do
not need to be separated from $b_j$.
\vspace{-4mm}
\paragraph{Assumptions on supply graph:} We will assume that the supply
graph $G$ in the instances of \DMulCH satisfy the following properties.
\begin{compactitem}
\item Assumption I: $G$ may contain \emph{undirected} edges.  The
  meaning of this is that a path may include this edge in either
  direction. A simple and well-known gadget shown in
  Fig~\ref{fig:undirected_edge_gadget} shows that this is without loss of
  generality.
\item Assumption II: For $1 \le j \le q$ and $i \in Z_j$,
  there is an infinite weight edge from $a_i$ to $b_j$ in $G$. Moreover $b_j$ has no outgoing edge.
\item Assumption III: For $1 \le i \le p$, and $i' \in Y_i$, there is an infinite weight edge from $a_{i'}$ to 
  $a_i$ in $G$. Moreover $a_i$ has no other incoming edges.  
\end{compactitem}

\begin{wrapfigure}{r}{0.3\textwidth}
\centering
\includegraphics{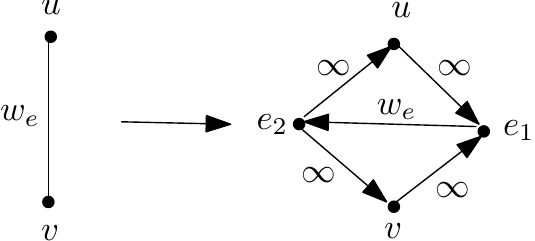}
\caption{Gadget to convert undirected edge/$NAE_2$ predicate to a directed graph}
\label{fig:undirected_edge_gadget}
\vspace{-1em}
\end{wrapfigure}

The preceding assumptions are to make the construction of $\beta_H$
and the subsequent proof of equivalence with \DMulCH somewhat more
transparent and technically easier.  Undirected edges allow us to use
the $\NAE_2$ predicate in $\beta_H$.  Assumption II and III simplify
the reachability information of terminals that needs to be kept track
of and this allows for a simpler label set definition and easier proof
of equivalence.

\DistLP easily generalizes to handle undirected edges; in examining
paths from $s_i$ to $t_i$ for a demand pair we allow an undirected
edge to be used in both directions. A more technical part is to
generalize \LabelLP to handle undirected edges in the supply graph.
For a directed edge $e$ recall that $x_e = \sum_{\sigma_1, \sigma_2
  \in L: \sigma_1 \not \leq \sigma_2} z_{e,\sigma_1\sigma_2}$. For an
undirected edge $e$ we set $x_e = \sum_{\sigma_1, \sigma_2 \in L: \sigma_1 \neq \sigma_2} z_{e,\sigma_1\sigma_2}$.
See Section~\ref{sec:appendix-theorem-label-basic-equivalence} for the
justification of the assumptions.

\vspace{-3mm}
\paragraph{Constructing $\beta_H$ from $H$:}
Next, we formally define $\beta_H$ for a bipartite graph $H = (S \cup
T, E_H)$ where $S=\{a_1,\ldots,a_p\}$ and $T =
\{t_1,\ldots,t_q\}$.. Recall the definitions of $Y_i$ for $1 \le i \le
p$ and $Z_j$ for $1 \le j \le q$ based on $E_H$.  Observe that no
vertex other than $b_j$ is reachable from $b_j$. And, since labels
encode the reachability from terminals, we can ignore the reachability
from $b_j$ and define $\beta_H$ with respect to terminal set $S$. For
$\sigma \in \{0,1\}^p$, let $J_{\sigma} = \{i \in [1,p] \mid \sigma[i]
= 1\}$

\begin{compactitem}
\item Alphabet (Label Set) $L = \{0,1\}^p$. Labels encode the list of
  $a_i$'s from which a vertex is reachable.
\item For $i \in [1,p]$, a unary predicate $\psiai$ encode the correct label for $a_i$ and is defined as follows: $\psiai(\sigma) = 0$ if $\Js= Y_i$,
otherwise $\psiai(\sigma) = \infty$.
\iffalse
\begin{equation*}
\psiai(\sigma) = \begin{cases}
0 &\Js= Y_i\\
\infty & otherwise
\end{cases}
\end{equation*}
\fi
\item For $j \in [1,q]$, predicate $\psibj$ that encodes the correct label
  for $b_j$. $\psibj(\sigma) = 0$ if $\Js= Z_j$, otherwise $\psibj(\sigma) = \infty$.
\iffalse
\begin{equation*}
\psibj(\sigma) = \begin{cases}
0 & \Js = Z_j\\
\infty & otherwise
\end{cases}
\end{equation*}
\fi
\item A binary predicate $\cC$ that encodes if a directed edge is cut or not.
If $\sigma_1 \le \sigma_2$ $\cC(\sigma_1,\sigma_2) = 0$, otherwise
$\cC(\sigma_1,\sigma_2) = 1$.
\iffalse
\begin{equation*}
\cC(\sigma_1,\sigma_2) = \begin{cases}
0 & \sigma_1\leq\sigma_2\\
1& otherwise
\end{cases}
\end{equation*}
\fi
\item A binary predicate $\NAE_2$ that encode if an undirected edge is cut or 
not. If $\sigma_1 = \sigma_2$  $\NAE_2(\sigma_1,\sigma_2) = 0$, otherwise
$\NAE_2(\sigma_1,\sigma_2) = 1$.
\iffalse
\begin{equation*}
NAE_2(\sigma_1,\sigma_2) = \begin{cases}
0 & \sigma_1=\sigma_2\\
1& otherwise
\end{cases}
\end{equation*}
\fi
\end{compactitem}

Thus $\beta_H = \{\cC,\NAE_2\} \cup \{\psiai \mid i \in [1,p]\}\cup
\{\psibj \mid j \in [1,q]\}$. \MinbHCSP has label set $L$, predicate
set $\beta_H$ and arity $2$.

The main technical theorem we prove is the following. We remark
that when we refer to \DMulCH we are referring to the problem
where the supply graph satisfies the assumptions I, II, III that
we outlined previously.

\begin{theorem}\label{thm:label_basic_equivalence}
  Let $H$ be a directed bipartite graph. 
  There is a polynomial time reduction that given 
  a \DMulCH instance $\Im=(G=(V_G,E_G,w_G:E_G \rightarrow
  R^+),H=(S\cup T,E_H))$, outputs a \MinbHCSP instance $\Ic =
  (V_C,T_C,\psitc:T_C\rightarrow \beta_H , w_{T_c}:T_C\rightarrow
  R^+)$ such that the following holds: given a solution $(\bx,\bz)$ of
  the Label LP for $\Im$, we can construct a solution $\bz'$ of Basic
  LP for $\Ic$ with cost at most that of $(\bx,\bz)$ and vice versa.
  More over, if $(\bx,\bz)$ is an integral solution, then $\bz'$ is
  also an integral solution and vice versa. A similar reduction
  exists from \MinbHCSP to \DMulCH.
\end{theorem}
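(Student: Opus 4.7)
The plan is to set up an essentially one-to-one correspondence between variables of \LabelLP for $\Im$ and variables of \BasicLP for the instance $\Ic$ constructed as follows. Take $V_C = V_G$. For each terminal $a_i \in S$ introduce a unary tuple on $a_i$ with predicate $\psiai$; similarly a unary tuple on each $b_j$ with $\psibj$ (weights on these unary tuples are immaterial since their value is $0$ or $\infty$). For each directed edge $e=(u,v) \in E_G$ introduce a binary tuple $(u,v)$ with predicate $\cC$ of weight $w_G(e)$, and for each undirected edge $uv \in E_G$ a binary tuple $(u,v)$ with predicate $\NAE_2$ of weight $w_G(e)$.

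Given a feasible $(\bx,\bz)$ for \LabelLP with finite cost, I would define $\bz'$ by $z'_{v,\sigma} = z_{v,\sigma}$, $z'_{t,\sigma}=z_{v,\sigma}$ for each unary tuple $t=(v)$, and $z'_{t,(\sigma_1,\sigma_2)} = z_{e,\sigma_1\sigma_2}$ for each binary tuple corresponding to an edge $e$. Feasibility of \BasicLP is immediate: the labeling constraint $\sum_\sigma z'_{v,\sigma}=1$ and the tuple-consistency constraints $\sum_{\alpha:\alpha[i]=\sigma} z'_{t,\alpha} = z'_{v,\sigma}$ translate literally into the \LabelLP constraints $\sum_\sigma z_{v,\sigma}=1$ and $\sum_{\sigma_2} z_{e,\sigma_1\sigma_2}=z_{u,\sigma_1}$. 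For cost preservation, the binary tuple contribution $w_t \sum_\alpha z'_{t,\alpha}\psi_t(\alpha)$ evaluates to exactly $w_e x_e$ by definition of $\cC$ and $\NAE_2$; it remains to show that the unary tuple contributions are $0$.

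The heart of the argument, and the main technical step, is to show that in any finite-cost \LabelLP solution, $z_{a_i,\sigma}>0$ forces $J_\sigma = Y_i$, and $z_{b_j,\sigma}>0$ forces $J_\sigma = Z_j$. For the lower containment $Y_i \subseteq J_\sigma$ at $a_i$, use Assumption III: the infinite-weight edge $(a_{i'},a_i)$ for $i' \in Y_i$ must satisfy $x_e=0$, hence all flow on this edge respects $\sigma_1 \le \sigma_2$, and combined with the constraint $\sigma_1[i']=1$ forced by the \LabelLP terminal constraint on $a_{i'}$, we obtain $\sigma[i']=1$ for every $\sigma$ with $z_{a_i,\sigma}>0$. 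For the upper containment $J_\sigma \subseteq Y_i$, use Assumption II: the infinite-weight edges $(a_i,b_j)$ for $i \in Z_j$ force flows with $\sigma_1 \le \sigma_2$, so any $\sigma_1$ supported on $a_i$ is dominated by some $\sigma_2$ supported on $b_j$, and the \LabelLP constraint $z_{b_j,\sigma_2}=0$ whenever $\sigma_2[k]=1$ with $(a_k,b_j)\in E_H$ yields $J_{\sigma_1}\subseteq Z_j$ for every such $j$; intersecting over $j \in \{j : i \in Z_j\}$ exactly gives $Y_i$. The analogous argument at $b_j$ uses only Assumption II plus what was just derived at the $a_i$'s. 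This matches the nonzero support of $\bz'$ with the finite regions of $\psiai, \psibj$, making the unary tuple contributions vanish.

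The reverse mapping $\bz' \mapsto (\bx,\bz)$ simply inverts these assignments, defining $x_e$ via the edge formula; feasibility of \LabelLP is verified by the symmetric translation of constraints, where the finite-cost assumption on $\bz'$ forces $z'_{a_i,\sigma}$ and $z'_{b_j,\sigma}$ to be supported on $J_\sigma = Y_i, Z_j$ respectively, which in particular implies the \LabelLP terminal constraints. Both mappings trivially preserve integrality. For the reverse reduction from \MinbHCSP to \DMulCH, I would take $V_G = V_C$, convert binary tuples with $\cC$ into directed edges and those with $\NAE_2$ into undirected edges (replaced by the gadget of Fig.~\ref{fig:undirected_edge_gadget}), and handle each unary tuple $(v)$ with $\psiai$ by identifying $v$ (via infinite-weight edges in both directions) with the unique terminal $a_i$ of $H$; then install the Assumption II/III infinite-weight edges. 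The same variable-level correspondence then gives the LP equivalence in this direction. The main obstacle throughout is the detailed verification of the $J_\sigma = Y_i/Z_j$ condition; all other steps are a routine rewriting of constraints.
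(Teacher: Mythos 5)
Your high-level plan is the same as the paper's: build the same $\Ic$ from $\Im$ (binary tuples for edges, unary tuples for terminals plus the Assumption~II/III infinite edges), establish a direct variable-level correspondence, and then prove that the unary tuples contribute zero cost via a lemma showing the support of $z_{a_i,\cdot}$ is pinned to $J_\sigma = Y_i$ (resp.\ $J_\sigma = Z_j$ for $b_j$). Your sketch of that support lemma is correct, and it does capture the paper's Lemma~\ref{lem:label_sol_properties}: Assumption~III gives the inclusion $Y_i \subseteq J_\sigma$, Assumption~II together with the terminal constraints of \LabelLP gives $J_\sigma \subseteq \bigcap_{j:\,i\in Z_j} Z_j = Y_i$, and the analogous argument works at $b_j$.

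However, there is a genuine gap in your variable-level correspondence. You write $z'_{v,\sigma}=z_{v,\sigma}$ and $z'_{t,(\sigma_1,\sigma_2)} = z_{e,\sigma_1\sigma_2}$ as if \LabelLP and \BasicLP share a label space. They do not: \LabelLP is defined over $L = \{0,1\}^k$ with $k = p+q$ (one bit per terminal of $H$, including the $b_j$'s), whereas $\beta_H$ is defined over $\{0,1\}^p$ (bits only for the $a_i$'s). Your assignment is therefore not well typed. The paper handles this by a \emph{projection} in the forward direction, $z'_{v,\sigma}=\sum_{\sigma'\in\{0,1\}^q} z_{v,\sigma\cdot\sigma'}$ and $z'_{t,\sigma_1\sigma_2}=\sum_{\sigma',\sigma''\in\{0,1\}^q} z_{e_t,\sigma_1\cdot\sigma'\,\sigma_2\cdot\sigma''}$, and by an \emph{extension} along the fixed suffix $\sigma_0=1^q$ in the backward direction. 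This is not merely bookkeeping: under the projection the edge contribution to \BasicLP is only an inequality $\sum_{\sigma_1\not\le\sigma_2} z'_{t,\sigma_1\sigma_2} \le x_{e_t}$ (a pair may be incomparable solely in its $\{0,1\}^q$ part), so the cost can genuinely drop; and for the extension one must check that the canonical suffix $\sigma_0$ satisfies the $b_j$-indexed constraints of \LabelLP. These steps are exactly what closes the loop, and your proposal neither states them nor argues why they would be unnecessary. The rest of your argument, including the treatment of undirected/$\NAE_2$ edges, the reverse reduction, and preservation of integrality, is consistent with the paper once this mismatch is repaired.
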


With the preceding theorem in place we can formally prove
Theorem~\ref{thm:hardness}

\begin{proofof}{Theorem~\ref{thm:hardness}} Let $\Im$ be some fixed
  instance of \DMulCH with flow-cut gap $\alpha_H$. From
  Theorem~\ref{thm:distlp-labellp} the integrality gap of \LabelLP on
  $I_m$ is also $\alpha_H$. Let $I_C$ be the \MinbHCSP instance
  obtained via the reduction guaranteed by
  Theorem~\ref{thm:label_basic_equivalence}. $\Im$ and $\Ic$ have the
  same integral cost. Fractional cost of \LabelLP for $\Im$ and and
  \BasicLP for $\Ic$ are also the same. Therefore the integrality gap
  of \BasicLP on $\Ic$ is also $\alpha_H$. Via Theorem~\ref{thm:csp-hardness},
  assuming UGC, \MinbHCSP is hard to approximate within a factor of
  $\alpha_H - \eps$ for any fixed $\eps > 0$.

  Theorem~\ref{thm:label_basic_equivalence} (the second part) implies
  that \MinbHCSP reduces to \DMulCH in an approximation preserving
  fashion. Thus, \DMulCH is at least has hard to approximate as
  \MinbHCSP which implies that assuming UGC,
  the hardness of \DMulCH is at least $\alpha_H - \eps$.
\end{proofof}

\BasicLP and \LabelLP are almost identical except for the fact
that \LabelLP is defined with label set $\{0,1\}^k$ where $k=p+q$ is the
total number of terminals whereas \BasicLP is defined with
label set $\{0,1\}^p$. However, since $b_i$'s do not have any outgoing
edge, reachability from $b_i$ is trivial. The formal proof of
equivalence is long and somewhat tedious. We need to consider a
reduction from \MinbHCSP to \DMulCH and vice-versa. In each direction
we need to establish the equivalence of the cost of \LabelLP and \BasicLP
for both integral and fractional settings. We will briefly sketch the reduction here. Full proofs can be found in Section~\ref{sec:appendix-theorem-label-basic-equivalence}. 

\noindent {\bf Reduction from \MinbHCSP to \DMulCH:} Given a \MinbHCSP
instance $I_C$, equivalent \DMulCH instance $I_M$ is constructed as
follows: (i) Vertex set of $I_M$ is same as that of $I_C$ (ii) For $i
\in [1,p]$, name one of the vertex $v \in V_C$ with constraint
$\psiai(v)$ as vertex $a_i$ (iii) For constraint $\cC(u,v)$, add a
directed edge $e_t=(u,v)$ and for constraint $\NAE_2(u,v)$, add an undirected
edge $e_t=uv$ (iv) Add edges among $a_i$'s and
$b_j$'s so as to satisfy Assumption II and III.

Next, we show how to convert a soltion for one LP to a solution to other LP while preserving cost.

\noindent \textbf{From \LabelLP to \BasicLP:} Let $(\bx,\bz)$ be a
feasible solution to \LabelLP for $I_M$. Then, a feasible solution
$\bz'$ to \BasicLP for $I_C$ is simply a projection of $\bz$ from
label space $\{0,1\}^{p+q}$ to label space $\{0,1\}^p$.
Formally, $z'$
is defined as follows (i) For $\sigma \in \{0,1\}^p, z_{v,\sigma}' =
\sum_{\sigma' \in \{0,1\}^q} z_{v,\sigma\cdot\sigma'}$ (ii) For
$\sigma_1,\sigma_2 \in \{0,1\}^p, z'_{t,\sigma_1\sigma_2} =
\sum_{\sigma',\sigma''\in \{0,1\}^q} z_{e_t,\sigma_1 \cdot \sigma' \sigma_2 \cdot
  \sigma''}$. We can argue that $\sum_{\sigma_1, \sigma_2 \in
  \{0,1\}^p: \sigma_1 \not \leq \sigma_2} z'_{t,\sigma_1\sigma_2} \leq
\sum_{\sigma_3,\sigma_4 \in \{0,1\}^{p+q}: \sigma_3 \not \leq
  \sigma_4} z_{e_t,\sigma_3\sigma_4}=x_e$. Hence, cost of solution
$\bz'$ is at most the cost of solution $(\bx,\bz)$.

\noindent \textbf{From \BasicLP to \LabelLP:} Let $\bz$ be a feasible
solution to \BasicLP for $I_C$. Let $\sigma_0 = 1^q$, then a feaisble
solution $(\bx',\bz')$ to \LabelLP can be defined as an extension of
$\bz$ along $\sigma_0$. Formally, $\bz'$ is defined as follows: 
For $\sigma\in \{0,1\}^p,\sigma' \in \{0,1\}^q,v\in V_C,
z'_{v,\sigma\cdot \sigma'} = z_{v,\sigma}$ if $\sigma' = \sigma_0$
and $0$ otherwise. Similarly, for $\sigma_1,\sigma_2 \in \{0,1\}^p,
\sigma',\sigma'' \in \{0,1\}^q$,
$z'_{t,\sigma_1\cdot\sigma'\sigma_2\cdot\sigma''} =
z_{e_t,\sigma_1\sigma_2}$ if $\sigma' = \sigma'' = \sigma_0$ and $0$
otherwise. We prove that $x_e' = \sum_{\sigma_1,\sigma_2 \in
  \{0,1\}^{p+q}: \sigma_1 \not \leq \sigma_2} z'_{t,\sigma_1\sigma_2}
= \sum_{\sigma_3,\sigma_4 \in \{0,1\}^p: \sigma_3 \not \leq \sigma_4}
z_{e_t,\sigma_3\sigma_4}$. Hence, cost of solution $(\bx',\bz')$ is
equal to cost of solution $\bz$.

%\subsection{Hardness for Non-bipartite Demand graphs}
%\input{non-bi-partite-hardness-3.4}

\section{Approximating \DMulC }
\label{sec:DMulC_approximation}
We describe the algorithm that proves Theorem~\ref{thm:dir_approximation}.
Let $G=(V,E)$ and $H=(V,F)$ be the supply and demand graph for a given
instance of \DMulC. We provide a {\em generic}
randomized rounding algorithm that given a fractional solution $\bx$
to LP~\ref{fig:mulc_lp} for an instance $(G,H)$ of \DMulC returns a
feasible solution; the rounding does not depend on $H$. We can
prove that the returned solution is a $(k-1)$-approximation with respect
to the fractional solution $\bx$ or show that
$H$ contains an induced $k$-matching exension.
This algorithm is inspired by our recent rounding
scheme for \DMCut \cite{ChekuriM16}. The formal analysis can be found in
Section~\ref{sec:appendix_dirapproximation}.

\begin{algorithm}
	\begin{algorithmic}[1]
		\STATE Given a feasible solution $\bx$ to LP~\ref{fig:mulc_lp}
		\STATE For all $u,v \in V$, compute $d(u,v)$= shortest path length from $u$ to $v$ according to lengths $x_e$
		\STATE For all $u,v \in V$, compute $d_1(u,v) =\max(0,1-\min_{v' \in V, uv'\in F} d(v,v'))$ 
		\STATE Pick $\theta \in (0,1)$ uniformly at random
		\STATE $B_u = \{v \in V \mid d_1(u,v) \leq \theta\}$
                \STATE $E' = \cup_{u \in V} \delta^+(B_u)$
		\STATE Return $E'$
	\end{algorithmic}
	\caption{Rounding for \DMulC}
	\label{alg:directed_cut_rounding_scheme}
\end{algorithm}

The only subtelity in understanding the algorithm is the definition of
$d_1$ which we briefly explain.  Let $\bx$ be a feasible solution to
LP~\ref{fig:mulc_lp}. For $u,v \in V$, define $d(u,v)$ to be the
shortest path length in $G$ from vertex $u$ to vertex $v$ using
lengths $x_e$. We also define another parameter $d_1(u,v)$ for each
pair of vertices $u,v \in V$. $d_1(u,v)$ is the minimum non-negative
number such that if we add an edge $uv$ in $G$ with $x_{uv} =
d_1(u,v)$ then $u$ is still seperated from all the vertices it has to
be seperated from. Formally, for $u,v \in V$, $d_1(u,v) :=
\max(0,1-\min_{v' \in V, uv'\in F} d(v,v'))$.  If for some vertex $u$,
there is no demand edge leaving $u$ in $F$ then we define $d_1(u,v) =
0$ for all $v \in V$. Next, we do a simple ball cut rounding around
all the vertices as per $d_1(u,v)$. We pick a number $\theta \in
(0,1)$ uniformly at random. For all $u \in V$, we consider $\theta$
radius ball around $u$ for all $u \in V$; $B_u = \{v \in V \mid
d_1(u,v) \leq \theta\}$. And then cut all the edges leaving the set
$B_u$; $\delta^+(B_u) = \{vv' \in E_G \mid v \in B_u, v' \not \in
B_u\}$. Note that it is crucial that the same $\theta$ is used for all
$u$.

\bibliographystyle{plain}
\bibliography{multicut}

\appendix
\section{Proof of Theorem~\ref{thm:distlp-labellp}}\label{sec:appendix-distlp-labellp-equivalence}

\smallskip
\noindent
{\bf From \LabelLP to \DistLP:} Let $(\bx,z)$ be a feasible solution
to \LabelLP for the given instance of $G,H$. Consider a soltuion
$\bx'$ to \DistLP where we set $x_e' = x_e$. We claim that $\bx'$ is a
feasible solution to \DistLP for $G,H$. That is, for $(s_i,s_j) \in
E_H$, and a path $p$ from $s_i$ to $s_j$, we have $\sum_{e \in p} x_e'
\geq 1$.

\begin{lemma}\label{lem:x_e_lower_bound}
  For any edge $e=(u,v) \in E_G$ and $i \in \{1,\ldots,k\}$, 
  $x_e \geq \sum_{\sigma \in  L, \sigma[i] = 1} z_{u,\sigma} - \sum_{\sigma \in L, \sigma[i] =
    1} z_{v, \sigma}$.
\end{lemma}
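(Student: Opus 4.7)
The plan is to expand both sides using the consistency constraints between the edge-labeling variables $z_{e,\sigma_1\sigma_2}$ and the vertex-labeling variables $z_{u,\sigma_1}, z_{v,\sigma_2}$, then cancel the matching contributions and identify the remainder as a subsum of the expression defining $x_e$.

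Concretely, I would start by applying the consistency constraints twice:
\begin{align*}
\sum_{\sigma:\sigma[i]=1} z_{u,\sigma} \;=\; \sum_{\sigma_1:\sigma_1[i]=1} \sum_{\sigma_2 \in L} z_{e,\sigma_1\sigma_2}, \qquad
\sum_{\sigma:\sigma[i]=1} z_{v,\sigma} \;=\; \sum_{\sigma_2:\sigma_2[i]=1} \sum_{\sigma_1 \in L} z_{e,\sigma_1\sigma_2}.
\end{align*}
Subtracting, the contributions from pairs $(\sigma_1,\sigma_2)$ with $\sigma_1[i]=\sigma_2[i]=1$ appear in both sums and cancel, leaving
\[
\sum_{\sigma:\sigma[i]=1} z_{u,\sigma} - \sum_{\sigma:\sigma[i]=1} z_{v,\sigma} \;=\; \sum_{\sigma_1[i]=1,\,\sigma_2[i]=0} z_{e,\sigma_1\sigma_2} \;-\; \sum_{\sigma_1[i]=0,\,\sigma_2[i]=1} z_{e,\sigma_1\sigma_2}.
\]

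Since $z_{e,\sigma_1\sigma_2} \ge 0$, the second sum is non-negative and can be dropped to obtain an upper bound by the first sum alone. Finally, every pair $(\sigma_1,\sigma_2)$ with $\sigma_1[i]=1$ and $\sigma_2[i]=0$ satisfies $\sigma_1 \not\le \sigma_2$ (coordinate $i$ witnesses the failure), so this sum is a subsum of $\sum_{\sigma_1 \not\le \sigma_2} z_{e,\sigma_1\sigma_2} = x_e$, yielding the claimed inequality.

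There is no real obstacle here; the entire argument is a direct manipulation of the defining equalities of \LabelLP together with non-negativity. The only thing worth being careful about is the direction of the coordinate-wise order $\le$ on $L$: we need $\sigma_1[i]=1,\sigma_2[i]=0$ to imply $\sigma_1 \not\le \sigma_2$, which is immediate from the definition $\sigma_1 \le \sigma_2 \iff \forall j\, \sigma_1[j] \le \sigma_2[j]$.
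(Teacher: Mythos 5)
Your proof is correct and is essentially the paper's argument, just written out algebraically rather than in the paper's "flow" language: the paper's lower bound on flow from $\Gamma_u^1$ to $\Gamma_v^2$ is exactly your sum over $\sigma_1[i]=1,\ \sigma_2[i]=0$, and the flow-conservation reasoning there is the same cancellation you perform using the consistency constraints.
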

\begin{proof}
  Recall the interpretation of variables $z_{e,\sigma_1\sigma_2}$ as
  flow from set $\Gamma_u = \{u_\sigma \mid \sigma \in L\}$ to
  $\Gamma_v = \{v_\sigma \mid \sigma \in L\}$. Consider the following
  partition of $\Gamma_u$ into $\Gamma_u^1 = \{u_\sigma \mid \sigma
  \in L, \sigma[i] = 1\}$ and $\Gamma_u^2 = \{u_\sigma \mid \sigma \in
  L, \sigma[i] = 0\}$. Similarly, consider the partition of $\Gamma_v$
  into $\Gamma_v^1$ and $\Gamma_v^2$. Amount of flow out of
  $\Gamma_u^1$ is equal to $\sum_{\sigma \in L, \sigma[i] = 1}
  z_{u,\sigma} $ and amount of flow coming into $\Gamma_v^1$ is equal
  to $\sum_{\sigma \in L, \sigma[i] = 1} z_{v,\sigma} $. Amount
  of flow from $\Gamma_u^1$ to $\Gamma_v^1$ is at most $\sum_{\sigma
    \in L, \sigma[i] = 1} z_{v,\sigma} $. Hence, flow from
  $\Gamma_u^1$ to $\Gamma_v^2$ is at least $\sum_{\sigma \in L,
    \sigma[i] = 1} z_{u,\sigma} -\sum_{\sigma \in L, \sigma[i] = 1}
  z_{v,\sigma}$. For $u_{\sigma_1} \in \Gamma_u^1, v_{\sigma_2} \in
  \Gamma_v^2$, we have $\sigma_1 \not \leq \sigma_2$ and hence,
\begin{equation*}
  x_e' = x_e =\sum_{\sigma_1,\sigma_2\in L: \sigma_1\not\leq \sigma_2} z_{e,\sigma_1\sigma_2} \geq \sum_{\sigma \in L: \sigma[i] = 1} z_{u,\sigma} -\sum_{\sigma \in L,:\sigma[i] = 1} z_{v,\sigma}.
\end{equation*}
\end{proof}

Let $(s_i,s_j) \in E_H$. We prove that for any path $p$ from $s_i$ to
$s_j$ in $G$ has $\sum_{e \in p} x_e' \geq 1$. Let the path $p$ be
$s_i,a_1,\dots,a_{\ell},s_j$. Then, by Lemma~\ref{lem:x_e_lower_bound}

\begin{eqnarray*}
x_{(s_i,a_1)} + \sum_{t= 1}^{\ell-1}x_{(a_{t},a_{t+1})} + x_{(a_\ell,s_j)} & \geq & \sum_{\sigma \in L: \sigma[i] = 1} \left((z_{s_i,\sigma} -  z_{a_1, \sigma}) + \sum_{t= 1}^{\ell-1} (z_{a_t,\sigma} -  z_{a_{t+1}, \sigma}) + (z_{a_\ell,\sigma} - z_{s_j,\sigma})\right)\\
& = & \sum_{\sigma \in L, \sigma[i] = 1} (z_{s_i,\sigma} - z_{s_j,\sigma})\\
\end{eqnarray*}

\LabelLP ensures that 
$z_{s_i,\sigma} = 0$ if
$\sigma[i] = 0$ and $z_{s_j,\sigma} = 0$ if $\sigma[i] = 1$. Hence,
$\sum_{\sigma \in L:\sigma[i] = 1} z_{s_i,\sigma} = 1$ and
$\sum_{\sigma \in L: \sigma[i] = 1} z_{s_j,\sigma} = 0$. Hence the
right hand side in the preceding inequality is $1$.

\smallskip
\noindent {\bf From \DistLP to \LabelLP:} Suppose $\bx$ is a feasible
solution to \DistLP for the given instance $G,H$.  We construct a
solution $(\bx',z)$ for \LabelLP such that $x_e' \leq x_e$ for all $e
\in E_G$. The edge lengths given by $\bx$ induce shortest path
distances in $G$ and we use $d(u,v)$ to denote this distance from $u$
to $v$. By adding dummy edges with zero cost as needed we can
assume that $d(u,v) \le 1$ for each vertex pair $(u,v)$.
With this assumption in place we have that for any edge $e=(u,v)$
and any terminal $s_i$, $d(s_i,v) \le  d(s_i,u) + x_e$; hence
$x_e \ge \max_{1 \le i \le k} (d(s_i,v) - d(s_i,u))$. We will in fact
prove that $x'_e \le \max_{1 \le i \le k} (d(s_i,v) - d(s_i,u))$.

We start by describing how to assign values to the
variables $z_{v,\sigma}$. Recall that these induce values to the
other variables if one is interested in a minimum cost solution.
%followed by $z_{e,\sigma_1\sigma_2}$ and finally describing
%$x'_e$. 
Let $d(u,v)$ denote the shortest distance from $u$ to $v$ in
$G$ as per lengths $x_e$.

For a vertex $u$, consider the permutation $\pi^u: \{1,\ldots,k\}
\rightarrow \{1,\ldots,k\}$ such that $d(s_{\pi^u(1)},u)\leq \dots
\leq d(s_{\pi^u(k)},u)$. In other words $\pi^u$ is an ordering of the
terminals based on distance to $u$ (breaking ties arbitrarily).
Define $\sigma_0^u,\dots,\sigma_{k}^u$ as follows:
\begin{equation*}
\sigma_i^u[j] = \begin{cases}1 & j \in \{\pi^u(1),\dots,\pi^u(i)\}\\
0 & j \not \in \{\pi^u(1),\dots,\pi^u(i)\}
\end{cases}
\end{equation*}
In the assignment above it is useful to interpret $\sigma^u_i$ 
as a set of indices of the terminals. Hence $\sigma^u_0$ corresponds to
$\emptyset$ and $\sigma^u_i$ to $\{\pi^{u}(1),\ldots,\pi^{u}(i)\}$.
Thus, these sets form a chain with.

The assignment of values to the variables $z_{u,\sigma}$, $\sigma \in L$
is done as follows:
\begin{equation*}
z_{u,\sigma} = \begin{cases}d(s_{\pi^u(1)},u) & \sigma = \sigma_0^u\\
d(s_{\pi^u(i+1)},u)-d(s_{\pi^u(i)},u) & \sigma = \sigma_i^u, 1 \leq i \leq k-1\\
1-d(s_{\pi^u(k)},u) & \sigma=\sigma_k^u\\
0 & \sigma \in L \setminus \{\sigma_0^u,\dots,\sigma_k^u\}
\end{cases}
\end{equation*}

\begin{lemma}\label{lem:assignment_variables_vertices}
$z_{u,\sigma}$ as defined above satisfy the following properties:
\begin{itemize}
\item $\forall u\in V_G, \sigma\in L, z_{u,\sigma} \geq 0$.
\item $\forall u\in V_G, \sum_{\sigma \in L} z_{u,\sigma} = 1$.
\item For $A \subseteq \{1,\dots,k\}$, define $\sigma_A\in L$ as: $\sigma_A[i] = 1$ for $i \in A$ and $0$ otherwise. Then, 
\begin{equation*}
\sum_{\sigma \geq \sigma_A} z_{u,\sigma} = 1 - \max_{i \in A} d(s_i,u)
\end{equation*}
\item Terminals are labelled correctly. That is, for 
  each $s_j$ and $\sigma\in L$, $z_{s_j,\sigma} = 0$ if $\sigma[j] = 0$.
\item If $(s_i,s_j) \in E_H$, then $z_{s_j,\sigma} = 0$ for $\sigma\in L$ such that $\sigma[i] = 1$. 
\end{itemize}
\end{lemma}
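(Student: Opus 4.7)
The plan is to establish the five properties essentially in the given order, with properties (4) and (5) following as easy consequences of property (3). For non-negativity (1), I would observe that the chain $\sigma_0^u \subsetneq \sigma_1^u \subsetneq \dots \subsetneq \sigma_k^u$ is built by sorting terminals by distance, so every increment $d(s_{\pi^u(i+1)},u) - d(s_{\pi^u(i)},u)$ is non-negative, and the first and last values are non-negative as well (using the assumption, noted in the excerpt, that by padding with zero-cost dummy edges we may assume $d(s_i,u) \le 1$ for all $i$ and $u$). For the total mass (2), the sum telescopes: $d(s_{\pi^u(1)},u) + \sum_{i=1}^{k-1}\bigl(d(s_{\pi^u(i+1)},u) - d(s_{\pi^u(i)},u)\bigr) + (1 - d(s_{\pi^u(k)},u)) = 1$, and all other $z_{u,\sigma}$ are zero by definition.

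For the tail-sum identity (3), the key combinatorial step is to identify precisely which elements of the chain are $\ge \sigma_A$. Since $\sigma_i^u$ is the indicator vector of $\{\pi^u(1),\dots,\pi^u(i)\}$, we have $\sigma_i^u \ge \sigma_A$ iff $A \subseteq \{\pi^u(1),\dots,\pi^u(i)\}$ iff $i \ge i^\star$, where $i^\star := \max\{(\pi^u)^{-1}(j) : j \in A\}$. By construction, $d(s_{\pi^u(i^\star)},u) = \max_{j \in A} d(s_j,u)$. Summing the relevant $z$-values telescopes cleanly:
\[
\sum_{i=i^\star}^{k} z_{u,\sigma_i^u} = \bigl(d(s_{\pi^u(i^\star+1)},u)-d(s_{\pi^u(i^\star)},u)\bigr) + \dots + \bigl(1 - d(s_{\pi^u(k)},u)\bigr) = 1 - d(s_{\pi^u(i^\star)},u),
\]
which equals $1 - \max_{j \in A} d(s_j,u)$ as required. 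The boundary case $A = \emptyset$ is handled by interpreting the max as $0$, recovering property (2).

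Properties (4) and (5) are then easy corollaries of (3) combined with non-negativity. For (4), apply (3) to $u = s_j$ and $A = \{j\}$: since $d(s_j,s_j) = 0$, the sum $\sum_{\sigma:\sigma[j]=1} z_{s_j,\sigma}$ equals $1$, which together with (2) forces $\sum_{\sigma:\sigma[j]=0} z_{s_j,\sigma} = 0$, so every individual term vanishes. For (5), suppose $(s_i,s_j) \in E_H$; by \DistLP feasibility applied to any $s_i$-to-$s_j$ path we get $d(s_i,s_j) \ge 1$. Applying (3) with $u = s_j$ and $A = \{i\}$ yields $\sum_{\sigma:\sigma[i]=1} z_{s_j,\sigma} = 1 - d(s_i,s_j) \le 0$, and non-negativity again forces each term to be zero. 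The only real work is in (3); once that telescoping identity is in hand, the remaining items are one-line consequences.
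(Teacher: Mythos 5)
Your proof is correct and follows essentially the same route as the paper's: the same telescoping argument for (2) and (3) via the index $i^\star = \argmax_{i:\pi^u(i)\in A} d(s_{\pi^u(i)},u)$ (the paper calls it $j$), and (4), (5) as one-line consequences of (3) applied to $A=\{j\}$ and $A=\{i\}$ respectively. Your treatment of (1) is slightly more explicit than the paper's (which just says ``true by definition''), correctly noting that the non-negativity of the last term $1 - d(s_{\pi^u(k)},u)$ rests on the stated assumption that $d(u,v)\le 1$ after padding with zero-cost dummy edges.
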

\begin{proof}
For $u \in V_G$, consider $\sigma_0^u,\sigma_1^u,\dots,\sigma^u_k$ as defined above. 
\begin{itemize}
\item $z_{u,\sigma} \geq 0$ is true by definition. 
\item By definition, $z_{u,\sigma} = 0$ if $\sigma \not \in \{\sigma_0^u,\dots,\sigma^u_k\}$. Hence, 
\begin{eqnarray*}
\sum_{\sigma \in L} z_{u,\sigma} & = & \sum_{i =0}^k z_{u,\sigma_i^u}\\
& = & d(s_{\pi^u(1)},u) + \sum_{i=1}^{k-1} d(s_{\pi^u(i+1)},u) - d(s_{\pi^u(i)},u) + 1-d(s_{\pi(k)},u)\\
& = & 1
\end{eqnarray*}
\item Let $j = \argmax_{i:\pi^u(i) \in A} d(s_{\pi^u(i)},u)$. Then, $\sigma^u_j,\dots,\sigma^u_k \geq \sigma_A$ and $\sigma^u_0,\dots,\sigma^u_{j-1} \not \geq \sigma_A$. Hence, 
\begin{eqnarray*}
\sum_{\sigma \geq \sigma_A} z_{u,\sigma} & = &\sum_{i=j}^k z_{u,\sigma^u_i}\\
& = & \sum_{i=j}^{k-1} d(s_{\pi^u(i+1)},u)-d(s_{\pi^u(i)},u) + 1-d(s_{\pi^u(k)},u)\\
& = & 1-d(s_{\pi^u(j)},u)\\
& = & 1-\max_{i:\pi^u(i)\in A} d(s_{\pi^u(i)},u)\\
& = & 1-\max_{i \in A} d(s_i,u)
\end{eqnarray*}
\item By definition of distance, $d(s_j,s_j) = 0$. Consider $A = \{j\}$. Applying the result from previous part, we get $\sum_{\sigma \geq \sigma_A} z_{s_j,\sigma} = 1-0=1$. Hence, $z_{s_j,\sigma} = 0$ if $\sigma \not \geq \sigma_A$. Equivalently speaking, $z_{s_j,\sigma} = 0$ if $\sigma[j] =0$. 
\item Let $(s_i,s_j) \in E_H$. Then, for the solution $\bx$ to be feasible, we must have $d(s_i,s_j) = 1$. Consider $A = \{i\}$. Then, using result from previous part, we get $\sum_{\sigma \geq \sigma_A} z_{s_j,\sigma} = 1 - 1 = 0$. Hence, $z_{s_j,\sigma} = 0$ if $\sigma \geq \sigma_A$. Equivalently speaking, $z_{s_j,\sigma} = 0$ if $\sigma[i] = 1$.
\end{itemize}
\end{proof}

Consider an edge $e=(u,v)$. Recall that once the distributions
of $\bar{z}_u$ and $\bar{z}_v$ are fixed then $x'_e$ is simply the 
min-cost flow between these two distributions in the 
digraph $B_{uv}$ with costs given by $c$. Our goal is to show that
this cost is at most $\max\{0,\max_{i} (d(s_i,v) - d(s_i,u))\}$.
Suppose we define a partial flow between $\bar{z}_u$ and $\bar{z}_v$
on zero-cost edges such that the total amount of this flow is $\gamma$
where $\gamma \in [0,1]$. Then it is easy to see that we can complete
this flow to achieve a cost of $(1-\gamma)$. This is because the
graph is a complete bipartite graph and costs are either $0$ or $1$
and $\bar{z}_u$ and $\bar{z}_v$ are distributions that have a total
of eactly one unit of mass on each side. 

Next, we define a partial flow of zero cost between $\bar{z}_u$ and
$\bar{z}_v$ by setting some variables $z_{e,\sigma_1\sigma_2}$ in a
greedy fashion as follows. Initially all flow values are zero.  For $i
= 0$ to $k$ in order we consider the vertex $u_{\sigma^u_i}$ with
supply $z_{u,\sigma^u_i}$. Our goal is to send as much flow as possible
from this vertex on zero-cost edges to demand vertices $v_{\sigma^v_j}$
which requires that $\sigma^u_i \le \sigma^v_j$. We maintain
the invariants that we do not exceed supply or demand in this process.
While trying to send flow out of  $u_{\sigma^u_i}$ we again use a greedy
process; if there are $j < j'$ such that
$\sigma^v_j$ and $\sigma^v_{j'}$ are both eligible to receive flow
on zero-cost edges and have capacity left, we use $j$ first; recall
that $\sigma^v_j$ corresponds to a subset of $\sigma^v_{j'}$. 

Let $z_{e,\sigma_1\sigma_2}$ be the partial flow created by the
algorithm. 

\begin{lemma}
  The total flow sent by the greedy algorithm described is at least 
  $1 -  \max\{0,\max_h (d(s_h,v) - d(s_h,u))\}$.
\end{lemma}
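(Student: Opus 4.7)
The plan is to prove the stronger claim that the \emph{maximum} value of a zero-cost transport from $\bar z_u$ to $\bar z_v$ in $B_{uv}$ is at least $1-\delta$, where $\delta := \max\{0,\max_h (d(s_h,v)-d(s_h,u))\}$, and then to observe that the greedy procedure is optimal for this max-flow, being the classical north-west corner rule on a staircase bipartite graph.

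First I will exhibit an explicit feasible zero-cost flow of value $1-\delta$. The key observation is that $\bar z_u$ is the push-forward of Lebesgue measure on $[0,1]$ under the map $\theta \mapsto L_u(\theta)$, where $L_u(\theta)\in\{0,1\}^k$ has $h$-th coordinate $\mathbf{1}[d(s_h,u)\le\theta]$: indeed $L_u(\theta)=\sigma^u_i$ exactly on the interval $[d(s_{\pi^u(i)},u), d(s_{\pi^u(i+1)},u))$ (using the conventions $d(s_{\pi^u(0)},u)=0$ and $d(s_{\pi^u(k+1)},u)=1$), whose length is $z_{u,\sigma^u_i}$; the analogous statement for $\bar z_v$ uses $L_v$. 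I then define the transport plan by coupling with a $\delta$-shift: for each $\theta\in[0,1-\delta]$ send mass $d\theta$ from label $L_u(\theta)$ to label $L_v(\theta+\delta)$. Whenever $h$ contributes a one to $L_u(\theta)$ we have $d(s_h,v)\le d(s_h,u)+\delta\le\theta+\delta$, so $h$ also contributes a one to $L_v(\theta+\delta)$; thus $L_u(\theta)\le L_v(\theta+\delta)$ coordinatewise and every matched pair lies on a zero-cost edge. The marginals are dominated by $\bar z_u,\bar z_v$: the mass assigned to $\sigma^u_i$ equals the length of the subinterval of $[0,1-\delta]$ on which $L_u=\sigma^u_i$, at most the full level-interval length $z_{u,\sigma^u_i}$; the symmetric check over $[\delta,1]$ handles $\bar z_v$. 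The total transported mass is $1-\delta$.

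Next I will argue that the greedy attains this maximum. Setting $j^*_i:=\min\{j:\sigma^u_i\le\sigma^v_j\}$, the chain inclusions $\sigma^u_0\le\sigma^u_1\le\cdots$ give $j^*_0\le j^*_1\le\cdots$ and the Monge-type property that if $i<i'$, $j<j'$ and $\sigma^u_{i'}\le\sigma^v_j$, then all four pairs $(i,j)$, $(i,j')$, $(i',j)$, $(i',j')$ are zero-cost (using $\sigma^u_i\le\sigma^u_{i'}\le\sigma^v_j\le\sigma^v_{j'}$). A standard four-point exchange argument then transforms any optimal transport $f^*$ into the greedy's transport with no change in total value: at the first pair $(i,j^*_i)$ where greedy places more mass than $f^*$, the $f^*$-excess row-$i$ supply must flow to some column $j'>j^*_i$ while the $f^*$-missing $j^*_i$-demand must come from some row $i'>i$, and the Monge property makes the swap that moves mass onto $(i,j^*_i)$ and $(i',j')$ feasible. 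Iterating converts $f^*$ into greedy, so greedy is optimal.

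I expect the main obstacle to be the marginal-domination bookkeeping of the first step: one must carefully handle the boundary where a level-interval of $L_u$ only partially lies in $[0,1-\delta]$ (symmetrically for $L_v$ on $[\delta,1]$) and verify that the $\delta$-shift never overshoots the mass available at any single label. Both come down to elementary interval arithmetic given the chain structure, but they must be written out with care.
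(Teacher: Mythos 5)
Your approach is correct in outline but genuinely different from the paper's. The paper directly ``unrolls'' the greedy: it identifies indices $\ell$ (last column unsaturated), $\ell'$ (last row whose zero-cost columns reach $\ell$), $\ell''$ (first column reachable by row $\ell'+1$), proves $\pi^u(\ell'+1)=\pi^v(\ell'')=:h$, and then computes the greedy flow exactly as $\sum_{i\le\ell'}z_{u,\sigma^u_i}+\sum_{j\ge\ell''}z_{v,\sigma^v_j}=1-(d(s_h,v)-d(s_h,u))$. You instead (i) exhibit an explicit feasible zero-cost flow of value $1-\delta$ via the $\delta$-shift coupling of the level maps $L_u,L_v$, and (ii) assert greedy optimality, so greedy $\ge 1-\delta$. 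Step (i) is correct and quite illuminating: it isolates where the bound $\max_h(d(s_h,v)-d(s_h,u))$ comes from, independently of the greedy's mechanics, and the marginal-domination check you flag as the likely obstacle is in fact routine. The direct analysis in the paper buys something you don't get: it shows greedy's flow is \emph{exactly} $1-(d(s_h,v)-d(s_h,u))$ for a specific $h$, and the decomposition $\sum_{i<\ell'+1}\psi(i)+\sum_{j\ge j^*_{\ell'+1}}\phi(j)$ exhibits a \emph{matching cut}, which is the cleanest certificate that greedy maximizes the zero-cost flow.

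Your step (ii), however, is more hand-waved than you acknowledge, and this is where the real work in your route lies (not, as you predict, in step (i)). First, the $0/1$ cost matrix $c(i,j)=\mathbf{1}[\sigma^u_i\not\le\sigma^v_j]=\mathbf{1}[j<j^*_i]$ is \emph{not} Monge on the full bipartite graph: for $i<i'$, take $j<j^*_i\le j'<j^*_{i'}$, then $c(i,j)+c(i',j')=2>1=c(i,j')+c(i',j)$. So ``north-west corner rule'' intuitions do not transfer wholesale; the weaker ``Monge-type'' property you state (restricted to the case $\sigma^u_{i'}\le\sigma^v_j$) is the correct ingredient, but it only governs exchanges entirely inside the staircase. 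Second, the exchange step as written assumes the $f^*$-deficit at $(i,j^*_i)$ must be balanced by some row $i'>i$; but since this is a max-flow problem (marginals are upper bounds, not equalities), $f^*$ may simply undersaturate row $i$ or column $j^*_i$. Those cases require a separate (easier) two-point move, and the case where both are unsaturated is excluded only because $f^*$ is optimal. Third, you consider only pairs of the form $(i,j^*_i)$; the lex-first discrepancy between $g$ and $f^*$ may occur at a later column in row $i$, and one must first argue that at the lex-first discrepancy greedy places \emph{more} mass (this follows from the first-discrepancy property, but it is not free). None of these are fatal---with the case analysis sketched here the exchange argument does go through and greedy is indeed optimal on the staircase---but they need to be written out, and once written out the exchange route is not shorter than the paper's direct computation or than exhibiting the cut at $a=\ell'+1$.
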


Assuming the lemma we are done because the zero-cost flow is
at least $1 - x_e$ and hence total cost of the flow is at most $x_e$.
This proves that $x'_e \le x_e$ as desired. We now prove the lemma.

Consider the greedy flow. Let $\ell$ be the maximum integer such that
$v_{\sigma^v_\ell}$ is not saturated by the flow.  If no such $\ell$
exists then the greedy algorithm has sent a total flow of one unit on
zero-cost edges and hence $x'_e = 0$.  Thus, we can assume $\ell$
exists. Moreover, in this case we can also assume that $\ell < k$ for
if $\ell = k$ the greedy algorithm can send more flow since
$\sigma^u_i \le \sigma^v_k$ for all $i$.  Let $\ell'$ be the maximum
integer such that $\sigma^u_{\ell'} \le \sigma^v_\ell$.  Such an
$\ell'$ exists since $\ell'=0$ is a candidate (corresponding to the
empty set). Moreover, $\ell' < k$ since $\sigma^u_k \not \leq \sigma^v_\ell$ 
since $\ell < k$. Let $\ell''$ be the minimum integer such that
$\sigma^u_{\ell'+1} \le \sigma^v_{\ell''}$. $\ell''$ exists because
$k$ is a candidate for it.

\begin{claim}
 $\pi_{\ell'+1}^u = \pi_{\ell''}^v$.
\end{claim}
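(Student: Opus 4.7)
The plan is to unpack the set-theoretic meaning of the inequalities $\sigma^u_i \le \sigma^v_j$ and then read the claim off directly. Recall that by construction, $\sigma^u_i$ is the characteristic vector of the set $U_i := \{\pi^u(1),\dots,\pi^u(i)\}$, and similarly $\sigma^v_j$ is the characteristic vector of $V_j := \{\pi^v(1),\dots,\pi^v(j)\}$. Under this identification, $\sigma^u_i \le \sigma^v_j$ is exactly the statement $U_i \subseteq V_j$. I will work entirely in this set-theoretic language.

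First I would extract information from the definition of $\ell'$. By maximality, $U_{\ell'} \subseteq V_\ell$ but $U_{\ell'+1} \not\subseteq V_\ell$. Since $U_{\ell'+1} = U_{\ell'} \cup \{\pi^u(\ell'+1)\}$, the only way the containment can fail is that $\pi^u(\ell'+1) \notin V_\ell$. Equivalently, if $j$ denotes the (unique) position of $\pi^u(\ell'+1)$ in the permutation $\pi^v$, then $j > \ell$.

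Next I would analyze the minimum $\ell''$ with $U_{\ell'+1} \subseteq V_{\ell''}$. In general, the minimum index $m$ such that a given finite set $A$ is contained in $V_m$ is exactly $\max\{\, (\pi^v)^{-1}(a) : a \in A\,\}$, i.e., the largest position that any element of $A$ occupies in $\pi^v$. Applying this to $A = U_{\ell'+1}$, and using that every element of $U_{\ell'} \subseteq V_\ell$ occupies a position $\le \ell$ in $\pi^v$ while $\pi^u(\ell'+1)$ occupies position $j > \ell$, the maximum is attained by $\pi^u(\ell'+1)$ itself. Hence $\ell'' = j$, and therefore $\pi^v(\ell'') = \pi^v(j) = \pi^u(\ell'+1)$, which is exactly the claim.

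There is no real obstacle here: the entire statement is a bookkeeping consequence of the fact that the sets $U_0 \subsetneq U_1 \subsetneq \dots \subsetneq U_k$ and $V_0 \subsetneq V_1 \subsetneq \dots \subsetneq V_k$ are nested chains built by adding one terminal at a time in the distance order from $u$ and from $v$ respectively. The one point worth being slightly careful about is making sure $\ell'+1 \le k$ (so that $\pi^u(\ell'+1)$ is defined) and $\ell'' \le k$ (so that $\pi^v(\ell'')$ is defined); both are guaranteed by the earlier observations that $\ell' < k$ and $\ell < k$, together with the fact that $U_{\ell'+1} \subseteq V_k$ always holds, so a valid $\ell''$ exists in $\{1,\dots,k\}$.
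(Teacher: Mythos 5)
Your proof is correct and takes essentially the same approach as the paper's: both arguments reduce the claim to the observation that the maximality of $\ell'$ forces $\pi^u(\ell'+1) \notin V_\ell$, and the definition of $\ell''$ then pins $\pi^u(\ell'+1)$ down as the new element appearing at position $\ell''$ in the chain $V_0 \subsetneq \cdots \subsetneq V_k$. The only cosmetic difference is that the paper phrases the last step as a sandwich ($\pi^u(\ell'+1) \in V_{\ell''} \setminus V_{\ell''-1} = \{\pi^v(\ell'')\}$, using minimality of $\ell''$ to get $\sigma^u_{\ell'+1} \not\le \sigma^v_{\ell''-1}$), whereas you compute $\ell''$ directly as $\max_{a \in U_{\ell'+1}} (\pi^v)^{-1}(a)$ and show this maximum is attained uniquely at $\pi^u(\ell'+1)$; these are the same fact expressed two ways.
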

\begin{proof}
By choice of $\ell, \ell', \ell''$ we have $\sigma^u_{\ell'} \le \sigma^v_{\ell}$
and $\sigma^u_{\ell'+1} \not \leq \sigma^v_{\ell}$ while
$\sigma^u_{\ell'+1} \leq \sigma^v_{\ell''}$. Thus $\ell'' \ge \ell+1$ and
$\sigma^u_{\ell'} \le \sigma^v_{\ell} \le \sigma^v_{\ell''-1}$.
Moreover, since $\ell''$ is chosen to smallest, $\sigma^u_{\ell'+1} \not \leq
\sigma^v_{\ell''-1}$. These facts imply the desired claim.
\end{proof}

We now claim several properties of the partial flow and justify them.
\begin{compactitem}
  \item $\forall i \in [0,\ell']$, $j \in [\ell+1,k]$, $z_{e,\sigma_i^u\sigma_j^v}  = 0$. This follows from the fact that the greedy algorithm did not 
    saturate $z_{v,\sigma^v_\ell}$.
  \item $\forall i \in [\ell'+1,k]$, $j \in [0,\ell''-1]$, $z_{e,\sigma_i^u\sigma_j^v} = 0$. From the definition of $\ell',\ell''$, this is not a zero cost edge.
    \item $\forall i \in [0,\ell']$, $\sum_{j = 0}^\ell
    z_{e,\sigma_i^u\sigma_j^v} = z_{u,\sigma_i^u}$. From definition of
    $\ell'$, for each $i \le \ell'$, there is a zero-cost edge from
    $u_{\sigma^u_i}$ to $v_{\sigma^v_\ell}$. Since the greedy
    algorithm did not saturate $v_{\sigma^v_\ell}$, it means that
    $u_{\sigma^u_i}$ is saturated and sends flow only to
    $v_{\sigma^v_1}, \ldots v_{\sigma^v_\ell}$.
  \item $\forall j \in [\ell'',k]$, $\sum_{i=\ell'+1}^k
    z_{e,\sigma_i^u\sigma_j^v} = z_{v,\sigma_j^v}$. By definition of
    $\ell$, for $j \geq \ell+1$ we have the property that
    $v_{\sigma_j^v}$ is saturated. As we argued above, for $i \in [\ell'+1,k], j \in [0,\ell''-1]$ we have $z_{e,\sigma_i^u\sigma_j^v} = 0$. Hence, for
    $j \geq \ell'' \geq \ell+1$, we have $\sum_{i=\ell'+1}^k
    z_{e,\sigma_i^u\sigma_j^v} = \sum_{i=0}^k
    z_{e,\sigma_i^u\sigma_j^v} = z_{v,\sigma_j^v}$.
\end{compactitem}

From the preceding claim we see that the total value of the partial flow
can be summed up as 
$$\sum_{\sigma_1,\sigma_2\in L} z_{e,\sigma_1\sigma_2} =  \sum_{i=0}^{\ell'} z_{u,\sigma_i^u}+\sum_{j=\ell''+1}^k z_{v,\sigma_j^v}.$$

Moreover, by construction of $\bar{z}_u$ and $\bar{z}_v$,
$$\sum_{i=0}^{\ell'} z_{u,\sigma_i^u} = d(s_{\pi^u(\ell'+1)},u)$$
and
$$\sum_{j=\ell''+1}^k z_{v,\sigma_j^v} = 1- d(s_{\pi^v(\ell'')},v).$$

Letting $h = \pi^u_{\ell'+1} = \pi^v_{\ell''}$ we see that from
the preceding equalities that the total flow routed on the
zero-cost edges is 
$$d(s_h,u) + 1 - d(s_h,v) = 1 - (d(s_h,v) - d(s_h,u)) \ge 1 - x_e.$$

This finishes the proof.

\section{Proof of Theorem~\ref{thm:label_basic_equivalence}}\label{sec:appendix-theorem-label-basic-equivalence}

The first two lemmas help establish that we can safely assume that
the supply graph satisfies the assumptions I, II, and III. We omit
the proof of the first lemma which involves tedious reworking of
some of the details on equivalence of \LabelLP and \DistLP.

\begin{lemma}
  For any instance $G,H$ of \DMulCH where the supply graph has
  undirected edges, the optimum solution values for the
  formulations \LabelLP and \DistLP are the same both in the fractional
  and integral settings.
\end{lemma}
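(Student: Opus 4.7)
The plan is to mirror the two-direction structure of Theorem~\ref{thm:distlp-labellp}, with adaptations for undirected edges. The \LabelLP constraint for an undirected edge $e = uv$ reads $x_e = \sum_{\sigma_1 \ne \sigma_2} z_{e,\sigma_1\sigma_2}$ (symmetric cost $c'(\sigma_1,\sigma_2) = \mathbf{1}[\sigma_1 \ne \sigma_2]$), while in \DistLP an undirected edge contributes $x_e$ regardless of the direction in which it is traversed inside a path.

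For the \LabelLP $\to$ \DistLP direction, given $(\bx, \bz)$ feasible for \LabelLP I use $\bx$ itself as the candidate \DistLP solution. The key technical step is extending Lemma~\ref{lem:x_e_lower_bound} to undirected edges, where the bound becomes the stronger absolute-value statement $x_e \ge \bigl|\sum_{\sigma:\sigma[i]=1} z_{u,\sigma} - \sum_{\sigma:\sigma[i]=1} z_{v,\sigma}\bigr|$ for every terminal index $i$. The proof parallels the directed case on the bipartite digraph $B_{uv}$, but now under the symmetric cost $c'$ \emph{both} the flow from $\Gamma_u^1$ to $\Gamma_v^2$ and from $\Gamma_u^2$ to $\Gamma_v^1$ are charged per unit, so either quantity lower bounds $x_e$. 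Telescoping this absolute-value bound along any $s_i$-to-$s_j$ path $p$ (now allowed to traverse undirected edges in either direction) gives $\sum_{e \in p} x_e \ge |F_{s_i}^{(i)} - F_{s_j}^{(i)}| = 1$, establishing \DistLP feasibility.

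For the \DistLP $\to$ \LabelLP direction, I plan to reduce to the directed case of Theorem~\ref{thm:distlp-labellp} via the standard undirected-to-directed gadget shown in Fig.~\ref{fig:undirected_edge_gadget}. Replace each undirected supply edge $e = uv$ of weight $w_e$ by the gadget $\mathrm{Gad}(e)$ to obtain a purely directed supply graph $G'$. The gadget is designed so that (i) shortest-path distances in $G'$ coincide with those in $G$, yielding \DistLP equivalence; and (ii) the directed \LabelLP cost on $\mathrm{Gad}(e)$ correctly captures the original symmetric-cost contribution $\sum_{\sigma_1 \ne \sigma_2} z_{e,\sigma_1\sigma_2}$ of the undirected edge, yielding \LabelLP equivalence. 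Once both preservation facts are established, applying Theorem~\ref{thm:distlp-labellp} to $G'$ and composing the three equalities (undirected-\LabelLP $=$ directed-\LabelLP on $G'$ $=$ directed-\DistLP on $G'$ $=$ undirected-\DistLP) finishes the proof in both the fractional and integral settings.

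The main obstacle is the \LabelLP preservation under the gadget transformation. The \DistLP side is immediate since the gadget's internal edge costs and orientations are arranged so that paths through $e$ in either direction cost exactly $x_e$. The \LabelLP side, however, requires a delicate local argument: given a feasible \LabelLP solution on $G$ with the symmetric-cost constraint, the flow variables $z_{e,\sigma_1\sigma_2}$ must be split consistently across the directed edges of $\mathrm{Gad}(e)$ (introducing distributions $\bar z$ on the auxiliary vertices of the gadget) so that the total directed \LabelLP cost matches $x_e$; and conversely any directed \LabelLP solution on $G'$ must project back to a valid symmetric-cost solution on $e$ with matching cost. This per-edge flow bookkeeping on the gadget --- in particular verifying that the asymmetric directed cost $c$ applied to both directed edges of the gadget sums correctly to the symmetric cost $c'$ on the underlying undirected edge --- is the tedious portion of the argument referenced by the authors.
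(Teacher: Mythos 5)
Your first direction (\LabelLP{} $\to$ \DistLP{}) is correct: strengthening Lemma~\ref{lem:x_e_lower_bound} to the absolute-value bound for undirected edges is exactly the right local step, since under the symmetric cost $c'$ both the mass from $\Gamma_u^1$ to $\Gamma_v^2$ and the mass from $\Gamma_u^2$ to $\Gamma_v^1$ are charged, and the telescoping along any path traversing undirected edges in either direction then goes through unchanged. The paper itself omits the proof and explicitly frames it as a ``tedious reworking'' of the argument for Theorem~\ref{thm:distlp-labellp}, i.e.\ a direct modification of the greedy-flow construction to handle the symmetric cost; so your second direction takes a genuinely different route via the gadget $\mathrm{Gad}(e)$.

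That gadget-based route has a real gap, and it sits exactly in the step you flag as ``the tedious portion.'' The chain you want, undirected-\LabelLP$(G)$ $=$ directed-\LabelLP$(G')$ $=$ \DistLP$(G')$ $=$ \DistLP$(G)$, requires the first equality in \emph{both} directions. The easy direction (\LabelLP$(G) \to$ \LabelLP$(G')$, i.e.\ $\mathrm{LabelLP}^*(G') \le \mathrm{LabelLP}^*(G)$) does hold by a local construction: route the mass of each coupled pair $(\sigma_1,\sigma_2)$ through auxiliary labels $\sigma_1\vee\sigma_2$ and $\sigma_1\wedge\sigma_2$, so the single charged gadget edge pays $\mathbf{1}[\sigma_1\vee\sigma_2 \not\le \sigma_1\wedge\sigma_2] = \mathbf{1}[\sigma_1\neq\sigma_2]$. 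But the direction you actually need to complete the chain --- projecting an arbitrary \LabelLP$(G')$ solution back to a symmetric-cost solution on $e$ with no larger cost --- is false edge-by-edge. The min symmetric-cost coupling of $\bar z_u$ and $\bar z_v$ is their total-variation distance, whereas the gadget only pays the directed cost of a $p\to q$ coupling where $\bar z_p$ merely stochastically dominates both endpoints and $\bar z_q$ is dominated by both, and this can be strictly smaller. For instance with $L=\{0,1\}^2$, $\bar z_u = \tfrac12\delta_{01}+\tfrac12\delta_{10}$, $\bar z_v = \tfrac12\delta_{00}+\tfrac12\delta_{11}$ gives $\mathrm{TV}(\bar z_u,\bar z_v)=1$, but taking $\bar z_p = \tfrac12\delta_{11}+\tfrac14\delta_{01}+\tfrac14\delta_{10}$ and $\bar z_q = \tfrac12\delta_{00}+\tfrac14\delta_{01}+\tfrac14\delta_{10}$ satisfies all four zero-cost dominance couplings and yields directed $p\to q$ cost $\tfrac12$. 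So the per-edge projection you rely on does not preserve cost. Of course, the discrepancy disappears for the particular \emph{chain-supported} distributions produced by the construction in Theorem~\ref{thm:distlp-labellp} (and for any globally optimal solution, if the lemma is true), but establishing that requires precisely the greedy-flow analysis on the chain structure --- at which point the gadget detour buys you nothing and you are back to the paper's direct reworking. As written, the proposal has not proved the \DistLP$\to$\LabelLP direction.
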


Assuming the preceding lemma, following lemma is easy to prove:

\begin{lemma}
  For bipartite $H$, \DMulCH with a general supply graph and \DMulCH
  restricted to supply graphs satisfying Assumptions I, II and III are
  equivalent in terms of approximability and in terms of the
  integrality gap of \DistLP (equal to integrality gap of \LabelLP).
\end{lemma}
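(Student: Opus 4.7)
The lemma has two directions. The direction from restricted to general is immediate: the restricted class is a subclass of the general class, so any approximation lower bound or integrality gap example for the restricted class also applies to the general class.

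For the reverse direction, my plan is a polynomial-time reduction mapping any general instance $(G,H)$ to a restricted instance $(G',H)$ with the same \DistLP integrality gap and matching approximation hardness. The construction of $G'$ comprises three steps: (i) split each terminal $a_i$ into two vertices $a_i^{\text{in}}$ (receiving all original incoming edges of $a_i$) and $a_i$ (retaining all original outgoing edges), joined by an infinite-weight edge $a_i^{\text{in}} \to a_i$, with the analogous split for each $b_j$ so that $b_j$ keeps only original incoming edges; (ii) add infinite-weight edges $a_{i'} \to a_i$ for each $i' \in Y_i$; (iii) add infinite-weight edges $a_i \to b_j$ for each $i \in Z_j$. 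By construction $G'$ satisfies Assumption II ($b_j$ has no outgoing edges and the required $a_i \to b_j$ infinite edges are present) and Assumption III ($a_i$'s only incoming edges are the specified infinite ones together with the harmless auxiliary edge from $a_i^{\text{in}}$). Assumption I is inherited from the preceding lemma.

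The crucial claim is that $\text{LP}(G',H)$ and $\text{IP}(G',H)$ stand in the same ratio as $\text{LP}(G,H)$ and $\text{IP}(G,H)$, so that worst-case integrality gap is preserved. Three observations guide the proof: infinite-weight edges are never cut, contributing zero cost to the objective while enforcing reachability; the added $a_{i'} \to a_i$ edges are harmless because $N_H^+(a_{i'}) \subseteq N_H^+(a_i)$ implies that any demand pair involving $a_{i'}$ is already implied by one involving $a_i$, so no fresh separation is required; and the added $a_i \to b_j$ edges for $i \in Z_j$ interact with the splitting so that any new path $a_k \to \cdots \to a_i \to b_j$ in $G'$ corresponds, via the split partner $a_i^{\text{in}}$, to a path that is already required to be cut by the original demand constraints on $(a_k, b_j)$. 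I would then exhibit bidirectional translations between \LabelLP solutions (and integer cuts) of $(G,H)$ and $(G',H)$, using the split vertices to route reachability, and use the equivalence of \LabelLP and \DistLP from Theorem~\ref{thm:distlp-labellp} to conclude the DistLP gap is preserved.

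The main obstacle I anticipate is verifying the third observation rigorously, particularly at the LP level: the added $a_i \to b_j$ edges potentially impose new distance constraints of the form $d(a_k, a_i) \geq 1$ whenever $(a_k, b_j) \in E_H$ and $i \in Z_j$, and one must show that either these are already implied by the original \DistLP constraints (via composition of demand constraints and triangle inequality on distances), or that the worst-case gap instances can be taken to satisfy them without loss. The careful bookkeeping of how such implied constraints interact with the splitting and with fractional distance assignments is where the real technical effort will be concentrated.
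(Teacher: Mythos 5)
The gadget you use differs from the paper's and the difference matters. The paper does not split the original terminal vertices; it leaves the original $a_i, b_j$ intact inside $G$, adds a fresh dummy vertex $a_i'$ with a single infinite-weight outgoing edge $a_i' \to a_i$, a fresh dummy $b_j'$ with a single infinite-weight incoming edge $b_j \to b_j'$, and makes the dummies the new terminals. Before the Assumption~II/III edges are added, the new terminal $a_i'$ has no incoming edges at all and the new terminal $b_j'$ has no outgoing edges at all, so after adding those edges the assumptions hold literally. In your split, the terminal name stays on the outgoing half, so the connector $a_i^{\text{in}} \to a_i$ is an extraneous incoming edge of the terminal $a_i$ and, symmetrically, $b_j \to b_j^{\text{out}}$ is an extraneous outgoing edge of $b_j$. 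These violate Assumptions~III and~II respectively, and the ``no other incoming/outgoing edges'' clauses are not decorative: the proof of Theorem~\ref{thm:label_basic_equivalence} uses them to conclude that every terminal receives a single label determined by $Y_i$ or $Z_j$.

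The deeper issue is exactly the one you flag at the end of your writeup, and the two gadgets handle it very differently. With your split, $a_i$ remains reachable from the body of $G$ via $a_i^{\text{in}}$, so adding the infinite-weight edge $a_i \to b_j$ for $i \in Z_j$ imposes a genuinely new \DistLP constraint $d(a_k, a_i) \geq 1$ for every $k$ with $(a_k, b_j) \in E_H$ --- a constraint on distances in the original graph that the original \DistLP did not enforce, so the LP optimum can change. With the paper's dummies, the new terminal $a_i'$ has no incoming edges from the body of $G$; it is reachable only along chains of Assumption~III dummy-to-dummy edges, so $d(a_k', a_i') = 0$ forces $k \in Y_i$ by transitivity of set inclusion, i.e.\ $N_H^+(a_k) \subseteq N_H^+(a_i)$; but then $(a_k,b_j) \in E_H$ would give $(a_i,b_j) \in E_H$, contradicting $i \in Z_j$, so the would-be new constraint is vacuous and the \DistLP value is unaffected. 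This cancellation is what lets the paper dismiss this step with ``can be seen to not affect the \ldots\ solutions''; your construction does not enjoy it, and the bookkeeping you anticipate being hard really is a concrete obstacle. Replacing the split with the attach-a-fresh-dummy gadget makes the rest of your argument go through.
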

\begin{proof} 
  We sketch the proof. Undirected edges can be handled by the gadget
  shown in Fig~\ref{fig:undirected_edge_gadget}. It is easy to see
  that given any instance of \DMulCH with supply graph $G$ and
  bipartite demand graph $H$ we can first add dummy terminals to $G$
  and assume that each terminal $a_i$ has only one outgoing infinite
  weight edge (to the original terminal) and each $b_j$ has only one
  incoming infinite weight edge. With this in place adding edges to
  satisfy Assumptions II and III can be seen to not affect the 
  integral or fractional solutions to \DistLP.
\end{proof}

We will assume for simplicity that all weights (for edges and constraints)
are either $1$ or $\infty$. Generic weights can be easily simulated by
copies and the proofs make no essential use of weights other than
that some are finite and others are infinite.

%\colorcomment{At times it is refered to as constraint $\psiai(u)$ and at times it is refered to as $t = (u), \psitc(t) = \psiai$}

\subsubsection{Reduction from \MinbHCSP to \DMulCH}
Let the \MinbCSP instance be $I_C = (V_C, T_C, \psitc:T_C\rightarrow
\beta_H , \wtc:T_C\rightarrow R^+)$. We refer to touple $t = (u)$ with $\psitc(t) = \psiai$ as constraint $\psiai(u)$, $t = (u), \psitc(u) = \psibj$ as constraint $\psibj(u)$, $t = (u,v), \psitc(t) = \cC$ as constraint $\cC(u,v)$ and $t = (u,v), \psitc(u) = NAE_2$ as constraint $NAE_2(u,v)$. We assume that for every $i \in
[1,p]$, there is a constraint $\psiai(u_i)$ for some vertex $u_i \in
V_C$, and similarly for every $j \in [1,q]$ there is a constraint
$\psibj(v_j)$ for some vertex $v_j \in V_C$; moreover we will assume
that $u_1,\ldots,u_p,v_1,\ldots,v_q$ are distinct vertices. One can
ensure that this assumption holds by adding dummy vertices and dummy
constraints with zero weight. We create an instance $I_M =
(G=(V_G,E_G,w_G:E_G\rightarrow R^+), (S \cup T, E_H))$ of \DMulCH as
follows.  \iffalse For predicate $\cC$, add a directed edge, for
$NAE_2$, add an undirected edge and for predicate $\psiai$ and
$\psibj$, add terminals $a_i$ and $b_j$ respectively. Also, we add
more edges to satisfy Assumption II and III. Formally, reduction works
as follows: Equivalent \DMulCH instance $I_M =
(G=(V_G,E_G,w_G:E_G\rightarrow R^+), (S \cup T, E_H))$ is defined as
follows: \fi
\begin{itemize}
\item $V_G = V_C$, the vertex remains the same.  Pick vertices
  $u_1,\ldots,u_p$ and $v_1,\ldots,v_q$ that are all distinct such
  that for $1 \le i \le p$ there is a constraint $\psiai(u_i)$ in
  $\Ic$ and for $1 \le j \le q$ there is a constraint $\psibj(v_j)$ in
  $\Ic$.  This holds by our assumption. For $i \in [1,p]$ associate
  the terminal $a_i \in V_H$ with $u_i$ and for $j \in [1,q]$
  associate the terminal $b_j \in V_H$ with $v_j$.
\item $E_G$ and $w_G$ are defined as follows:
\begin{itemize}
\item For each constraint $\psiai(u)$ in $\Ic$ where $u \neq a_i$ add
  an undirected edge $e_t=a_iu$ to $E_G$ with $w_G(e_t) = \infty$.
\item For each constraint $\psibj(v)$ in $\Ic$ where $v \neq b_j$
  add an undirected edge $e_t = b_jv$ to $E_G$ with $w_G(e_t) = \infty$.
\item For each constraint $\cC(u,v)$ in $\Ic$ 
  add a directed edge $e_t=(u,v)$ in $G$ with $w_G(e_t)$ equal to the
  weight of the constraint in $\Ic$.
\item For each constraint $\NAE_2(u,v)$, add an undirected edge $e_t = uv$ 
with $w_G(e_t)$ equal to the weight of the constraint in $\Ic$.
\item For each $i \in [1,p]$ and for each $i' \in Y_i$, add a 
  directed edge $e = (a_{i'},a_i)$ with
  $w_G(e) = \infty$.
\item For each $j \in [1,q]$ and each $i \in Z_j$, 
  add a directed edge $e = (a_i,b_j)$ with $w_G(e) = \infty$. 
\end{itemize}
\end{itemize}

We now prove the equivalence of $\Ic$ and $\Im$ from the point of
view solutions to \BasicLP and \LabelLP respectively.

Given two labels $\sigma$ and $\sigma'$ which can be interpreted as
binary strings, we use the notation $\sigma \cdot \sigma'$ to denote
the label obtained by concatenating $\sigma$ and $\sigma'$.

\paragraph{\bf From \LabelLP to \BasicLP:} Suppose
$(\bx,\bz)$ is a feasible solution to \LabelLP for $\Im$.
We construct a solution $\bz'$ to \BasicLP for $\Ic$ 
in the following way. $\bz'$ is simply a projection of 
$\bz$ from label set $\{0,1\}^{p+q}$ onto label set $\{0,1\}^p$.
Recall that in the instance $\Im$ the terminals $b_1,\ldots,b_q$ 
do not have any outgoing edges. Hence, in the solution $(\bx,\bz)$ with
label space $\{0,1\}^{p+q}$, which encodes reachability from both
the $a_i$s and the $b_j$'s the information on reachability from
the $b_j$s does not play any essential role. We formalize this below. 
\begin{itemize}
\item For $v \in V_C, \sigma \in \{0,1\}^p$, $z'_{v,\sigma}= \sum_{\sigma'\in \{0,1\}^q} z_{v,\sigma\cdot\sigma'}$.
\item For unary constraint $t=(v) \in T_C$ and $\sigma \in \{0,1\}^p$, 
  $z'_{t,\sigma} = z'_{v,\sigma}$.
\item For binary constraint $t = (u,v) \in T_C$, for $\sigma_1,\sigma_2 \in \{0,1\}^p$,
\begin{equation*}
z_{t,\sigma_1\sigma_2}' = \sum_{\sigma' \in \{0,1\}^q} \sum_{\sigma'' \in \{0,1\}^q} z_{e_t,\sigma_1\cdot \sigma'\sigma_2\cdot\sigma''}
\end{equation*} 
\end{itemize}

Note that if $(\bx,\bz)$ is an integral solution then $\bz'$ as defined above is also an integral solution. 

Feasibility of $\bz'$ for \BasicLP is an ``easy'' consequence of the
projection operation but we prove it formally.
\begin{lemma}
  $\bz'$ as defined above is a feasible solution to \BasicLP for
  instance $I_C$.
\end{lemma}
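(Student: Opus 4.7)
The plan is to verify directly that $\bz'$ satisfies each of the three kinds of constraints of \BasicLP --- vertex normalization, tuple--vertex consistency, and the box constraints $0 \le z'_{\cdot} \le 1$ --- using only the definition of $\bz'$ and the \LabelLP feasibility of $(\bx,\bz)$. The reduction from $\Ic$ to $\Im$ puts each tuple $t \in T_C$ in bijection with either an edge $e_t$ (if $\psitc(t) \in \{\cC,\NAE_2\}$) or a terminal-constraint (if $\psitc(t) \in \{\psiai,\psibj\}$), so the whole check is tuple by tuple and no global argument is needed.

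For vertex normalization, for each $v \in V_C = V_G$ I would expand
\[
\sum_{\sigma \in \{0,1\}^p} z'_{v,\sigma} \;=\; \sum_{\sigma \in \{0,1\}^p} \sum_{\sigma' \in \{0,1\}^q} z_{v,\sigma\cdot\sigma'} \;=\; \sum_{\tau \in \{0,1\}^{p+q}} z_{v,\tau} \;=\; 1,
\]
using the \LabelLP normalization at $v$. Tuple--vertex consistency at a unary tuple $t=(v)$ is immediate since $z'_{t,\sigma} = z'_{v,\sigma}$ by definition. At a binary tuple $t=(u,v)$, I would project onto the first coordinate and compute
\[
\sum_{\sigma_2 \in \{0,1\}^p} z'_{t,\sigma_1\sigma_2} \;=\; \sum_{\sigma_2,\sigma',\sigma''} z_{e_t,\,\sigma_1\cdot\sigma'\;\sigma_2\cdot\sigma''} \;=\; \sum_{\sigma' \in \{0,1\}^q} z_{u,\sigma_1\cdot\sigma'} \;=\; z'_{u,\sigma_1},
\]
where the middle equality regroups the inner double sum over $\sigma_2 \in \{0,1\}^p$ and $\sigma'' \in \{0,1\}^q$ as a single sum over $\tau \in \{0,1\}^{p+q}$ and applies the \LabelLP edge-consistency constraint $z_{u,\sigma_1\cdot\sigma'} = \sum_{\tau} z_{e_t,\sigma_1\cdot\sigma'\;\tau}$ at $u$. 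Projection onto the second coordinate is symmetric, this time invoking edge-consistency at $v$.

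Finally, $z'_{v,\sigma}, z'_{t,\alpha} \ge 0$ is inherited from $\bz \ge 0$, and each $z'$-variable is bounded above by $1$ because it is a sum of non-negative $z$-entries whose total is at most $\sum_\tau z_{v,\tau} = 1$ (respectively the analogous edge-level total, which also equals $1$ as a consequence of the two edge-consistency constraints at $u$ and $v$ combined with the normalization at either endpoint). Altogether the argument is bookkeeping under the projection $\{0,1\}^{p+q} \to \{0,1\}^p$ that discards the last $q$ bits, and I do not foresee any real obstacle; the only care needed is to track the two slots of binary tuples consistently and to treat directed ($\cC$) and undirected ($\NAE_2$) edges uniformly, which is fine because \BasicLP feasibility refers only to the raw $z'$ variables and not to the predicate values. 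Matching objective values --- including handling the infinite-valued predicates $\psiai$, $\psibj$ by arguing that $z'_{u_i,\sigma} = 0$ whenever $\Js \ne Y_i$, etc. --- is a separate claim that properly belongs to the companion cost lemma rather than to this feasibility statement.
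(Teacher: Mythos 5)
Your proposal is correct and matches the paper's argument: both verify the normalization, unary, and binary tuple-consistency constraints by collapsing the sum over the discarded $\{0,1\}^q$ coordinates and invoking the \LabelLP normalization and edge-marginal constraints. The only addition you make is an explicit (and trivially correct) check of the box constraints, which the paper leaves implicit, and you correctly defer the infinite-predicate/cost discussion to the companion cost lemma.
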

\begin{proof}
 From the definition of $\bz'$, for each vertex $v$,
\begin{eqnarray*}
  \sum_{\sigma \in \{0,1\}^p} z_{v,\sigma}' & = & \sum_{\sigma\in \{0,1\}^p} \sum_{\sigma' \in \{0,1\}^q} z_{v,\sigma \cdot \sigma'}\\
  & = & 1
\end{eqnarray*}
which proves that one set of constraints holds.

Next, we prove that for $t \in T_C, v = t[i], \sigma \in L=\{0,1\}^p$,
the constraint $z_{v,\sigma}' -\sum_{\alpha \in L^{|t|}: \alpha[i] =
  \sigma} z_{t,\alpha}' =0$ holds. We consider unary and binary predicates
separately.
 
\begin{itemize} 
\item For $t = (v)$ s.t. $v = t[i], \sigma \in L=\{0,1\}^p$, 
\begin{equation*}
z_{v,\sigma}' - \sum_{\alpha \in L^{|t|}: \alpha[i] = \sigma} z_{t,\alpha}' =z_{v,\sigma}'- z_{t,\sigma}' = z_{v,\sigma}'-z_{v,\sigma}'=0.
\end{equation*}

\item For $t=(u,v) \in T_C, \sigma \in \{0,1\}^p$

\begin{eqnarray*}
z_{v,\sigma}' - \sum_{\sigma_1 \in \{0,1\}^p} z'_{t,\sigma_1\sigma} & = & \sum_{\sigma'\in \{0,1\}^p} z_{v,\sigma\cdot\sigma'} - \sum_{\sigma_1 \in \{0,1\}^p} \sum_{\sigma',\sigma''\in \{0,1\}^q} z_{e_t,\sigma_1\cdot \sigma''\sigma\cdot\sigma'}\\
& = & \sum_{\sigma'\in \{0,1\}^p} z_{v,\sigma\cdot\sigma'}  - \sum_{\sigma' \in \{0,1\}^q} \sum_{\sigma_1 \in \{0,1\}^p,\sigma''\in \{0,1\}^q} z_{e_t,\sigma_1\cdot\sigma''\sigma\cdot\sigma'}\\
& = & \sum_{\sigma'\in \{0,1\}^p} z_{v,\sigma\cdot\sigma'} - \sum_{\sigma'\in \{0,1\}^p} z_{v,\sigma\cdot\sigma'}\\
& = & 0.
\end{eqnarray*}
Similar argument holds for $u$ as well.
\end{itemize} 
\end{proof}

\begin{lemma}\label{lem:label_to_basic}
  The cost of $\bz'$ is at most $\sum_{e \in E_G} w_e x_e$ which is the
  cost of $(\bx,\bz)$ to $\Im$.
\end{lemma}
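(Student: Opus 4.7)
The plan is to split the BasicLP cost of $\bz'$ into contributions from binary predicates ($\cC$, $\NAE_2$) and unary predicates ($\psiai$, $\psibj$), bound the binary part by $\sum_e w_e x_e$, and show the unary part vanishes. If $(\bx,\bz)$ has infinite cost the claim is vacuous, so I assume finite cost, which in particular forces $x_e = 0$ on every infinite-weight edge added by the reduction.

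For a binary constraint $t$ associated with an edge $e_t \in E_G$, I will establish
\[
\sum_{\sigma_1,\sigma_2 \in \{0,1\}^p : \sigma_1 \not \leq \sigma_2} z'_{t,\sigma_1\sigma_2} \;\leq\; \sum_{\sigma_3,\sigma_4 \in \{0,1\}^{p+q} : \sigma_3 \not \leq \sigma_4} z_{e_t,\sigma_3\sigma_4} \;=\; x_{e_t},
\]
with the analogous inequality (with $\neq$ in place of $\not\leq$) for $\NAE_2$ edges. This is immediate from the definition $z'_{t,\sigma_1\sigma_2} = \sum_{\sigma',\sigma''} z_{e_t,\sigma_1\cdot\sigma',\sigma_2\cdot\sigma''}$ together with two trivial observations: (i) $\sigma_1 \not\leq \sigma_2$ in the first $p$ coordinates implies $\sigma_1\cdot\sigma' \not\leq \sigma_2\cdot\sigma''$ in all $p+q$ coordinates, and (ii) the concatenation map $(\sigma_1,\sigma_2,\sigma',\sigma'')\mapsto(\sigma_1\cdot\sigma',\sigma_2\cdot\sigma'')$ is injective, so no double counting occurs. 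Since the reduction sets $w_t = w_{e_t}$, summing over all binary constraints contributes at most $\sum_{e} w_e x_e$, taken over the edges of $G$ arising from binary constraints.

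The main obstacle is showing the unary part is zero, since $\psiai$ and $\psibj$ take values in $\{0,\infty\}$. For a constraint $\psiai(u)$ in $\Ic$, I must prove $z'_{u,\sigma} = 0$ whenever $J_\sigma \neq Y_i$. If $u\neq a_i$, the reduction places an infinite-weight undirected edge $a_iu$; since $x_{a_iu}=0$ the undirected-edge constraint $\sum_{\sigma_1\neq\sigma_2}z_{a_iu,\sigma_1\sigma_2}=0$ and the consistency constraints jointly force $z_{u,\tau}=z_{a_i,\tau}$ for every $\tau\in\{0,1\}^{p+q}$, hence $z'_{u,\sigma}=z'_{a_i,\sigma}$. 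It therefore suffices to show $z_{a_i,\tau}>0\Rightarrow J_{\tau|_{[1,p]}}=Y_i$, which I prove by two containments using Assumptions~II and III. For $Y_i\subseteq J_{\tau|_{[1,p]}}$, self-reachability gives $\tau[i]=1$ for free, and for $i'\in Y_i\setminus\{i\}$, Assumption~III supplies an infinite-weight edge $(a_{i'},a_i)$, forcing $z_{e,\sigma_1\sigma_2}=0$ unless $\sigma_1\leq\sigma_2$; combining with the consistency constraints and self-reachability of $a_{i'}$ yields the aggregate bound
\[
\sum_{\sigma:\sigma[i']=0} z_{a_i,\sigma} \;\leq\; \sum_{\sigma':\sigma'[i']=0} z_{a_{i'},\sigma'} \;=\; 0,
\]
so $\tau[i']=1$ on the support.

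For the reverse containment $J_{\tau|_{[1,p]}}\subseteq Y_i$: for $j\in[1,p]\setminus Y_i$, the definition of $Y_i$ supplies some $b_k$ with $(a_j,b_k)\in E_H$ and $i\in Z_k$, so Assumption~II adds an infinite-weight edge $(a_i,b_k)$. Running the same aggregate argument on this edge together with the \LabelLP demand constraint $z_{b_k,\sigma'}=0$ for $\sigma'[j]=1$ gives $\sum_{\sigma:\sigma[j]=1}z_{a_i,\sigma}=0$, forcing $\tau[j]=0$. A symmetric argument handles $\psibj(v)$: the ``$Z_j\subseteq J_{\tau|_{[1,p]}}$'' direction uses the Assumption~II edges $(a_i,b_j)$ for $i\in Z_j$ together with self-reachability of $a_i$, and the reverse direction follows directly from the \LabelLP demand constraints for $(a_i,b_j)\in E_H$. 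The technical subtlety throughout is that the intuition ``$x_e=0$ implies the label of the tail dominates the label of the head'' does not hold pointwise in the fractional setting, only as the aggregate inequalities above; setting these up correctly is the main burden of the proof.
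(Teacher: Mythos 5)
Your proposal is correct and follows essentially the same route as the paper: the binary-predicate contribution is bounded by the projection argument on $z'_{t,\cdot}$, and the unary predicates are forced to contribute zero by propagating the self-reachability constraints through the infinite-weight edges from Assumptions~II and~III via the aggregate (marginal-mass) inequality $\sum_{\sigma:\sigma[i]=1} z_{u,\sigma} \leq \sum_{\sigma:\sigma[i]=1} z_{v,\sigma}$ on zero-cost edges. The paper packages these latter facts as a standalone lemma (Lemma~\ref{lem:label_sol_properties}) before deriving the cost bound, whereas you prove them inline, but the logical content and the key observations (concatenation preserves $\not\leq$ and $\neq$; the domination intuition holds only in aggregate, not pointwise) are the same.
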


Before we prove Lemma~\ref{lem:label_to_basic} we establish
some properties satisfied by $(\bx,\bz)$.
\begin{lemma}\label{lem:label_sol_properties}
  If the solution $(\bx,\bz)$ to \LabelLP has finite cost, then the
  following conditions hold:
\begin{itemize}
\item For directed edge $e=(u,v)$, and for $i \in [1,p]$
  $x_e \geq \sum_{\sigma \in \{0,1\}^{p+q}: \sigma[i] = 1} z_{u,\sigma} - \sum_{\sigma \in
    \{0,1\}^{p+q}: \sigma[i] = 1} z_{v,\sigma}$. Hence, if edge $e$
  has infinite weight ($w_G(e) = \infty$), then $\sum_{\sigma \in
    \{0,1\}^{p+q}: \sigma[i] = 1} z_{u,\sigma} \leq \sum_{\sigma \in
    \{0,1\}^{p+q}: \sigma[i] = 1} z_{v,\sigma}$
\item For $i \in [1,p],\sigma \in \{0,1\}^p,\sigma' \in \{0,1\}^q$ s.t. $J_\sigma \neq Y_i$, we have $z_{a_i,\sigma \cdot \sigma'} = 0$. Hence, for $\sigma \in \{0,1\}^p, z_{a_i,\sigma}' = 1$ if $J_{\sigma} = Y_i$ and $0$ otherwise.
\item For $j \in [1,q],\sigma \in \{0,1\}^p,\sigma' \in \{0,1\}^q$ s.t. $J_\sigma \neq Z_j$ we have $z_{b_j,\sigma \cdot \sigma'} = 0$. Hence, for $\sigma \in \{0,1\}^p, z_{b_j,\sigma}' = 1$ if $J_{\sigma} = Z_j$ and $0$ otherwise.
\item For an undirected edge $e = uv \in E_G$ with $w_G(e) = \infty$, and $\sigma_1,\sigma_2 \in \{0,1\}^{p+q}$, $z_{e,\sigma_1\sigma_2} = 0$ if $\sigma_1 \neq \sigma_2$. For $\sigma \in \{0,1\}^{p+q}, z_{u,\sigma} = z_{v,\sigma}$ and for $\sigma_1\in \{0,1\}^p, z_{u,\sigma_1}' =z_{v,\sigma_1}'$. Hence, for $t = (u) \in T_C$ s.t. $\psitc(t) = \psiai, z'_{u,\sigma} = 1$ if $J_{\sigma} = Y_i$ and $0$ otherwise.
\end{itemize}
\end{lemma}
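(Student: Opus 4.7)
The plan is to derive each bullet from the consistency constraints of \LabelLP combined with the observation that finite total cost forces $x_e = 0$ on every infinite-weight edge. The first bullet is a direct transcription of the flow argument in Lemma~\ref{lem:x_e_lower_bound} applied with label space $\{0,1\}^{p+q}$: I partition both $\Gamma_u$ and $\Gamma_v$ according to the $i$-th bit of the label, and observe that every flow arc from the $\tau[i]=1$ half of $\Gamma_u$ to the $\tau[i]=0$ half of $\Gamma_v$ has cost~$1$; the total such flow is at least the deficit $\sum_{\tau[i]=1} z_{u,\tau} - \sum_{\tau[i]=1} z_{v,\tau}$, giving the inequality. Its ``Hence'' refinement is immediate because finite cost and $w_G(e)=\infty$ force $x_e = 0$.

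For the second bullet I combine the \LabelLP terminal constraint $z_{a_i,\tau}=0$ for $\tau[i]=0$ (which, using $\sum_\tau z_{a_i,\tau}=1$, yields $\sum_{\tau[i']=1} z_{a_{i'},\tau}=1$ for every $i'$) with the two families of infinite-weight edges guaranteed by Assumptions~II and~III. For any $i' \in Y_i$, Assumption~III supplies an infinite-weight edge $(a_{i'},a_i)$, and the first bullet yields $\sum_{\tau[i']=1} z_{a_i,\tau} \geq \sum_{\tau[i']=1} z_{a_{i'},\tau} = 1$, which together with $\sum_\tau z_{a_i,\tau}=1$ forces $z_{a_i,\tau}=0$ whenever $\tau[i']=0$. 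For any $i'' \in [1,p]\setminus Y_i$, the definition of $Y_i$ produces a $b_j$ with $(a_{i''},b_j)\in E_H$ and $i\in Z_j$; Assumption~II then supplies an infinite-weight edge $(a_i,b_j)$, and the first bullet applied with index $i''$, combined with the demand-separation constraint $z_{b_j,\tau}=0$ for $\tau[i'']=1$, gives $z_{a_i,\tau}=0$ whenever $\tau[i'']=1$. Intersecting these conclusions leaves support only on labels $\tau=\sigma\cdot\sigma'$ with $J_\sigma=Y_i$, and summing over $\sigma'$ gives the stated value of $z'_{a_i,\cdot}$.

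The third bullet is the symmetric argument at $b_j$: the demand-separation constraint already kills $z_{b_j,\tau}$ whenever some $i\notin Z_j$ has $\tau[i]=1$, and for each $i\in Z_j$ the infinite-weight edge $(a_i,b_j)$ from Assumption~II together with the first bullet and $\sum_{\tau[i]=1} z_{a_i,\tau}=1$ gives $\sum_{\tau[i]=1} z_{b_j,\tau}=1$, so $z_{b_j,\tau}=0$ whenever $\tau[i]=0$; the two conclusions together force $J_\sigma=Z_j$ on the support. For the fourth bullet, finite cost with $w_G(e)=\infty$ and the undirected-edge rule $x_e=\sum_{\sigma_1\neq\sigma_2} z_{e,\sigma_1\sigma_2}$ force $z_{e,\sigma_1\sigma_2}=0$ for $\sigma_1\neq\sigma_2$, whereupon the consistency constraints of \LabelLP yield $z_{u,\sigma}=z_{e,\sigma\sigma}=z_{v,\sigma}$ and hence $z'_{u,\sigma}=z'_{v,\sigma}$. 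Specializing to the gadget edge $e_t=a_iu$ created from a $\psiai$ constraint identifies $z'_u$ with $z'_{a_i}$, and the second bullet finishes it. The only real bookkeeping hazard is remembering precisely which infinite-weight edges Assumptions~II and~III give in which direction and keeping straight that both $Y_i$ and $Z_j$ live in $[1,p]$; once these are organized each bullet is a one-line deduction.
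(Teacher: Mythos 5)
Your proof is correct and follows essentially the same route as the paper: the first bullet via the same flow/partition argument behind Lemma~\ref{lem:x_e_lower_bound} specialized to the $i$-th bit, the second and third bullets by combining the first bullet with the demand-separation and self-reachability constraints along the infinite-weight edges of Assumptions~II and~III (including the same choice of a witnessing $b_j$ when $i''\notin Y_i$), and the fourth by forcing $x_e=0$ on the undirected gadget edge to equate $z_u$ and $z_v$ and then invoking the second bullet.
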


\begin{proof}
If $(\bx,\bz)$ has finite cost, then for an edge $e$ with infinite weight ($w_G(e) = \infty$), we must have $x_e = 0$.
\begin{itemize}
\item 
Let $e = (u,v)$ be a directed edge, and $i \in [1,p]$

\begin{eqnarray*}
x_e & = & \sum_{\sigma_1, \sigma_2 \in \{0,1\}^{p+q}: \sigma_1 \not \leq \sigma_2} z_{e,\sigma_1\sigma_2}\\
& \geq & \sum_{\sigma_1, \sigma_2 \in \{0,1\}^{p+q}: \sigma_1[i] =1, \sigma_2[i] = 0} z_{e,\sigma_1\sigma_2}\\
& = & \sum_{\sigma_1 \in \{0,1\}^{p+q}: \sigma_1[i] = 1} z_{u,\sigma_1}- \sum_{\sigma_1, \sigma_2 \in \{0,1\}^{p+q}: \sigma_1[i] =1, \sigma_2[i] = 1} z_{e,\sigma_1\sigma_2}\\
& \geq & \sum_{\sigma_1 \in \{0,1\}^{p+q}: \sigma_1[i] = 1} z_{u,\sigma_1} - \sum_{\sigma_2 \in \{0,1\}^{p+q}: \sigma_2[i] = 1} z_{v,\sigma_2}\\
& = & \sum_{\sigma \in \{0,1\}^{p+q}: \sigma[i] = 1} z_{u,\sigma} - \sum_{\sigma \in \{0,1\}^{p+q}: \sigma[i] = 1} z_{v,\sigma}
\end{eqnarray*}

 If edge $e$ has infinite weight, then $x_e = 0$ and $\sum_{\sigma \in \{0,1\}^{p+q}: \sigma[i] = 1} z_{u,\sigma} \leq \sum_{\sigma \in \{0,1\}^{p+q}: \sigma[i] = 1} z_{v,\sigma}$
\item We prove the following two statements which in turn imply that for $\sigma \in \{0,1\}^p, \sigma' \in \{0,1\}^q$, if $J_\sigma \neq Y_i$, then $z_{a_i,\sigma\cdot\sigma'} = 0$. 
\begin{eqnarray*}
\forall j \in Y_i, \sum_{\sigma\in\{0,1\}^p\sigma'\in\{0,1\}^q: \sigma[j] = 1} z_{a_i,\sigma\cdot \sigma'} &=& 1\\
\forall j \in [1,p] \setminus Y_i, \sum_{\sigma\in\{0,1\}^p\sigma'\in\{0,1\}^q: \sigma[j] = 1} z_{a_i,\sigma\cdot\sigma'} &=& 0
\end{eqnarray*}
Let $j \in Y_i$, then by construction of $G$, there exists an infinite weight edge from $a_j$ to $a_i$. Using the result from previous part we get 
\begin{equation*}
\sum_{\sigma \in \{0,1\}^p\sigma' \in \{0,1\}^q: \sigma[j] = 1} z_{a_i,\sigma\cdot \sigma'} \geq \sum_{\sigma \in \{0,1\}^p\sigma' \in \{0,1\}^q: \sigma[j] = 1} z_{a_j,\sigma\cdot \sigma'} 
\end{equation*}
\LabelLP enforces that term on the right side is lower bounded by
$1$ ($a_j$ reachable from itself). Hence, term on the left side is
lower bounded by $1$. Since, it is also upper bounded by $1$, it must
be equal to $1$.

Let $j \in [1,p]\setminus Y_i$. By definition of $Y_i$, we have
$N_H^+(a_j) \not \subseteq N_H^+(a_i)$. That is, there exists $j' \in
[1,q]$ such that $a_jb_{j'} \in E_H$ and $a_ib_{j} \not \in
E_H$. Since $a_jb_{j'} \in E_H$, \LabelLP enforces that
\begin{equation*}
\sum_{\sigma \in \{0,1\}^p\sigma' \in \{0,1\}^q: \sigma[j] = 1} z_{b_{j'},\sigma\cdot \sigma'}=0
\end{equation*}

Also, we have $a_i b_{j'} \not \in E_H$ and hence, there is
an infinite weight edge from $a_i$ to $b_{j'}$ in $G$. 
Applying the result from previous part, we get

\begin{equation*}
\sum_{\sigma \in \{0,1\}^p\sigma' \in \{0,1\}^q: \sigma[j] = 1} z_{a_i,\sigma\cdot \sigma'}\leq \sum_{\sigma \in \{0,1\}^p\sigma' \in \{0,1\}^q: \sigma[j] = 1} z_{b_{j'},\sigma\cdot \sigma'}=0
\end{equation*}

Next, to prove that $z_{a_i,\sigma}' = 1$ if $J_\sigma = Y_i$ and $0$ otherwise, we argue as follows:
\begin{eqnarray*}
1 &=& \sum_{\sigma \in \{0,1\}^p,\sigma' \in \{0,1\}^q} z_{a_i,\sigma\cdot\sigma'}\\
& = &\sum_{\sigma \in \{0,1\}^p: J_\sigma = Y_i} \sum_{\sigma' \in \{0,1\}^q} z_{a_i,\sigma\cdot\sigma'}+\sum_{\sigma \in \{0,1\}^p: J_\sigma \neq Y_i} \sum_{\sigma' \in \{0,1\}^q} z_{a_i,\sigma\cdot \sigma'}\\
& = & \sum_{\sigma \in \{0,1\}^p: J_\sigma = Y_i} z_{a_i,\sigma}'
\end{eqnarray*}
\item Again, we prove the following two statements which in turn implies that for $\sigma \in \{0,1\}^p,\sigma' \in \{0,1\}^q$ if $J_\sigma \neq Z_j$, then $z_{b_j,\sigma\cdot\sigma'} = 0$:
\begin{eqnarray*}
\forall i \in Z_j, \sum_{\sigma\in\{0,1\}^p\sigma'\in\{0,1\}^q: \sigma[i] = 1} z_{b_j,\sigma\cdot \sigma'} &=& 1\\
\forall i \in [1,p] \setminus Z_j, \sum_{\sigma\in\{0,1\}^p\sigma'\in\{0,1\}^q: \sigma[i] = 1} z_{b_j,\sigma\cdot\sigma'} &=& 0
\end{eqnarray*}

Let $i \in Z_j$. Hence, there is an infinite weight directed edge from
$a_i$ to $b_j$ in $G$. Applying the result from first part, we get

\begin{equation*}
\sum_{\sigma\in\{0,1\}^p\sigma'\in\{0,1\}^q: \sigma[i] = 1} z_{b_j,\sigma\cdot \sigma'} \geq \sum_{\sigma\in\{0,1\}^p\sigma'\in\{0,1\}^q: \sigma[i] = 1} z_{a_i,\sigma\cdot \sigma'}
\end{equation*}

\LabelLP enforces that right side is lower bounded by $1$ ($a_i$
reachable from itself). Hence, left side is lower bounded by $1$. It
is also upper bounded by $1$ and hence, it must be equal to $1$.

Let $i \in [1,p] \setminus Z_j$. Then, $a_ib_j \in E_H$ and hence, from the constraint in \LabelLP
\begin{equation*}
\sum_{\sigma\in\{0,1\}^p\sigma'\in\{0,1\}^q: \sigma[i] = 1} z_{b_j,\sigma\cdot \sigma'} = 0
\end{equation*}

Next, to prove that $z_{b_j,\sigma}' = 1$ if $J_\sigma = Z_j$ and $0$ otherwise, we argue as follows:
\begin{eqnarray*}
1 &=& \sum_{\sigma \in \{0,1\}^p,\sigma' \in \{0,1\}^q} z_{b_j,\sigma\cdot\sigma'}\\
& = &\sum_{\sigma \in \{0,1\}^p: J_\sigma = Z_j} \sum_{\sigma' \in \{0,1\}^q} z_{b_j,\sigma\cdot\sigma'}+\sum_{\sigma \in \{0,1\}^p: J_\sigma \neq Z_j} \sum_{\sigma' \in \{0,1\}^q} z_{b_j,\sigma\cdot\sigma'}\\
& = & \sum_{\sigma \in \{0,1\}^p: J_\sigma = Z_j} z_{b_j,\sigma}'
\end{eqnarray*}

\item For an undirected edge $e = uv$, $x_e = \sum_{\sigma_1,\sigma_2
    \in \{0,1\}^{p+q}:\sigma_1 \neq \sigma_2}
  z_{e,\sigma_1\sigma_2}$. Since, weight of $e$ is infinite, $x_e$
  must be $0$. Hence, $z_{e,\sigma_1\sigma_2} = 0$ if $\sigma_1 \neq
  \sigma_2$. For $\sigma_1 \in \{0,1\}^{p+q}$

\begin{eqnarray*}
z_{u,\sigma_1} &=& \sum_{\sigma_2 \in \{0,1\}^{p+q}} z_{e,\sigma_1\sigma_2}\\
 &=& z_{e,\sigma_1\sigma_1}\\
 & = &  \sum_{\sigma_2 \in \{0,1\}^{p+q}} z_{e,\sigma_2\sigma_1}\\
 & = & z_{v,\sigma_1}
\end{eqnarray*} 

Let $t = (u) \in T_C$ s.t. $\psitc(t) = \psiai$. If $u = a_i$, then we have already proved that $z'_{u,\sigma} = 1$ if $J_\sigma = Y_i$ and $0$ otherwise. If $u \neq a_i$, then there is an infinite weight undirected edge between $u$ and $a_i$ in $G$. Hence, $z_{u,\sigma}' = z_{a_i,\sigma}'$ for all $\sigma \in \{0,1\}^p$ and the result follows.
\end{itemize}
\end{proof}

\begin{proofof}{Lemma~\ref{lem:label_to_basic}} 
Next, we argue about the cost of the solution $\bz'$. We assume here
that $(\bx,\bz)$ has finite cost.
For a constraint $t \in T_C$, the cost according to $\bz'$ is
$\wtc(t) \sum_{\alpha \in L^{|t|}} z'_{t,\alpha}\cdot \psi_t(\alpha)$.
We consider four cases based on the type of $t$.

\begin{itemize}
\item $t$ corresponds to constraint of the form $\psiai(v)$. As argued
  in Lemma~\ref{lem:label_sol_properties}, then $z'_{t,\sigma} =
  z_{v,\sigma} = 0$ if $J_\sigma \neq Y_i$ and $1$ if $J_{\sigma} =
  Y_i$. On the other hand, $\psiai(\sigma) = 0$ if $J_\sigma = Y_i$ and $\infty$ if
  $J_{\sigma} \neq Y_i$. Hence, $z'_{t,\sigma}\psiai(\sigma) = 0$
  for all $\sigma$. Therefore this constraint contributes zero to the cost.
\item $t$ corresponds to constraint of the form $\psibj(v)$.  From
  Lemma~\ref{lem:label_sol_properties}, $z_{t,\sigma}' = z_{v,\sigma}'
  = 0$ if $J_\sigma \neq Z_j$ and $1$ if $J_\sigma =
  Z_j$. And $\psibj(\sigma) = 0$ if $J_\sigma = Z_j$ and $\infty$ if
  $J_{\sigma} \neq Z_j$. Hence, $z_{t,\sigma}'\psibj(\sigma) = 0$ for
  all $\sigma$. Therefore the contribution of this constraint is zero.
\item $t$ corresponds to constraint $\cC(u,v)$. This corresponds to
a directed edge $e_t=(u,v)$ in $G$ and the cost paid by $(\bx,\bz)$ is $x_{e_t}$.
The cost for $t$ in $\bz'$ is given by:
\begin{eqnarray*}
\sum_{\sigma_1,\sigma_2 \in \{0,1\}^p} z'_{t,\sigma_1\sigma_2} \cdot \cC(\sigma_1,\sigma_2) & = &\sum_{\sigma_1,\sigma_2 \in \{0,1\}^p: \sigma_1 \not \leq \sigma_2} z'_{t,\sigma_1\sigma_2}\\
& = & \sum_{\sigma_1,\sigma_2 \in \{0,1\}^p: \sigma_1 \not \leq \sigma_2} \sum_{\sigma',\sigma'' \in \{0,1\}^q} z_{e_t,\sigma_1\cdot\sigma'\sigma_2\cdot\sigma''}\\
& \leq & \sum_{\sigma_1,\sigma_2 \in \{0,1\}^p, \sigma',\sigma'' \in \{0,1\}^q :\sigma_1\cdot\sigma' \not \leq \sigma_2 \cdot \sigma''} z_{e_t,\sigma_1\cdot\sigma'\sigma_2\cdot\sigma''}\\
& = & x_{e_t}.
\end{eqnarray*}
First equality follows from the fact that $C(\sigma_1,\sigma_2) = 0$
if $\sigma_1 \leq \sigma_2$ and $1$ otherwise. Penultimate inequality
follows because if $\sigma_1 \not \leq \sigma_2$, then
$\sigma_1\cdot\sigma' \not \leq \sigma_2 \cdot \sigma''$ for any
$\sigma',\sigma'' \in \{0,1\}^q$.

\item $t$ corresponds to constraint $\NAE_2(u,v)$. This corresponds to
an undirected edge $e_t=uv$ in $G$ and the cost paid by $(\bx,\bz)$ is 
$x_{e_t}$. The cost for $t$ in $\bz'$ is given by:
\begin{eqnarray*}
\sum_{\sigma_1,\sigma_2 \in \{0,1\}^p} z'_{t,\sigma_1\sigma_2} \cdot \NAE_2(\sigma_1,\sigma_2) &=& \sum_{\sigma_1,\sigma_2 \in \{0,1\}^p: \sigma_1 \neq \sigma_2} z'_{t,\sigma_1\sigma_2} \\
& = & \sum_{\sigma_1,\sigma_2 \in \{0,1\}^p: \sigma_1 \neq \sigma_2}\sum_{\sigma',\sigma'' \in \{0,1\}^q} z_{e_t,\sigma_1\cdot \sigma'\sigma_2\cdot\sigma''}\\
& \leq & \sum_{\sigma_1,\sigma_2 \in \{0,1\}^p,\sigma',\sigma'' \in \{0,1\}^q: \sigma_1\cdot \sigma' \neq \sigma_2\cdot\sigma''}  z_{e_t,\sigma_1\cdot \sigma'\sigma_2\cdot\sigma''}\\
& = & x_{e_t}.
\end{eqnarray*}
\end{itemize}

Combining the four cases, the total cost of the
solution $\bz'$ is equal to the cost of the binary constraints
each of which corresponds to an edge in $G$ with the same weight.
From the above inequalities we see that the cost is atmost
$\sum_{e \in E_G} w_G(e)x_e$ which is the cost
of $(\bx,\bz)$. 
\end{proofof}

\paragraph{From \BasicLP to \LabelLP:} Let $\bz$ be a \BasicLP solution
to $\Ic$. Let $\sigma_0 = 1^q$. We define a solution $(\bx',\bz')$ 
to \LabelLP for $\Im$ as follows:
\begin{itemize}
\item For $v \in V_C, \forall \sigma_1 \in \{0,1\}^p, \sigma_2 \in \{0,1\}^q$,
\begin{equation*}
z'_{v,\sigma_1\cdot\sigma_2} = \begin{cases}z_{v,\sigma_1} & \sigma_2 = \sigma_0\\
0 & otherwise
\end{cases}
\end{equation*}
\item For unary constraint $t = (u)$ s.t. $u \not \in \{a_1,\dots,a_p,b_1,\dots,b_q\}$ and $\sigma_1,\sigma_2 \in \{0,1\}^p, \sigma_3,\sigma_4 \in \{0,1\}^q$,
\begin{equation*}
z_{e_t,\sigma_1\cdot\sigma_3\sigma_2\cdot\sigma_4}'=\begin{cases}
z_{u,\sigma_1} & \sigma_1 = \sigma_2, \sigma_3 = \sigma_4 = \sigma_0\\
0 & otherwise
\end{cases}
\end{equation*}
\item For binary constraint $t = (u,v) \in T_C$ such that $\psitc(t) = \cC$ or $\NAE_2$, and $\sigma_1,\sigma_2 \in \{0,1\}^p, \sigma_3,\sigma_4 \in \{0,1\}^q$
\begin{equation*}
z'_{e_t,\sigma_1\cdot\sigma_3\sigma_2\cdot\sigma_4} = \begin{cases}
z_{t,\sigma_1\sigma_2} & \sigma_3 = \sigma_4 = \sigma_0\\
0 & otherwise
\end{cases}
\end{equation*}
\item The edge variables $x_e'$ are induced by the $z'$ variables. We 
explicitly write them down.
For directed edge $e \in E_G$, $x_e' = \sum_{\sigma_1,\sigma_2 \in \{0,1\}^{p}, \sigma_3,\sigma_4 \in \{0,1\}^q: \sigma_1\cdot\sigma_3 \not \leq \sigma_2\cdot\sigma_4} z'_{e,\sigma_1\cdot\sigma_3\sigma_2\cdot\sigma_4}$. For undirected edge $e \in E_G$, $x_e' = \sum_{\sigma_1,\sigma_2 \in \{0,1\}^{p},\sigma_3,\sigma_4 \in \{0,1\}^q: \sigma_1\cdot \sigma_3 \neq \sigma_2\sigma_4} z'_{e,\sigma_1\cdot\sigma_3\sigma_2\cdot\sigma_4}$
\end{itemize}

It is easy to check that $(\bx',\bz')$ is integral if $\bz$ is integral.

\begin{lemma}
   $(\bx',\bz')$ is a feasible solution to \LabelLP for $\Im$. 
\end{lemma}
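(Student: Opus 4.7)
The plan is to verify each constraint of \LabelLP for $(\bx', \bz')$ in turn, using the feasibility of $\bz$ in \BasicLP and the structure of $\beta_H$. Throughout I decompose every label as $\sigma = \sigma_1 \cdot \sigma_2$ with $\sigma_1 \in \{0,1\}^p$ and $\sigma_2 \in \{0,1\}^q$, and use the structural fact that $z'_{v, \sigma_1 \cdot \sigma_2}$ is nonzero only for $\sigma_2 = \sigma_0 = 1^q$. I will assume $\bz$ has finite cost; otherwise the constructed $(\bx', \bz')$ need not be feasible, but the associated cost bound is vacuous and we can substitute any trivially feasible labelling.

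First I would dispatch the normalisation and terminal-labelling constraints. The sum $\sum_{\sigma \in L} z'_{v, \sigma}$ collapses to $\sum_{\sigma_1 \in \{0,1\}^p} z_{v, \sigma_1}$, which equals $1$ by \BasicLP. For the terminal $a_i$ (with $i \in [1,p]$), finiteness of cost and the unary predicate $\psiai$ force the support of $z_{u_i, \cdot}$ onto the unique $\sigma_1$ with $J_{\sigma_1} = Y_i$; since $i \in Y_i$, the $i$-th bit of that $\sigma_1$ is $1$, so $z'_{a_i, \sigma} = 0$ whenever $\sigma[i] = 0$. For a terminal $b_j$ whose global index is $p+j$, padding by $\sigma_0 = 1^q$ makes $\sigma[p+j] = 1$ on the support automatically. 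For the demand-separation constraint at $(a_i, b_j) \in E_H$, we have $i \notin Z_j$, and $\psibj$ restricts the support of $z_{v_j, \cdot}$ to $\sigma_1$ with $J_{\sigma_1} = Z_j$, which gives $\sigma_1[i] = 0$ on the support.

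Next I would verify the edge-vertex consistency constraints by case analysis on the type of edge in $G$. For a binary-constraint edge $e_t$ coming from $\cC(u, v)$ or $\NAE_2(u, v)$, the sum $\sum_{\sigma'_2} z'_{e_t, \sigma'_1 \sigma'_2}$ vanishes unless the last $q$ bits of $\sigma'_1$ are $\sigma_0$; when they are, it collapses to $\sum_{\alpha_2} z_{t, \alpha_1 \alpha_2} = z_{u, \alpha_1}$ by \BasicLP consistency, which equals $z'_{u, \sigma'_1}$. For an infinite-weight undirected edge $e_t = a_i u$ attached by a unary constraint $\psiai(u)$ with $u \neq a_i$, both endpoints carry a unary $\psiai$ in \BasicLP, so their supports both lie on the unique $\sigma_1$ with $J_{\sigma_1} = Y_i$, giving $z_{a_i, \sigma_1} = z_{u, \sigma_1}$; the definition of $z'$ on $e_t$ then makes both consistency constraints at $e_t$ hold. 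The case of $\psibj(u)$ is analogous.

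Finally I need to extend $\bz'$ to the edges introduced under Assumptions II and III, which were not covered by the earlier definition. For each infinite-weight directed edge $e = (a_{i'}, a_i)$ with $i' \in Y_i$, let $\tau^{(i')}$ and $\tau^{(i)}$ denote the unique $p$-labels with $J_{\tau^{(i')}} = Y_{i'}$ and $J_{\tau^{(i)}} = Y_i$ respectively; I set $z'_{e,\, \tau^{(i')} \cdot \sigma_0,\, \tau^{(i)} \cdot \sigma_0} = 1$ and every other entry to zero. The containment $Y_{i'} \subseteq Y_i$, immediate from $i' \in Y_i$ by transitivity of set-inclusion on the neighbourhoods $N_H^+$, gives $\tau^{(i')} \le \tau^{(i)}$, so $x'_e = 0$ as required. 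For edges $(a_i, b_j)$ with $i \in Z_j$, the analogous containment $Y_i \subseteq Z_j$ yields the same conclusion. The consistency constraints at these edges are automatic because each endpoint has a single-label distribution. The main obstacle is simply organising the case analysis across the four edge types and establishing the two containment facts above; once those are in hand, the rest is routine unwinding of the definitions.
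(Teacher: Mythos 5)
Your proof is correct and follows the same overall strategy as the paper (verify each \LabelLP{} constraint in turn, using finiteness of the \BasicLP{} cost to force the terminals onto point-mass distributions). There are two places, however, where you are more careful than the paper's own proof, and these are worth flagging rather than hiding.

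First, you explicitly verify the demand-separation constraint of \LabelLP{} (that $z'_{b_j,\sigma}=0$ whenever $\sigma[i]=1$ and $(a_i,b_j)\in E_H$) by observing that $i\notin Z_j$ forces $\sigma_1[i]=0$ on the support of $z_{v_j,\cdot}$. The paper's "Constraint 2" block only checks the self-reachability constraints and never spells out the separation constraint, so your treatment closes a small omission.

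Second, and more substantively, you noticed that the paper's definition of $\bz'$ only assigns edge variables $z'_{e,\cdot}$ to the edges $e_t$ coming from constraints of $\Ic$; the infinite-weight edges $(a_{i'},a_i)$ and $(a_i,b_j)$ introduced by Assumptions~II and~III receive no values and are never mentioned in the paper's feasibility proof. Your fix — assigning each such edge the unique point-mass joint distribution dictated by its two point-mass endpoints, and then checking that $x'_e=0$ via the containments $Y_{i'}\subseteq Y_i$ (which follows from $i'\in Y_i$ by transitivity of $\subseteq$ on the sets $N^+_H(\cdot)$) and $Y_i\subseteq Z_j$ (which follows from $i\in Z_j$, since $b_j\notin N^+_H(a_i)$ and $N^+_H(a_{i'})\subseteq N^+_H(a_i)$ for $i'\in Y_i$) — is exactly what is needed, and both containments are correct. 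This is a genuine gap in the paper's argument that you identified and repaired, not merely a stylistic difference.

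Your remark about the finite-cost assumption matches the paper's implicit use of it, and the rest of the verification (normalization, consistency at $e_t$ edges, the $x'_e$ identities) is routine and agrees with the paper. In short: correct, same high-level approach, but strictly more complete than the paper's proof.
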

\begin{proof}
 It is easy to check that all the variables are non-negative
 and upper bounded by $1$.

 We show that the other constraints are satisfied one at a
 time. Recall that \LabelLP considered here has a constraint for
 undirected edges in addition to the constraints showed in
 Fig~\ref{fig:labellp}. The label set for \LabelLP is
 $\{0,1\}^{p+q}$ which we can write as $\{\sigma_1 \cdot \sigma_2 | \sigma_1 \in \{0,1\}^p, \sigma_2 \in \{0,1\}^q\}$.
\begin{itemize}
\item[Constraint 1:] For each $v$, $\sum_{\sigma \in \{0,1\}^{p+q}} z_{v,\sigma}' = 1$
\begin{eqnarray*}
\sum_{\sigma_1 \in \{0,1\}^p\sigma_2\in \{0,1\}^q} z_{v,\sigma_1\cdot\sigma_2}' & = & \sum_{\sigma_1 \in \{0,1\}^p} z_{v,\sigma_1\cdot\sigma_0}'  \\
& = & \sum_{\sigma_1 \in \{0,1\}^p} z_{v,\sigma_1}\\
& = & 1
\end{eqnarray*}
\item[Constraint 2:] For $\sigma_1 \in \{0,1\}^p, \sigma_2 \in \{0,1\}^q$, $z_{a_i,\sigma_1\sigma_2}' = 0$ if $\sigma_1[i] = 0$. And $z_{b_j,\sigma_1\sigma_2}' = 0$ if $\sigma_2[j] = 0$. 

  There is $t = (a_i) \in T_C$ such that $\psitc(t) = \psiai$. For
  $\bz$ to be a finite valued solution, we must have $z_{t,\sigma_1} =
  z_{a_i,\sigma_1} =0$ if $J_{\sigma_1} \neq Y_i$. Since, $i \in Y_i$,
  we have that $z_{a_i,\sigma_1} = 0$ if $\sigma_1[i] = 0$. And hence, 
  $z_{a_i,\sigma_1\sigma_2}' = 0$ if $\sigma_1[i] = 0$.

For $v \in V_C, z'_{v,\sigma_1\sigma_2} = 0$ if $\sigma_2 \neq \sigma_0 = 1^q$. Hence, $z'_{v,\sigma_1\sigma_2} = 0$ if $\sigma_2[j] = 0$. In particular, $z'_{b_j,\sigma_1\sigma_2} = 0$ if $\sigma_2[j] = 0$. 
\item[Constraint 3:] For $e = (u,v) \in E_G, \sigma_1 \in \{0,1\}^p,\sigma_3 \in \{0,1\}^q$, $z_{u,\sigma_1\cdot\sigma_3}' = \sum_{\sigma_2 \in \{0,1\}^p\sigma_4 \in \{0,1\}^q} z_{e,\sigma_1\cdot\sigma_3\sigma_2\cdot\sigma_4}$. If $\sigma_3 \neq \sigma_0$, then all the terms are zero and hence, the equality holds. Else, $\sigma_3 = \sigma_0$ and there are two types of edges:
\begin{itemize}
\item For $t = (u)$, $e = e_t$
\begin{eqnarray*}
\sum_{\sigma_2 \in \{0,1\}^p, \sigma_4 \in \{0,1\}^q} z_{e,\sigma_1\cdot\sigma_0\sigma_2\cdot\sigma_4}' & = & \sum_{\sigma_2 \in \{0,1\}^p} z_{e,\sigma_1\cdot\sigma_0\sigma_2\cdot\sigma_0}'\\
& = & z_{u,\sigma_1}\\
& = & z_{u,\sigma_1\cdot\sigma_0}' = z_{u,\sigma_1\cdot\sigma_3}'\\
\end{eqnarray*}
\item For $t = (u,v), e = e_t$, 
\begin{eqnarray*}
\sum_{\sigma_2 \in \{0,1\}^p, \sigma_4 \in \{0,1\}^q} z_{e,\sigma_1\cdot\sigma_0\sigma_2\cdot\sigma_4}' & = & \sum_{\sigma_2 \in \{0,1\}^q } z_{e,\sigma_1\cdot\sigma_0\sigma_2\cdot\sigma_4}'\\
 & = &\sum_{\sigma_2\in \{0,1\}^q} z_{e,\sigma_1\sigma_2} \\
& = & z_{u,\sigma_1}\\
& = & z_{u,\sigma_1\cdot\sigma_0}' = z_{u,\sigma_1\cdot\sigma_3}'
\end{eqnarray*}
\item[Constraint 4:] For $e = (u,v) \in E_G, \sigma_2 \in \{0,1\}^p,\sigma_4 \in \{0,1\}^q$, $z_{v,\sigma_2\cdot\sigma_4}' = \sum_{\sigma_1 \in \{0,1\}^p\sigma_3 \in \{0,1\}^q} z_{e,\sigma_1\cdot\sigma_3\sigma_2\cdot\sigma_4}$. Proof is similar to the previous part.
\item[Constraint 5:] For directed edge $e$, $x_{e}' - \sum_{\sigma_1,\sigma_3 \in \{0,1\}^p,\sigma_2\sigma_4\in \{0,1\}^q: \sigma_1 \cdot \sigma_3 \not \leq \sigma_2 \cdot \sigma_4} z_{e,\sigma_1\cdot\sigma_3\sigma_2\sigma_4}' = 0$. This is true by definition of $x_e'$. 
\item[Constraint 6:] For undirected edge $e$, $x_{e}' - \sum_{\sigma_1,\sigma_3 \in \{0,1\}^p,\sigma_2\sigma_4\in \{0,1\}^q: \sigma_1 \cdot \sigma_3 \neq \sigma_2 \cdot \sigma_4} z_{e,\sigma_1\cdot\sigma_3\sigma_2\sigma_4}' = 0$. This is true as well from the definition of $x_e'$. 
\end{itemize}
 
\end{itemize}
\end{proof}

\begin{lemma} The cost $(\bx',\bz')$ is upper bounded by the cost of $\bz$. 
\end{lemma}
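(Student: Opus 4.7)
The plan is to assume $\bz$ has finite cost (otherwise the statement is vacuous), partition the edges of $G$ according to how they arose in the reduction, and bound $w_G(e)\,x'_e$ for each class by the corresponding contribution of a constraint in the cost of $\bz$. Recall that the cost of $\bz$ for \BasicLP is $\sum_{t\in T_C} w_t \sum_{\alpha} z_{t,\alpha}\,\psi_t(\alpha)$ and that the cost of $(\bx',\bz')$ for \LabelLP is $\sum_{e\in E_G} w_G(e)\,x'_e$. Because $\bz$ has finite cost, the arguments used in Lemma~\ref{lem:label_sol_properties} go through for $\bz$ itself: namely, $z_{a_i,\sigma}$ is supported on the single label $\sigma^*_i \in \{0,1\}^p$ with $J_{\sigma^*_i}=Y_i$, and $z_{b_j,\sigma}$ on the single label with $J_{\sigma}=Z_j$. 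Thus every unary constraint in $\bz$ contributes $0$ to its cost.

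Split the edges of $G$ into four classes: (i) directed edges $e_t=(u,v)$ coming from a constraint $\cC(u,v)$; (ii) undirected edges $e_t=uv$ coming from $\NAE_2(u,v)$; (iii) infinite-weight undirected edges $a_i u$ (resp.\ $b_j v$) coming from unary constraints $\psiai(u)$ with $u\neq a_i$ (resp.\ $\psibj(v)$ with $v\neq b_j$); and (iv) infinite-weight directed edges $(a_{i'},a_i)$ for $i'\in Y_i$ and $(a_i,b_j)$ for $i\in Z_j$ added to enforce Assumptions II and III. For class (i), the definition of $\bz'$ forces $z'_{e_t,\sigma_1\cdot\sigma_3\,\sigma_2\cdot\sigma_4}=0$ unless $\sigma_3=\sigma_4=\sigma_0$, so $\sigma_1\cdot\sigma_3\not\le\sigma_2\cdot\sigma_4$ reduces to $\sigma_1\not\le\sigma_2$, yielding $x'_{e_t}=\sum_{\sigma_1\not\le\sigma_2}z_{t,\sigma_1\sigma_2}=\sum_\alpha z_{t,\alpha}\,\cC(\alpha)$. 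Hence $w_G(e_t)\,x'_{e_t}$ exactly equals the contribution of $t$ to the cost of $\bz$. An identical computation works for class (ii), using $\sigma_1\neq\sigma_2$ in place of $\sigma_1\not\le\sigma_2$.

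For class (iii) the unary construction places mass only on diagonal entries $\sigma_1\cdot\sigma_0=\sigma_2\cdot\sigma_0$, so $x'_{e_t}=0$ and the edge, although of infinite weight, contributes nothing. The main obstacle is class (iv), because $\bz'$ does not directly prescribe the flow variables on these edges; we must show that consistent values $z'_{e,\sigma_1\sigma_2}$ achieving $x'_e=0$ exist. For the edge $(a_{i'},a_i)$ with $i'\in Y_i$, observe that $Y_{i'}\subseteq Y_i$ by transitivity of the containment $N_H^+(a_j)\subseteq N_H^+(a_{i'})\subseteq N_H^+(a_i)$, so $\sigma^*_{i'}\le\sigma^*_i$ coordinate-wise, and therefore $\sigma^*_{i'}\cdot\sigma_0\le\sigma^*_i\cdot\sigma_0$; setting $z'_{e,\sigma^*_{i'}\cdot\sigma_0,\,\sigma^*_i\cdot\sigma_0}=1$ gives a consistent assignment with $x'_e=0$. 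An analogous argument handles $(a_i,b_j)$ using $i\in Z_j$, which forces $a_i\in$ (the preimage of) the coordinates of $b_j$'s label and again yields $\sigma^*_i\cdot\sigma_0\le$ (label of $b_j$).

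Summing over all edges, $\sum_{e\in E_G}w_G(e)\,x'_e$ equals the total contribution of the binary constraints of $\Ic$ to the cost of $\bz$, and since the unary contributions in $\bz$ are $0$ by finite-cost, this equals the full cost of $\bz$. In particular the cost of $(\bx',\bz')$ is at most that of $\bz$, as required.
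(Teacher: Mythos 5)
Your proof is correct and follows essentially the same case analysis as the paper for the three classes of constraint-derived edges ($\cC$, $\NAE_2$, and unary $\psiai$/$\psibj$). Where you go further is class~(iv): the paper's case analysis covers only edges of the form $e_t$ coming from constraints in $\Ic$, and then simply asserts that ``any infinite weight edge has $x'_e = 0$'' — but the directed infinite-weight edges $(a_{i'},a_i)$ for $i'\in Y_i$ and $(a_i,b_j)$ for $i\in Z_j$, which were added to $G$ to satisfy Assumptions II and III, are not among the $e_t$'s and have no $z'_{e,\cdot\cdot}$ variables prescribed by the construction of $(\bx',\bz')$. You correctly notice that one must exhibit consistent edge-variable values for these edges and show they incur zero cost. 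Your verification that $Y_{i'}\subseteq Y_i$ (hence $\sigma^*_{i'}\le\sigma^*_i$) and $Y_i\subseteq Z_j$ (since $b_j\notin N_H^+(a_i)$ and $N_H^+(a_{i'})\subseteq N_H^+(a_i)$ for all $i'\in Y_i$) is exactly right, and the resulting point-mass assignment on the zero-cost diagonal edge is consistent with the vertex distributions (which, by finiteness of the cost of $\bz$ and the $\infty$-valued unary predicates, are themselves concentrated on single labels). One small nit: the appeal to Lemma~\ref{lem:label_sol_properties} is a slight misattribution — that lemma is about \LabelLP solutions of $\Im$, not \BasicLP solutions of $\Ic$ — but the fact you use is a direct consequence of the finite-cost assumption and the $\infty$-valued unary predicates, so the substance is unaffected. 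Overall your write-up is a modest but genuine completion of the paper's sketch on this point.
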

\begin{proof}
Recall that $\sigma_0 = 1^q$. We consider three cases based on the type of edge $e$
\begin{itemize}
\item $e = e_t = (u,v)$ for constraint $\cC(u,v)$. 
\begin{eqnarray*}
x_{e_t}' &=& \sum_{\sigma_1,\sigma_2 \in \{0,1\}^{p}, \sigma_3,\sigma_4 \in \{0,1\}^q: \sigma_1\cdot\sigma_3 \not \leq \sigma_2\cdot\sigma_4} z'_{e_t,\sigma_1\cdot\sigma_3\sigma_2\cdot\sigma_4}\\
& = & \sum_{\sigma_1,\sigma_2 \in \{0,1\}^{p}: \sigma_1\cdot\sigma_0\not \leq \sigma_2\cdot\sigma_0} z'_{e_t,\sigma_1\cdot\sigma_0 \sigma_2\cdot\sigma_0}\\
& = & \sum_{\sigma_1,\sigma_2 \in \{0,1\}^{p}: \sigma_1\not \leq \sigma_2} z_{t,\sigma_1\sigma_2}
\end{eqnarray*}

\item $e = e_t = (u,v)$ for constraint $NAE_2(u,v)$

\begin{eqnarray*}
x_{e_t} & = &\sum_{\sigma_1,\sigma_2 \in \{0,1\}^{p}, \sigma_3,\sigma_4 \in \{0,1\}^q: \sigma_1\cdot\sigma_3 \neq \sigma_2\cdot\sigma_4} z'_{e_t,\sigma_1\cdot\sigma_3\sigma_2\cdot\sigma_4}\\
& = & \sum_{\sigma_1,\sigma_2 \in \{0,1\}^{p}: \sigma_1\cdot\sigma_0\neq \sigma_2\cdot\sigma_0} z'_{e_t,\sigma_1\cdot\sigma_0 \sigma_2\cdot\sigma_0}\\
& = & \sum_{\sigma_1,\sigma_2 \in \{0,1\}^{p}: \sigma_1\neq \sigma_2} z_{t,\sigma_1\sigma_2}
\end{eqnarray*}

\item $e = e_t = (u,a_i)$ or $(u,b_j)$ for constraint $\psiai(u)$ or $\psibj(u)$. In such a case $z'_{e_t,\sigma_1 \cdot \sigma_3\sigma_2\cdot\sigma_4}$ is non-zero only if $\sigma_1 = \sigma_2, \sigma_3 = \sigma_4 = \sigma_0$. Hence, 
\begin{equation*}
x_{e_t}' = \sum_{\sigma_1,\sigma_2 \in \{0,1\}^{p},\sigma_3,\sigma_4 \in \{0,1\}^q: \sigma_1\cdot \sigma_3 \neq \sigma_2\sigma_4} z'_{e_t,\sigma_1\cdot\sigma_3\sigma_2\cdot\sigma_4} =0
\end{equation*}
\end{itemize}
Combining the above three facts we get the following. First, infinite
any infinite weight edge $e$ in $G$ has $x'_e = 0$. For any
finite weight edge $x'_e$ is the same as the fractional cost paid
by the corresponding finite weight binary constraint in $\Ic$.

\iffalse
\begin{eqnarray*}
\sum_{e \in E_G} w_G(e) x_e' &\leq& \sum_{t \in T_C, \psitc(t) = \cC} \wtc(t) \sum_{\sigma_1, \sigma_2 \in \{0,1\}^p: \sigma_1 \not \leq \sigma_2} z_{t,\sigma_1\sigma_2} + \sum_{t \in T_C, \psitc(t) = \NAE_2} \wtc(t) \sum_{\sigma_1, \sigma_2 \in \{0,1\}^p: \sigma_1 \neq \sigma_2} z_{t,\sigma_1\sigma_2}\\
& = & \sum_{t \in T_C, \psitc(t) = \cC} \sum_{\sigma_1, \sigma_2 \in \{0,1\}^p} z_{t,\sigma_1\sigma_2}\cC(\sigma_1,\sigma_2)+  \sum_{t \in T_C, \psitc(t) = \NAE_2} \wtc(t) \sum_{\sigma_1, \sigma_2 \in \{0,1\}^p} z_{t,\sigma_1\sigma_2}\NAE_2(\sigma_1,\sigma_2)\\
& \leq & \sum_{t\in T_C} \wtc(t) \sum_{\alpha \in L^{|t|}} z'_{t,\alpha}\cdot \psi_t(\alpha)
\end{eqnarray*}
\fi

Hence, cost of $(\bx',\bz')$ is upper bounded by cost of $\bz$. 
\end{proof}

%\begin{lemma}$\bz'$ as defined above satisfies the following properties:
%\begin{itemize}
%\item $\sum_{\sigma \in \{0,1\}^{p+q}} z_{v,\sigma} = 1$
%\item For $i \in [1,p]$, let $\sigma_1 \in \{0,1\}^p$ such that $J_{\sigma_1} = Y_i$ and $\sigma_2 = 1^q$. Then, $z_{a_i,\sigma_1\cdot\sigma_2}=1$
%\item For $j \in [1,q]$, let $\sigma_1 \in \{0,1\}^p$ such that $J_{\sigma_1} = Z_j$ and $\sigma_2 = 1^q$. Then, $z_{b_j,\sigma_1\cdot\sigma_2}=1$
%\item 
%\end{itemize}
%
%\end{lemma}

\subsubsection{Reduction from \DMulCH to \MinbCSP}
Let the \DMulCH instance be $I_M = (G=(V_G,E_G,w_G:E_G\rightarrow
R^+), (S \cup T, E_H))$. Recall that the supply graph satisfies
assumptions I, II, and III.  We reduce it an equivalent \MinbCSP
instance $I_C = (V_C, T_C, \psitc:T_C\rightarrow \beta_H ,
\wtc:T_C\rightarrow R^+)$ as follows.
\begin{itemize}
\item Vertex Set $V_C = V_G$.
\item $T_C, \psitc, \wtc$ are defined as follows:
\begin{itemize}
\item For every $a_i\in S$, add a tuple $t=(a_i)$ in $T_C$ with $\psitc(t) = \psiai$ and $\wtc(t) = 1$.
\item For every $b_j \in T$, add a tuple $t = (b_j)$ in $T_C$ with $\psitc(t) = \psibj$ and $\wtc(t) = 1$.
\item For every directed edge $e=(u,v) \in E_G$, add a tuple $t = (u,v)$ in $T_C$ with $\psitc(t) = \cC$ and $\wtc(t) = w_G(e)$.
\item For every undirected edge $e = uv \in E_G$, add a tuple $t = (u,v)$ in $T_C$ with $\psitc(t) = \NAE_2$ and $\wtc(t)= w_G(e)$.
\end{itemize}
\end{itemize}

The proof of equivalence between \LabelLP for $I_M$ and \BasicLP for $I_C$
is essentially identical to the proof for the reduction in the other
direction and hence we omit it.

%\vcomment{Should $\psi$ be replaced with $\Psi$ and $w_{T_C}$ with $W_{T_C}$?}

This finishes the proof of Theorem~\ref{thm:label_basic_equivalence}.

\section{Hardness for Non-bipartite Demand graphs}\label{sec:appendix-nonbipartite-hardness}

Here we prove Theorem~\ref{thm:hardness-nonbipartite} on the hardness
of approximation of \DMulCH when $H$ is fixed and may not be bipartite.
Let $\gamma_H$ denote the hardness of approximation for 
\DMulCH. Recall that $\alpha_H$ is the worst-case flow-cut gap for 
\DMulCH.

Let the demand graph be $H$ with $2^p$ vertices, $V_H = \{s_\sigma
\mid \sigma \in \{0,1\}^p\}$. If number of vertices not a power of
two, then we can add dummy isolated vertices without changing the
problem. We find $r=2p$ subgraphs $H_1,\dots,H_r$ such that $H =
H_1 \cup \dots \cup H_r$ and
\begin{itemize}
\item Each $H_i$ is a directed bipartite graph.
\item $\alpha_H \le \sum_{i=1}^r \alpha_{H_i}$. 
\item  For $1 \le i \le r$, there is an approximation preserving reduction from
  \DMulCHi to \DMulCH. Hence, $\gamma_H \geq \gamma_{H_i}$.
\end{itemize}

Since, $H_i$ is bipartite, Theorem~\ref{thm:hardness} implies, under
UGC, that $\gamma_{H_i} \geq \alpha_{H_i}-\eps$. 
Since, $\gamma_H \geq \gamma_{H_i}$
for all $i \in [1,r]$, we have $\gamma_H \geq \frac{1}{r} \sum_{i=1}^r
\gamma_{H_i}$. Therefore,
\begin{equation*}
\gamma_H  \geq \frac{1}{r}\sum_{i=1}^r \gamma_{H_i} \geq \frac{1}{r} \sum_{i=1}^r (\alpha_{H_i} - \eps) \geq \frac{1}{r} \alpha_H - \eps.
\end{equation*}

Since $r = 2\lceil \log k \rceil$ where $k = |V_H|$, we obtain
the proof of Theorem~\ref{thm:hardness-nonbipartite}.

Next, we show how to construct $H_i$ which satisfy the properties
above. For each number $j \in [1,p]$, define $A_j = \{s_\sigma \mid
\sigma \in \{0,1\}^p, \sigma(j) = 0\}, B_j = \{s_\sigma \mid \sigma
\in \{0,1\}^p, \sigma(j) = 1\}$. Let $H_{2j-1}$ be the subgraph of $H$
with vertex set $V_H$ and edge set containing edges of $H$ with head
in $B_j$ and tail in $A_j$. $H_{2j}$ be the subgraph of $H$ with
vertex set $V_H$ and edge set containing edges of $H$ with head in
$A_j$ and tail in $B_j$.
\begin{eqnarray*}
V_{H_{2j-1}}&=& V_{H_{2j}}  =  V_H\\
E_{H_{2j-1}} & = & \{(s_{\sigma_1},s_{\sigma_2}) \in E_H \mid s_{\sigma_1} \in A_j, s_{\sigma_2} \in B_j\}\\
E_{H_{2j}} & = & \{(s_{\sigma_1},s_{\sigma_2}) \in E_H \mid s_{\sigma_1} \in B_j, s_{\sigma_2} \in A_j\}\\
\end{eqnarray*}

By construction, it is clear that $H_{2j-1}, H_{2j}$ are bi-partite. 
\begin{lemma}
$H_i$ as defined above satisfy the following properties:
\begin{itemize}
\item $E_H = \cup_{i=1}^r E_{H_i}$.
\item $\alpha_H \leq \sum_{i=1}^r \alpha_{H_i}$.
\item For $i \in [1,r], \gamma_H \geq \gamma_{H_i}$.
\end{itemize}
\end{lemma}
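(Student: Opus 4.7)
The plan is to verify the three properties of the decomposition $H = H_1 \cup \dots \cup H_r$ in turn, where property 1 is essentially combinatorial, property 2 follows from a ``union of cuts'' argument, and property 3 is the main technical content and requires a small gadget. For property 1, I would observe that any edge $(s_{\sigma_1}, s_{\sigma_2}) \in E_H$ satisfies $\sigma_1 \neq \sigma_2$ (self-loops are trivial), so there exists some coordinate $j$ where $\sigma_1[j] \neq \sigma_2[j]$. If $\sigma_1[j] = 0$ and $\sigma_2[j] = 1$, then $s_{\sigma_1} \in A_j$, $s_{\sigma_2} \in B_j$, and the edge lies in $H_{2j-1}$; if the bits are reversed, the edge lies in $H_{2j}$. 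This gives $E_H \subseteq \bigcup_i E_{H_i}$, and the reverse inclusion is by construction.

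For property 2, I would take an arbitrary instance $(G,H)$ and let $\bx^*$ be an optimum fractional solution to \DistLP of value $LP(G,H)$. Since $E_{H_i} \subseteq E_H$, $\bx^*$ is also feasible for the LP on $(G,H_i)$, so $LP(G,H_i) \leq LP(G,H)$. By definition of $\alpha_{H_i}$, each instance $(G,H_i)$ admits an integral cut $E^*_i$ of cost at most $\alpha_{H_i}\cdot LP(G,H_i) \leq \alpha_{H_i}\cdot LP(G,H)$. Setting $E^* = \bigcup_i E^*_i$, property 1 guarantees that every demand edge of $H$ lies in some $H_i$ and is therefore separated by $E^*_i \subseteq E^*$. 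Thus $OPT(G,H) \leq \sum_i |E^*_i| \leq \bigl(\sum_i \alpha_{H_i}\bigr) LP(G,H)$. Taking the supremum over $G$ gives $\alpha_H \leq \sum_i \alpha_{H_i}$.

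For property 3, the main step is to build an approximation-preserving reduction from \DMulCHi to \DMulCH by producing a supply graph $G'$ in which the ``extra'' demand edges of $E_H\setminus E_{H_i}$ are trivially disconnected. Given an instance $(G,H_i)$ where the terminal $s_\sigma$ sits at some $v_\sigma \in V_G$, I would build $G'$ by adding, for each $\sigma \in \{0,1\}^p$, a fresh vertex $u_\sigma$ which will host the $H$-terminal $s_\sigma$. Consider $i = 2j-1$ (the case $i=2j$ is symmetric): if $\sigma \in A_j$, add an infinite-weight arc $(u_\sigma, v_\sigma)$; if $\sigma \in B_j$, add an infinite-weight arc $(v_\sigma, u_\sigma)$. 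Then $u_\sigma$ for $\sigma \in A_j$ has only outgoing edges to the rest of $G'$ and $u_\sigma$ for $\sigma \in B_j$ has only incoming edges, so the only demand pairs $(u_{\sigma_1},u_{\sigma_2}) \in E_H$ for which a path from $u_{\sigma_1}$ to $u_{\sigma_2}$ can exist in $G'$ are those with $\sigma_1 \in A_j, \sigma_2 \in B_j$, i.e., exactly the edges of $H_{2j-1}$; the other three cases give no such path and are trivially satisfied. Since the infinite-weight edges can never be cut, cuts in $(G',H)$ are in bijection with cuts in $(G,H_i)$ of the same weight, so $OPT(G',H) = OPT(G,H_i)$ and any $c$-approximation algorithm for \DMulCH yields a $c$-approximation for \DMulCHi, giving $\gamma_H \geq \gamma_{H_i}$.

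The main obstacle is the correctness of the gadget in property 3: I need to argue carefully that after attaching each $u_\sigma$ to $G$ with a single directed infinite-weight arc, the three non-$H_i$ orientation cases (both endpoints in $A_j$, both in $B_j$, or $(B_j, A_j)$) each yield no directed path from $u_{\sigma_1}$ to $u_{\sigma_2}$ in $G'$. This is where the bipartite structure of $H_{2j-1}$ and the consistent choice of edge direction across all of $A_j$ and $B_j$ simultaneously is essential; the rest of the argument is routine bookkeeping.
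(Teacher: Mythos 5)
Your proposal is correct and follows essentially the same route as the paper: property 1 via the coordinate where $\sigma_1$ and $\sigma_2$ differ, property 2 by noting the optimal fractional solution for $(G,H)$ is feasible for each $(G,H_i)$ and taking a union of the per-subgraph integral cuts, and property 3 via the same gadget — attaching to each original terminal $v_\sigma$ a new pendant vertex $u_\sigma$ (the paper's $s_\sigma'$) with a single infinite-weight arc oriented outward for $\sigma \in A_j$ and inward for $\sigma \in B_j$, so that of the four orientation classes only the $A_j \to B_j$ demand pairs of $H_{2j-1}$ can ever be connected. Your case analysis making this last point explicit is sound and matches the paper's intent.
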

\begin{proof}
\begin{itemize}
\item Let $e=(s_{\sigma_1},s_{\sigma_2}) \in E_H$. Since, there are no self-loops in $H$, there exists $j \in [1,p]$ such that either $\sigma_1[j] = 1, \sigma_2[j] = 0$ or $\sigma_1[j] = 0, \sigma_2[j] = 1$. In the first case, $e \in E_{H_{2j-1}}$ and in the second case $e \in E_{H_{2j}}$.
\item Given a \DMulCH instance $(G,H)$, idea is to solve $(G,H_i)$ for
  $i \in [1,p]$. Let $I = (G,H)$ be a \DMulCH instance.  Let $\bx$ be
  the optimal solution to \DistLP on $I$. Let $I_i = (G,H_i)$ be the
  instance with the same supply graph $G$ but demand graph $H_i$.
  It is easy to see that $\bx$ is a feasible fractional solution to
  $I_i$ since $H_i$ is a subgraph of $H$. Since the worst-case
  integrality gap for \DMulCHi is $\alpha_{H_i}$, there is a set $E'_i
  \subseteq E_G$ such that $w(E'_i) \le \alpha_{H_i} w(\bx)$ and
  $G-E'_i$ disconnects all demand pairs in $H_i$. Clearly $\cup_i
  E'_i$ is a feasible integral solution to $(G,H)$ since $H = \cup_i
  H_i$. The cost of $\cup_i E'_i$ is at most $(\sum_i \alpha_{H_i})
  w(\bx)$. Since $(G,H)$ was an arbitrary instance of \DMulCH, this
  proves that $\alpha_H \le \sum_i \alpha_{H_i}$.

\item We prove that there is an approximation preserving reduction
  from \DMulCHi to \DMulCH which in turn proves that $\gamma_H \geq
  \gamma_{H_i}$. Assume that $i = 2j-1$ (case when $i = 2j$ is
  similar). Let $(G,H_i)$ be a \DMulCHi instance. $G'$ is defined as
  follows:
\begin{itemize}
\item $V_{G'} = V_G \cup A_j' \cup B_j'$ where $A_j'= \{s_{\sigma}' \mid s_{\sigma} \in A_j\},B_j'= \{s_{\sigma}' \mid s_{\sigma} \in B_j\}$. 
\item $G'$ contains all the edges of $G$ and an infinite edge from $s_{\sigma}'$ to $s_{\sigma}$ for every $s_{\sigma} \in A_j$ and infinite weight edge from $s_{\sigma}$ to $s_{\sigma}'$ for every $s_{\sigma} \in B_j$. 
\end{itemize}

Let $H'$ be a demand graph with vertex $s_\sigma$ renamed as
$s_\sigma'$. Then, $(G',H')$ is a \DMulCH instance. Note that for $s_{\sigma} \in A_j$, $s_{\sigma}'$ in $G'$ has no incoming edge and for $s_{\sigma} \in B_j$, $s_{\sigma}'$ in $G'$ has no outgoing edge. Hence, for \DMulC instance $(G',H')$, we only need to seperate $(s_{\sigma_1}',s_{\sigma_2}')$ if $s_{\sigma_1} \in A_j, s_{\sigma_2} \in B_j$. Hence, \DMulC instances $(G,H_i)$ and $(G',H')$ are equivalent.
\end{itemize}
\end{proof}

\section{Approximating \DMulC with restricted Demand graphs}\label{sec:appendix_dirapproximation}
In this section we prove the following restated theorem.
\dirapproximation*

Let $G=(V,E)$ and $H=(V,F)$ be the supply and demand graph for a given
instance of \DMulC.  We prove this theorem by providing a generic
randomized rounding algorithm that given a fractional solution $\bx$
to LP~\ref{fig:mulc_lp} for an instance $(G,H)$ of \DMulC returns a
feasible solution. This algorithm is inspired by our recent rounding
scheme for \DMCut \cite{ChekuriM16}.  Let $\alpha_e x_e$ be the
probability that a given edge $e$ in the supply graph $G$ is cut by
the algorithm. If $\alpha_e \le (k-1)$ for all $e \in E$ then clearly
the expected cost of the feasible solution is at most $(k-1)$ and we
are done. However, if there is some edge $e$ such that $\alpha_e >
(k-1)$ we show that $H$ contains an induced $k$-matching-extension.

Let $\bx$ be a feasible solution to LP~\ref{fig:mulc_lp}. For $u,v \in
V$, define $d(u,v)$ to be the shortest path length in $G$ from vertex
$u$ to vertex $v$ using lengths $x_e$. We also define another
parameter $d_1(u,v)$ for each pair of vertices $u,v \in V$. $d_1(u,v)$
is the minimum non-negative number such that if we add an edge $uv$ in
$G$ with $x_{uv} = d_1(u,v)$ then $u$ is still seperated from all the
vertices it has to be seperated from. Formally, for $u,v \in V$,
$d_1(u,v) := \max(0,1-\min_{v' \in V, uv'\in F} d(v,v'))$.  If for
some vertex $u$, there is no demand edge leaving $u$ in $F$ then we
define $d_1(u,v) = 0$ for all $v \in V$. The following properties of
$d_1$ are easy to verify.

\begin{lemma}\label{lem:d1_property}
$d_1(u,v)$ satisfies the following properties:
\begin{itemize}
\item $\forall u\in V, d_1(u,u)=0$
\item $\forall (u,v) \in F, v' \in V$, $d_1(u,v') + d(v',v) \geq 1$
\item If $d_1(u,v) \neq 0$, then there exists $(u,v') \in F$ such that 
$d_1(u,v)+d(v,v')=1$
\item $\forall u \in V, (a,b) \in E, d_1(u,b) - d_1(u,a) \leq x_{ab}$
\end{itemize}
\end{lemma}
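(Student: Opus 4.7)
The plan is to verify each of the four properties directly from the definition
\[
d_1(u,v) \;=\; \max\!\Bigl(0,\; 1 - \min_{v' \in V,\, uv' \in F} d(v,v')\Bigr),
\]
using only LP feasibility of $\bx$ and the standard properties of the shortest-path metric $d$ induced by the nonnegative edge lengths $x_e$. None of the four facts looks subtle; the main care needed is to handle the $\max(0,\cdot)$ truncation cleanly in items 2 and 4, and to isolate the argmin in item 3.

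For property 1, I would observe that LP feasibility of $\bx$ implies $d(u,v') \ge 1$ for every demand edge $(u,v') \in F$, because the LP constraint $\sum_{e \in p} x_e \ge 1$ applies to every $u$-$v'$ path in $G$. Hence $\min_{v': uv' \in F} d(u,v') \ge 1$, so the argument of the max is nonpositive and $d_1(u,u)=0$. For property 3, if $d_1(u,v) \neq 0$ then the $\max$ is not achieved by $0$, so $d_1(u,v) = 1 - \min_{w: uw \in F} d(v,w)$; taking $v'$ to be an argmin witness gives $d_1(u,v) + d(v,v') = 1$.

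For property 2, fix a demand edge $(u,v) \in F$ and any $v' \in V$. Since $v$ is one admissible choice in the minimum appearing in the definition of $d_1(u,v')$, we have
\[
d_1(u,v') \;\ge\; 1 - \min_{w: uw \in F} d(v',w) \;\ge\; 1 - d(v',v),
\]
and even if $1 - d(v',v)$ is negative, the bound still holds because $d_1 \ge 0$ by the outer $\max$. Rearranging gives the inequality.

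Property 4 is the Lipschitz-type property and is the only step where one has to think a bit. I would split on whether $d_1(u,b) = 0$ (trivial, since $d_1(u,a) \ge 0 \le x_{ab}$) or $d_1(u,b) > 0$. In the latter case, property 3 supplies a demand edge $(u,v') \in F$ with $d_1(u,b) = 1 - d(b,v')$. Using the triangle inequality for $d$ along the supply edge $(a,b)$ gives $d(a,v') \le x_{ab} + d(b,v')$, and hence
\[
d_1(u,a) \;\ge\; \max\!\bigl(0,\; 1 - d(a,v')\bigr) \;\ge\; 1 - x_{ab} - d(b,v') \;=\; d_1(u,b) - x_{ab},
\]
which rearranges to the desired inequality. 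The only tiny subtlety is that when $1 - d(a,v') < 0$ the inequality still holds because $d_1(u,a) \ge 0 \ge d_1(u,b) - x_{ab}$, which follows from $d_1(u,b) = 1 - d(b,v') \le 1 - (d(a,v') - x_{ab}) \le x_{ab}$ in that regime. I expect no genuine obstacle; the lemma is a structural warm-up used to justify the rounding analysis that follows.
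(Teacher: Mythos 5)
Your proof is correct and follows essentially the same route as the paper's (commented-out) argument: properties 1--3 by direct unwinding of the definition, and property 4 by taking the argmin demand terminal $v'$ witnessing $d_1(u,b)$, using that $v'$ is also a candidate in the minimum defining $d_1(u,a)$, and then applying the directed triangle inequality $d(a,v') \le x_{ab}+d(b,v')$. You are somewhat more careful than the paper's sketch in tracking the $\max(0,\cdot)$ truncation (the paper implicitly assumes both $d_1(u,a),d_1(u,b)>0$), but this is a minor refinement of the same argument rather than a different approach.
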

\iffalse
\begin{proof}
  First three properties can be proved by simply writing the
  definition of $d_1(u,v)$. To prove $d_1(u,b) -d_1(u,a) \leq x_{ab}$,
  consider $v', v''$ such that $uv', uv'' \in E_H$ and $d(a,v')+
  d_1(u,a) = d(b,v'') + d_1(u,b) = 1$.  From definition of $d_1(u,a)$
  and $d_1(u,b)$ we have $d(a,v') \leq d(a,v'')$ and $d(b,v'') \leq
  d(a,v'')$.  \vspace{-2mm}
\begin{align*}
  d_1(u,b) - d_1(u,a) & = d(a,v') - d(b,v'') \\
  & \leq d(a,v'') - d(b,v'')\\
  & \leq x_{ab}
\end{align*}
Last inequality holds as $d(a,v'')$ is the shortest path distance from $a$ to $v''$ and hence, must satisfy the triangle inequality.
\end{proof}
\fi

Next, we do a simple ball cut rounding around all the vertices as per
$d_1(u,v)$. We pick a number $\theta \in (0,1)$ uniformly at
random. For all $u \in V$, we consider $\theta$ radius ball around $u$
for all $u \in V$; $B_u = \{v \in V \mid d_1(u,v) \leq \theta\}$. And
then cut all the edges leaving the set $B_u$; $\delta^+(B_u) = \{vv'
\in E_G \mid v \in B_u, v' \not \in B_u\}$. Note that it is crucial
that the same $\theta$ is used for all $u$. A formal description
of the algorithm is given in Algorithm~\ref{alg:directed_cut_rounding_scheme}.

\begin{algorithm}
	\caption{Rounding for \DMulC}
	\label{alg:directed_cut_rounding_scheme}
	\begin{algorithmic}[1]
		\STATE Given a feasible solution $\bx$ to LP~\ref{fig:mulc_lp}
		\STATE For all $u,v \in V$, compute $d(u,v)$= shortest path length from $u$ to $v$ according to lengths $x_e$
		\STATE For all $u,v \in V$, compute $d_1(u,v) =\max(0,1-\min_{v' \in V, uv'\in E_H} d(v,v'))$ 
		\STATE Pick $\theta \in (0,1)$ uniformly at random
		\STATE $B_u = \{v \in V \mid d_1(u,v) \leq \theta\}$
                \STATE $E' = \cup_{u \in V} \delta^+(B_u)$
		\STATE Return $E'$
	\end{algorithmic}
\end{algorithm}

It is easy to argue that $E'$ returned by the algorithm is a feasible
\DMulC solution. By Lemma~\ref{lem:d1_property} for all $uv \in F$,
$d_1(u,v) \geq 1$ and since $\theta <1$, we have $u \in B_u, v \not
\in B_u$. We remove all the edges going out of the set $B_u$ and
hence, cut all the paths from $u$ to $v$. We only need to prove that
probability of an edge $e$ being cut by the algorithm is at most
$(k-1)x_e$. To prove that, we need the following lemma which shows
that for any vertex $v$, number of $u_i$ with different values of
$d_1(u_i,v)$ is at most $k-1$.

\begin{lemma}\label{lem:different_distances}
  If for some $v \in V$ there exists $u_1,\dots,u_k$ such that $0\neq
  d_1(u_i,v) \neq d_1(u_j,v)$ for all $i \neq j$, then the demand
  graph $H$ contains an induced $k$-matching extension.
\end{lemma}
\begin{proof}
  Rename the vertices $u_1,\dots,u_k$ such that $d_1(u_1,v) >\dots >
  d(u_k,v)>0$. By Lemma~\ref{lem:d1_property}, there exists
  $v_1',\dots,v_k'$ such that $u_i v_i' \in F$ and $d_1(u_i,v) = 1 -
  d(v,v_i')$. Consider the subgraph of $H$ induced by the vertices
  $s_1,\dots,s_k,t_1,\dots,t_k$ where $s_i = u_i, t_i = v_i'$. Edge
  $s_it_i\in F$ as $u_iv_i' \in F$. By construction $s_1,\ldots,s_k$
  are distinct. We also argue that $t_1,\ldots,t_k$ are distinct.
  Suppose $t_i = t_j$ for $i < j$. We have $d_1(u_i,v'_i) = d_1(u_i,v)
  + d(v,v'_i) = 1$ and $u_iv'_i \in F$. Since $d_1(u_j,v) < d(u_i,v)$
  we have $d_1(u_j,v'_j) = d_1(u_j,v'_i) < 1$, however $u_jv'_j \in F$
  which is a contradiction.   

  For $i>j$, $d_1(s_i,v) + d(v,t_j) = d_1(u_i,v) + 1-d_1(u_j,v) <1$. 
  By lemma~\ref{lem:d1_property}, $s_it_j \not \in F$. Thus we have
  shown that the graph induced on  $s_1,\dots,s_k,t_1,\dots,t_k$ proves
  that $H$ contains an induced $k$-matching extension.
\end{proof}

\begin{proofof}{Theorem~\ref{thm:dir_approximation}} 
  We start by solving LP~\ref{fig:mulc_lp} and then perform the
  rounding scheme as per
  Algorithm~\ref{alg:directed_cut_rounding_scheme}. As argued above,
  for all $uv \in E_H$, $u \in B_u, v \not \in B_u$ and we cut the
  edges going out of $B_u$. Hence, there is no path from $u$ to $v$ in
  $G-E'$ and $E'$ is a feasible \DMulC solution.

  We claim that $\Pr[e \in E'] \leq (k-1)x_e$ for all $e \in
  E_G$. Once we have this property, by linearity of expectation, the
  expected cost of $E'$ can be bounded by $(k-1)$ times the LP cost:
  $\Ex[\sum_{e \in E'} w_e] \leq (k-1) \sum_{e \in E_G} w_e x_e$.

  Now we prove the preceding claim.  Consider an edge $e = (a,b) \in
  E$. Edge $e \in E'$ only if for some $u \in V$, $e \in
  \delta^+(B_u)$ and this holds only if $\theta \in
  [d_1(u,a),d_1(u,b))$. By Lemma~\ref{lem:d1_property}, $d_1(u,b) \leq
  d_1(u,a)+x_e$. Hence, $e \in \delta^+(B_u)$ , if $\theta
  \in [d_1(u,b)-x_{ab}, d_1(u,b))$. Denote this interval by $I_u(e)$.

  By Lemma~\ref{lem:different_distances}, there are at most $k-1$
  distinct elements in the set $\{d_1(u,b) \mid u \in V\}$. This
  implies that there are at most $k-1$ distinct intervals
  $I_u(e)$. In other words there exists $u_1,\dots,u_{r}, r\leq k-1$
  such that $\cup_{u \in V} I_u(e) = \cup_{i=1}^{r} I_{u_i}(e)$.
\begin{align*}
  \Pr[ab \in E'] &\leq \Pr[ \theta \in \cup_{u \in V} I_u(e)]\\
  & =  \Pr[\theta \in \cup_{i=1}^{r} I_{u_i}(e)]\\
  & \leq \sum_{i=1}^r \Pr[\theta \in I_{u_i}(e)]\\
  & \leq r\cdot x_e\\
  & \leq (k-1)x_e.
\end{align*}

Penultimate inequality follows from the fact that $I_{u_i}(e)$ has
length $x_e$ and $\theta$ is chosen uniformly at random from $[0,1)$.
\end{proofof}

\end{document}